%
%
%
%

\documentclass[runningheads,a4paper,draft]{llncs}

\usepackage{amssymb,amsmath,mathtools}
\usepackage{alltt}
\setcounter{tocdepth}{3}
\usepackage{graphicx}
\usepackage[final,colorlinks=true,linktocpage=true,hyperfootnotes]{hyperref}
\usepackage{infer}
\usepackage{marvosym}
\usepackage{nicefrac}
\usepackage{stmaryrd}
\usepackage{tikz}
\usepackage[disable]{todonotes}
\usepackage{pgfplots}
\usepackage{url}
\usepackage{xspace}

\usetikzlibrary{arrows,decorations.pathmorphing,positioning,fit,trees,shapes,shadows,automata,calc}
\DeclareMathAlphabet{\mathpzc}{OT1}{pzc}{m}{it}

\urldef{\mailsa}\path|{benjamin.kaminski,katoen,matheja}@cs.rwth-aachen.de|
\newcommand{\keywords}[1]{\par\addvspace\baselineskip
\noindent\keywordname\enspace\ignorespaces#1}

\allowdisplaybreaks
\begin{document}

\hyphenation{va-ri-an-ces}


\newcommand{\SKIP}{\textnormal{\texttt{skip}}}
\newcommand{\EMPTY}{\textnormal{\texttt{empty}}}
\newcommand{\ABORT}{\textnormal{\texttt{diverge}}}
\newcommand{\HALT}{\textnormal{\texttt{halt}}}
\newcommand{\ASSIGN}[2]{\ensuremath{{#1} \coloneqq {#2}}}
\newcommand{\observesymbol}{\ensuremath{\textnormal{\texttt{observe}}}}
\newcommand{\OBSERVE}[1]{\ensuremath{\observesymbol~{#1}}}
\newcommand{\IFSYM}{\textnormal{\texttt{if}}}
\newcommand{\ELSESYM}{\textnormal{\texttt{else}}}
\newcommand{\ITE}[3]{\ensuremath{\IFSYM \, \left( {#1} \right) \, \left\{ {#2} \right\} \, \ELSESYM\, \left\{ {#3} \right\}}}
\newcommand{\PCHOICE}[3]{\ensuremath{\left\{ {#1} \right\} \mathrel{\left[ {#2} \right]} \left\{ {#3} \right\}}}
\newcommand{\COMPOSE}[2]{{#1}\textnormal{\texttt{;}}\allowbreak~ {#2}}
\newcommand{\WHILESYM}{\textnormal{\texttt{while}}}
\newcommand{\WHILE}[2]{\ensuremath{\WHILESYM \, \left( {#1} \right) \, \left\{ {#2} \right\}}}
\newcommand{\BOUNDEDWHILE}[3]{\ensuremath{\WHILESYM^{< #1} \, \left( {#2} \right) \, \left\{ {#3} \right\}}}


\newcommand{\imag}{\boldsymbol{i}}


\newcommand{\To}{\rightarrow}

\newcommand{\wpsymbol}{{\sf wp}}
\renewcommand{\wp}[2]{\wpsymbol \left[ {#1} \right] \left( {#2}\right)}
\newcommand{\wpk}[3]{\wpsymbol^{#2} \left[ {#1} \right] \left( {#3}\right)}

\newcommand{\wlpsymbol}{{\sf wlp}}
\newcommand{\wlp}[2]{\wlpsymbol \left[ {#1} \right] \left( {#2}\right)}
\newcommand{\wlpk}[3]{\wlpsymbol^{#2} \left[ {#1} \right] \left( {#3}\right)}

\newcommand{\ertsymbol}{{\sf ert}}
\newcommand{\ert}[2]{\ertsymbol \left[ {#1} \right] \left( {#2}\right)}

\newcommand{\rtsymbol}{{\sf rt}}
\newcommand{\rt}[2]{\rtsymbol \left[ {#1} \right] \left( {#2}\right)}
\newcommand{\rtk}[3]{\rtsymbol^{#3} \left[ {#1} \right] \left( {#2}\right)}

\newcommand{\subst}[2]{\mathpunct{\left[ {#1} / {#2} \right]}}

\newcommand{\expsymbol}{\mathnormal{\mathsf{E}}}
\newcommand{\Exp}[3]{\expsymbol_{\llbracket #1 \rrbracket (#2) }\left( {#3} \right)}
\newcommand{\cExp}[2]{\expsymbol\left( {#1} \mathrel{|} {#2}\right)}

\newcommand{\covsymbol}{\mathnormal{\mathsf{Cov}}}
\newcommand{\Cov}[4]{\covsymbol_{\llbracket #1 \rrbracket (#2) }\left( {#3},\, {#4} \right)}
\newcommand{\cCov}[3]{\covsymbol\left( {#1},\, {#2} \mathrel{|} {#3} \right)}

\newcommand{\prsymbol}{\mathnormal{\mathsf{Pr}}}
\renewcommand{\Pr}[1]{\prsymbol\left( {#1} \right)}

\newcommand{\varsymbol}{\mathnormal{\mathsf{Var}}}
\newcommand{\Var}[3]{\varsymbol_{\llbracket #1 \rrbracket (#2) } \left( {#3} \right)}
\newcommand{\rtvarsymbol}{\mathnormal{\mathsf{RTVar}}}
\newcommand{\RtVar}[2]{\rtvarsymbol_{\llbracket #1 \rrbracket (#2)}}
\newcommand{\cVar}[2]{\varsymbol\left( {#1} \mathbin{|} {#2} \right)}

\newcommand{\PProgs}{\ensuremath{{\mathbb{P}}}}
\newcommand{\CharFuns}{\ensuremath{\Phi}}
\newcommand{\Rats}{\ensuremath{\mathbb{Q}}}
\newcommand{\PosRats}{\ensuremath{\mathbb{Q}_{\geq 0}}}
\newcommand{\Reals}{\ensuremath{\mathbb{R}}}
\newcommand{\PosReals}{\ensuremath{\mathbb{R}_{\geq 0}}}
\newcommand{\PosRealsInf}{\ensuremath{\mathbb{R}_{\geq 0}^{\infty}}}
\newcommand{\Complexs}{\ensuremath{\mathbb{C}}}
\newcommand{\ComplexsInf}{\ensuremath{\mathbb{C}^{\infty}}}
\newcommand{\Nats}{\ensuremath{\mathbb{N}}}
\newcommand{\E}{\ensuremath{\mathbb{E}}}
\newcommand{\BE}{\ensuremath{\mathbb{E}_{\leq 1}}}
\newcommand{\Etau}{\ensuremath{\mathbb{E}_{\tau}}}
\newcommand{\T}{\ensuremath{\mathbb{T}}}
\newcommand{\VPairs}{\ensuremath{\mathbb{V}}}
\newcommand{\Vars}{\ensuremath{\mathbb{V}}}
\newcommand{\States}{\ensuremath{\mathbb{S}}}
\newcommand{\Statestau}{\ensuremath{\mathbb{S}_\tau}}


\newcommand{\Problem}[1]{\mathcal{#1}}
\newcommand{\cProblem}[1]{\overline{\text{\footnotesize{$\Problem{#1}$}}}}

\newcommand{\HP}{\Problem{H}\xspace}
\newcommand{\cHP}{\cProblem{H}\xspace}
\newcommand{\UHP}{\Problem{UH}\xspace}
\newcommand{\cUHP}{\cProblem{UH}\xspace}
\newcommand{\COF}{\Problem{COF}\xspace}
\newcommand{\cCOF}{\cProblem{COF}\xspace}
\newcommand{\AST}{\ensuremath{\Problem{AST}}\xspace}
\newcommand{\cAST}{\ensuremath{\cProblem{AST}}\xspace}
\newcommand{\UAST}{\ensuremath{\Problem{U\hspace{-1pt}AST}}\xspace}
\newcommand{\cUAST}{\ensuremath{\cProblem{U\hspace{-1pt}AST}}\xspace}
\newcommand{\PAST}{\ensuremath{\Problem{P\hspace{-2pt}AST}}\xspace}
\newcommand{\cPAST}{\ensuremath{\cProblem{P\hspace{-2pt}AST}}\xspace}
\newcommand{\UPAST}{\ensuremath{\Problem{UP\hspace{-2pt}AST}}\xspace}
\newcommand{\cUPAST}{\ensuremath{\cProblem{UP\hspace{-2pt}AST}}\xspace}
\newcommand{\EXP}{\ensuremath{\Problem{E\hspace{-1pt}XP}}}
\newcommand{\LEXP}{\ensuremath{\Problem{LE\hspace{-1pt}XP}}}
\newcommand{\REXP}{\ensuremath{\Problem{RE\hspace{-1pt}XP}}}

\newcommand{\LCOVAR}{\ensuremath{\Problem{LCOV\hspace{-.5ex}AR}}}
\newcommand{\RCOVAR}{\ensuremath{\Problem{RCOV\hspace{-.5ex}AR}}}
\newcommand{\LVAR}{\ensuremath{\Problem{LV\hspace{-.5ex}AR}}}
\newcommand{\RVAR}{\ensuremath{\Problem{RV\hspace{-.5ex}AR}}}
\newcommand{\COVAR}{\ensuremath{\Problem{COV\hspace{-.5ex}AR}}}
\newcommand{\VAR}{\ensuremath{\Problem{V\hspace{-.5ex}AR}}}
\newcommand{\INFCOVAR}{\ensuremath{{}^{\infty}\Problem{COV\hspace{-.5ex}AR}}}
\newcommand{\INFVAR}{\ensuremath{{}^\infty\Problem{V\hspace{-.5ex}AR}}}
\newcommand{\LRTVAR}{\ensuremath{\Problem{L\hspace{-.25ex}R\hspace{-.25ex}TV\hspace{-.5ex}AR}}}
\newcommand{\RRTVAR}{\ensuremath{\Problem{R\hspace{-.25ex}R\hspace{-.25ex}TV\hspace{-.5ex}AR}}}
\newcommand{\RTVAR}{\ensuremath{\Problem{R\hspace{-.25ex}TV\hspace{-.5ex}AR}}}
\newcommand{\INFRTVAR}{\ensuremath{{}^\infty\Problem{R\hspace{-.25ex}TV\hspace{-.5ex}AR}}}

\newcommand{\leqm}{\mathrel{\:{\leq}_{\mathrm{m}}}}
\newcommand{\leqT}{\mathrel{\:{\leq}_{\mathrm{T}}}}
\newcommand{\equivm}{\mathrel{\:{\equiv}_{\mathrm{m}}}}
\newcommand{\equivT}{\mathrel{\:{\equiv}_{\mathrm{T}}}}

\newcommand{\LLL}[1]{\ensuremath{\mathrm{L}\big(#1\big)}\xspace}
\newcommand{\RRR}[1]{\ensuremath{\mathrm{R}\big(#1\big)}\xspace}

\newcommand{\lfp}{\ensuremath{\mathop{\mathsf{lfp}}}}
\newcommand{\gfp}{\ensuremath{\mathop{\mathsf{gfp}}}}

\newcommand{\MCSYMBOL}{\ensuremath{\mathpzc{M}}} 
\newcommand{\OPMC}[3]{\ensuremath{\MCSYMBOL_{#2}^{#3}\left[ #1 \right]}} 
\newcommand{\MCSTATES}{\ensuremath{\mathcal{S}}}
\newcommand{\MCTRANS}{\ensuremath{\mathbf{P}}}
\newcommand{\MCREW}{\ensuremath{rew}}
\newcommand{\TERM}{\ensuremath{\downarrow}}
\newcommand{\SINK}{\ensuremath{\mathrm{sink}}}
\newcommand{\OBSERVEFAIL}{\ensuremath{\mbox{\Lightning}}}
\newcommand{\MCSTATE}[1]{\ensuremath{\langle\, #1 \,\rangle}}

\newcommand{\MCINFRULE}[3]{\ensuremath{\infrule{#1}{#2}~\mathrm{\small \textsf{[#3]}}}}
\newcommand{\MCTRANSITION}[3]{\ensuremath{\MCSTATE{#1} \xrightarrow{#3} \MCSTATE{#2}}}
\newcommand{\MCPROB}[2]{\ensuremath{\mathrm{Pr}^{#1} \{ #2 \}}}
\newcommand{\MCREACH}[1]{\ensuremath{\Diamond #1}}

\newcommand{\ExpRew}[2]{\ensuremath{\mathrm{\textsf{ExpRew}}^{#1}\left(#2\right)}}
\newcommand{\CondExpRew}[3]{\ensuremath{\mathnormal{\mathsf{CExpRew}}^{#1}\left(#2 ~|~ #3\right)}}

\makeatletter
\def\moverlay{\mathpalette\mov@rlay}
\def\mov@rlay#1#2{\leavevmode\vtop{%
   \baselineskip\z@skip \lineskiplimit-\maxdimen
   \ialign{\hfil$\m@th#1##$\hfil\cr#2\crcr}}}
\newcommand{\charfusion}[3][\mathord]{
    #1{\ifx#1\mathop\vphantom{#2}\fi
        \mathpalette\mov@rlay{#2\cr#3}
      }
    \ifx#1\mathop\expandafter\displaylimits\fi}
\makeatother

\newcommand{\dotcup}{\charfusion[\mathrel]{\cup}{\raisebox{1pt}{$\boldsymbol{\cdot}$}}}

\mainmatter  

\title{Inferring Covariances for\\ Probabilistic Programs\thanks{This work was supported by the Excellence Initiative of the German federal and state government.}}

\titlerunning{Inferring Covariances for Probabilistic Programs}

%
%
\author{Benjamin Lucien Kaminski \and Joost-Pieter Katoen \and Christoph Matheja}
\authorrunning{Kaminski \and Katoen \and Matheja}

\institute{Software Modelling and Verification Group\\
RWTH Aachen University
\mailsa
}

%
%

\toctitle{Inferring Covariances for Probabilistic Programs}
\tocauthor{Kaminski, Katoen, and Matheja}
\maketitle

\begin{abstract}
We study weakest precondition reasoning about  the (co)va{\-}ri{\-}an{\-}ce of outcomes and the variance of run--times of probabilistic programs with conditioning.
For outcomes, we show that approximating (co)variances is computationally more difficult than approximating expected values.
In particular, we prove that computing both lower and upper bounds for (co)variances is $\Sigma_2^0$--complete.
As a consequence, neither lower nor upper bounds are computably enumerable.
We therefore present invariant--based techniques that \emph{do} enable enumeration of both upper and lower bounds, once appropriate invariants are found.
Finally, we extend this approach to reasoning about run--time variances.
\keywords{probabilistic programs $\:\cdot\:$ covariance $\:\cdot\:$ run--time}
\end{abstract}

\section{Introduction}
Probabilistic programs describe manipulations on uncertain data in a succinct way.
They are normal--looking programs describing how to obtain a distribution
over the outputs.
Using mostly standard programming language constructs, a probabilistic
program transforms a prior distribution into a posterior distribution.
Probabilistic programs provide a structured means to describe e.g.,
Bayesian networks (from AI), random encryption (from security), or
predator--prey models (from biology)~\cite{DBLP:conf/icse/GordonHNR14}
succinctly.

The posterior distribution of a program is mostly determined by
approximate means such as Markov Chain Monte Carlo (MCMC) sampling using
(variants of) the well--known Metropolis--Hasting approach.
This yields estimates for various measures of interest, such as expected
values, second moments, variances, covariances, and the like.
Such estimates typically come with weak guarantees in the form of
confidence intervals, asserting that with a certain confidence the
measure has a certain value.
In contrast to these weak guarantees, we aim at the \emph{exact}
inference of such measures and their bounds.
We hereby focus both on correctness and on run--time analysis of
probabilistic programs.
Put shortly, we are interested in obtaining \emph{quantitative}
statements about the possible outcomes of programs well as their run times.

This paper studies reasoning about the (co)variance of outcomes and the
variance of run--times of probabilistic programs.
Our programs support sampling from discrete probability distributions,
conditioning on the outcomes of experiments by
observations~\cite{DBLP:conf/icse/GordonHNR14}, and unbounded
while--loops\footnote{This contrasts MCMC--based analysis, as this
is restricted to bounded programs.}.
In the first part of the paper, we study the \emph{theoretical
complexity} of obtaining (co)variances on outcomes.
We show that obtaining bounds on (co)variances is computationally more
difficult than for expected values.
In particular, we prove that computing both upper \emph{and lower} bounds for
(co)variances of program outcomes is $\Sigma_2^0$--complete, thus \emph{not recursively enumerable}.
This contrasts the case for expected values where lower bounds \emph{are
recursively enumerable}, while only upper bounds are
$\Sigma^0_2$--complete~\cite{hardness}.
We also show that determining the precise values of (co)variances as
well as checking whether the (co)variance is infinite are both
$\Pi^0_2$--complete.
These results rule out analysis techniques based on finite
loop--unrollings as complete approaches for reasoning about the
covariances of outcomes of probabilistic programs.

In the second part of the paper, we therefore develop a weakest
precondition reasoning technique for obtaining covariances on outcomes
and variances on run--times.
As with deductive reasoning for ordinary sequential programs, the crux
is to find suitable loop--invariants.
We present a couple of invariant--based proof rules that provide a sound
and complete method to computably enumerate both upper and lower bounds on
covariances, once appropriate invariants are found.
We establish similar results for variances of the run--time of programs.
The results of this paper extend McIver and Morgan’s approach for
obtaining expectations of probabilistic programs~\cite{mcivers},
recent techniques for expected run--time analysis~\cite{esop16}, and complement results on termination
analysis~\cite{hardness,luis}.

Some proofs had to be omitted due to lack of space. They can be found in an extended version of this paper~\cite{technicalReport}.

\section{Preliminaries}
We study approximating the covariance of two random variables (ranging over program states) after successful termination of a probabilistic program on a given input state.
Our development builds upon the \emph{conditional probabilistic guarded command language (cpGCL)}~\cite{mfps}---an extension of Dijkstra's guarded command language \cite{dijkstra} endowed with probabilistic choice and conditioning constructs.
\begin{definition}[cpGCL~\cite{mfps}]
Let $\Vars$ be a finite \emph{set of program variables}\footnote{We restrict ourselves to a finite set of program variables for reasons of cleanness of the presentation. In principle, a countable set of program variables could be allowed.}.
Then the \emph{set of programs in cpGCL}, denoted $\PProgs$,  adheres to the grammar
\begin{align*}
\PProgs ~~\Coloneqq~~ 	& \SKIP \;\big|\; \EMPTY \;\big|\; \ABORT \;\big|\; \HALT \;\big|\; \ASSIGN{x}{E} \;\big|\; \COMPOSE{\PProgs}{\PProgs}	\;\big|\; \ITE{B}{\PProgs}{\PProgs} \\
 						& \;\big|\; \PCHOICE{\PProgs}{p}{\PProgs} \;\big|\; \WHILE{B}{\PProgs} \;\big|\; \OBSERVE{B} ~,
\end{align*}
where $x \in \Vars$, $E$ is an arithmetical expression over \Vars, $p \in [0,\, 1] \cap \mathbb{Q}$ is a rational probability, and $B$ is a Boolean expression over arithmetic expressions over $\Vars$. 

If a program $C$ contains neither a probabilistic choice $\PCHOICE{C'}{p}{C''}$ nor an $\observesymbol$--statement, we say that $C$ is \emph{non--probabilistic}.
\end{definition}
We briefly go over the meaning of the language constructs. Furthermore, we assign each statement an execution time in order to reason about the \emph{run--time} of programs.
$\SKIP$ ($\EMPTY$) does nothing---i.e.\ does not alter the current variable valuations---and consumes one (no) unit of time.
$\ABORT$ is syntactic sugar for the certainly non--terminating program $\WHILE{\texttt{true}}{\SKIP}$.
$\HALT$ consumes no unit of time and halts program execution immediately (even when encountered inside a loop).
It represents an \emph{improper} termination of the program.
$\ASSIGN x E$, $\COMPOSE{C_1}{C_2}$, $\ITE{B}{C_1}{C_2}$, and $\WHILE{B}{C'}$ are standard variable assignment, sequential composition, conditional choice, and while--loop constructs.
Assignments and guard evaluations 
consume one unit of time.

$\PCHOICE{C_1}{p}{C_2}$ is a probabilistic choice construct:
With probability $p$ the program $C_1$ is executed and with probability $1-p$ the program $C_2$ is executed.
Flipping the $p$--coin itself consumes one unit of time.
$\OBSERVE B$ is the conditioning construct.
Whenever in the execution of a program, an $\OBSERVE{B}$ is encountered, such that the current variable valuation satisfies the guard $B$, nothing happens except that one unit of time is being consumed.
If, however, an $\OBSERVE{B}$ is encountered along an execution trace that occurs with probability $q$, such that $B$ is \emph{not} satisfied, this trace is blocked as it is considered an \emph{undesired execution}.
The probabilities of the remaining execution traces are then conditioned to the fact that this undesired trace was not encountered, i.e.\ the probabilities of the remaining execution traces are renormalized by $1 - q$.
We refer to encountering such an undesired execution as an \emph{observation violation}.
For more details on conditioning and its semantics, see~\cite{mfps}.

Notice that we do not include non--deterministic choice constructs (as opposed to probabilistic choice construct) in our language, as we would then run into similar problems as in \cite[Section 6]{mfps} in the presence of conditioning.
\begin{example}[Conditioning inside a Loop]
\label{firstexample}
	Consider the following loop:
\begin{alltt}
while (\(c=1\))\{ \{\ASSIGN{c}{0}\} [0.5] \{\ASSIGN{x}{x+1}\}; \OBSERVE{c=1 \vee x\,\,\textnormal{is\,\,odd}} \}
\end{alltt}
	Without the $\observesymbol$--statement, this loop would generate a geometric distribution on $x$.
	By considering the $\observesymbol$--statement, this distribution is conditioned to the fact that after termination $x$ is odd. \hfill$\triangle$
\end{example}
Given a probabilistic program $C$, an initial state $\sigma$, and a random variable $f$ mapping program states to positive reals, we could now ask: 
What is the \emph{conditional} expected value of $f$ after proper termination of program $C$ on input $\sigma$, \emph{given that no observation was violated during the execution}?
An answer to this question is given by the conditional weakest pre--expectation calculus introduced in \cite{mfps}.
For summarizing this calculus, we first formally characterize the random variables $f$, commonly called \emph{expectations} \cite{mcivers}:
\begin{definition}[Expectations \cite{mcivers,mfps}]
Let $\States = \{ \sigma ~|~ \sigma \colon \Vars \To \Rats \}$, where $\Rats$ is the set of rational numbers, be the \emph{set of program states}.\footnote{Notice that $\States$ is countable and computably enumerable as $\Vars$ is finite.
}  
Then the \emph{set of expectations} 
is defined as $\E = \left\{f ~\middle|~ f\colon \States \To \PosRealsInf\right\}$, and the \emph{set of bounded expectations}
is defined as $\BE = \left\{f ~\middle|~ f\colon \States \To [0,\, 1]\right\}$.
A \emph{complete partial order $\preceq$} on both ${\E}$ and ${\BE}$ is given by $f_1 \preceq f_2$ iff $\forall\, \sigma \in \States \colon f_1(\sigma) \leq f_2(\sigma)$.
\end{definition}
The \emph{weakest (liberal) pre--expectation transformer $\wpsymbol\colon \PProgs \to (\E \to \E)$} ($\wlpsymbol\colon \PProgs \to (\BE \to \BE)$) is defined according to \autoref{table:transrules} (middle column).
\begin{table}[t!]
\begin{center}
	\scalebox{.956}{
	\begin{tabular}{l@{\hspace{2em}}l@{\hspace{2em}}l}
		\hline\\[-7pt]
		$\boldsymbol{C}$ 			& $\boldsymbol{\wp{C}{f}}$						& $\boldsymbol{\rt{C}{t}}$								\\[2pt]
		\hline\hline\\[-7pt]
		$\SKIP$ 		  			& $f$											& $t\subst{\tau}{\tau + 1}$								\\[4pt]
		$\EMPTY$				& $f$											& $t$												\\[4pt]
		$\ABORT$				& $\boldsymbol{0}$								& $\boldsymbol{\infty}$								\\[4pt]
		$\HALT$					& $\boldsymbol{0}$								& $\boldsymbol{0}$									\\[4pt]
		$\ASSIGN{x}{E}$			& $f\subst{x}{E}	$								& $t[x/E,\, \tau/\tau + 1]$								\\[4pt]
		$\COMPOSE{C_1}{C_2}$		& $\wpsymbol[C_1] \circ \wp{C_2}{f}$				& $\rtsymbol[C_1] \circ \rt{C_2}{t}$						\\[4pt]
		$\ITE{B}{C_1}{C_2}$			& $[B] \cdot \wp{C_1}{f}$							& $\big([B] \cdot \rt{C_1}{t} $							\\[4pt]
								& $\quad + [\neg B] \cdot \wp{C_2}{f}$				& $\quad + [\neg B] \cdot \rt{C_2}{t}\big)\subst{\tau}{\tau + 1}$	\\[4pt]
		$\PCHOICE{C_1}{p}{C_2}$	& $p \cdot \wp{C_1}{f}$							& $\big(p \cdot \rt{C_1}{t}$								\\[4pt]
								& $\quad + (1 - p) \cdot \wp{C_2}{f}$				& $\quad + (1 - p) \cdot \rt{C_2}{t}\big)\subst{\tau}{\tau + 1}$	\\[4pt]
		$\WHILE{B}{C'}$			& $\lfp X.~ [\neg B] \cdot f $						& $\lfp X.~ \big([\neg B] \cdot t$							\\[4pt]
								& $\quad + [B] \cdot \wp{C'}{X}$					& $\quad + [B] \cdot \rt{C'}{X}\big)\subst{\tau}{\tau + 1}$		\\[4pt]
		$\OBSERVE{B}$ 			& $[B] \cdot f$									& $[B] \cdot t\subst{\tau}{\tau+1}$						\\[6pt]
		\hline\\[-7pt]
		$\boldsymbol{C}$ 			& $\boldsymbol{\wlp{C}{f}}$						\\[2pt]
		\hline\hline\\[-7pt]
		$\ABORT$				& $\boldsymbol{1}$								\\[4pt]
		$\HALT$					& $\boldsymbol{1}$								\\[4pt]
		%
		%
		$\WHILE{B}{C'}$			& \multicolumn{2}{l}{$\gfp X.~ [\neg B] \cdot f + [B] \cdot \wlp{C'}{X}$}		\\[4pt]
		\hline
	\end{tabular}
	}
\end{center}
\caption{Definition of $\wpsymbol$, $\wlpsymbol$, and $\rtsymbol$. 
$\subst{x}{E}$ is a syntactic replacement with $f\subst{x}{E}(\sigma)\allowbreak =\allowbreak f(\sigma[x \mapsto \sigma(E)])$.
$[B]$ is the indicator function of $B$ with $[B](\sigma) = 1$ if $\sigma \models B$, and $[B](\sigma) = 0$ otherwise.
$F \circ H (f)$ is the functional composition of $F$ and $H$ applied to $f$. 
$\lfp X.~F(X)$ ($\gfp X.~F(X)$) is the least (greatest) fixed point of $F$ with respect to $\preceq$.
Definitions of $\wlpsymbol$ for the other language constructs
are as for $\wpsymbol$ and thus omitted.}
\label{table:transrules}
\vspace{-2\baselineskip}
\end{table}
%
%
By means of these two transformers, we can give an answer to the question posed above:
Namely, the fraction 
$\nicefrac{\wp{C}{f}(\sigma)}{\wlp{C}{\boldsymbol{1}}(\sigma)}$ is indeed the 
conditional expected value of $f$ after termination of $C$ on input $\sigma$, 
given that no observation was violated during $C$'s execution~\cite{mfps}.
Consequently, we define:
\begin{definition}[Conditional Expected Values \cite{mfps}]
\label{def:cond-ev}
Let $C \in \PProgs$, $\sigma \in \States$, and $f \in \E$.
Then the \emph{conditional expected value} of $f$ after executing $C$ on input $\sigma$ given that no observation was violated 
is defined as\footnote{We make use of the convention that $\frac 0 0 = 0$. 
Note that since our probabilistic choice is a discrete choice and our language does not support sampling from continuous distributions, the problematic case of ``$\frac 0 0$" can only occur if executing $C$ on input $\sigma$ will result in a violation of an observation with probability 1.
}
\belowdisplayskip=0pt
\begin{align*}
	\Exp{C}{\sigma}{f} ~=~ \frac{\wp{C}{f}(\sigma)}{\wlp{C}{\boldsymbol{1}}(\sigma)}~.
\end{align*}
\normalsize
\end{definition}
Having the definition for conditional expected values readily available, we can now turn towards defining the conditional (co)variance of a (two) random variables.
We simply translate the textbook definition to our setting:
\begin{definition}[Conditional (Co)variances] \label{def:covariances}
\label{def:cond-cov}
Let $C \in \PProgs$, $\sigma \in \States$, and $f,g \in \E$.
Then the \emph{conditional covariance} of the two random variables $f$ and $g$ after executing $C$ on input $\sigma$, given that no observation was violated 
is defined as
\begin{align*}
	\Cov{C}{\sigma}{f}{g} ~=~ \Exp{C}{\sigma}{f \cdot g} - \Exp{C}{\sigma}{f} \cdot \Exp{C}{\sigma}{g}~.
\end{align*}
The \emph{conditional variance} of the single random variable $f$ after executing $C$ on input $\sigma$, given that no observation was violated 
is defined as the conditional covariance of $f$ with itself, i.e.\ $\Var{C}{\sigma}{f} = \Cov{C}{\sigma}{f}{f}$.
\end{definition}

\section{Computational Hardness of Computing (Co)variances}
In this section, we will investigate the computational hardness of computing upper and lower bounds for conditional (co)variances.
The results will be stated in terms of levels in the arithmetical hie{\-}rar{\-}chy---a concept we first briefly recall:
\begin{definition}[The Arithmetical Hierarchy \textnormal{\textbf{\cite{kleeneNF,odifreddi1}}}]
\label{remarithmetic}
For every $n \in \Nats$, the \emph{class $\Sigma_n^0$} is defined as $\Sigma_n^0 = \big\{ \Problem A ~\big|~ \Problem A = \big\{  x ~\big|~ \exists y_1\, \forall y_2\, \exists y_3\, \cdots\, \exists/\forall y_n\colon~ ( x,\, y_1,\, y_2,\allowbreak\, y_3,\, \ldots,\allowbreak\, y_n) \in \Problem R\big\},\, \Problem R$ \textnormal{is a decidable relation}$\big\}$ and the \emph{class $\Pi_n^0$} is defined as $\Pi_n^0 = \big\{ \Problem A ~\big|~ \Problem A = \big\{ x ~\big|~ \forall y_1\, \exists y_2\, \forall y_3\, \cdots\, \exists/\forall y_n\colon~ ( x,\, y_1,\, y_2,\, y_3,\, \ldots,\, y_n) \in \Problem R \big\},\, \Problem R$ \textnormal{is} \textnormal{a} \textnormal{decidable} \textnormal{re}{\-}\textnormal{la}{\-}\textnormal{tion}$\big\}$.
Note that we require the values of variables to be drawn from a computable domain.
\emph{Multiple consecutive quantifiers} \emph{of the same type} can be contracted to \emph{one} quantifier of that type, so the number $n$ really refers to the number of necessary \emph{quantifier alternations}.
A set $\Problem A$ is called \emph{arithmetical}, iff $\Problem A \in \Gamma_n^0$, for $\Gamma \in \{\Sigma,\, \Pi\}$ and $n \in \mathbb N$.
The arithmetical sets form a strict hierarchy, i.e.\ $\Gamma_n^0 \subset \Gamma_{n+1}^0$ 
holds for $\Gamma \in \{\Sigma,\, \Pi\}$ and $n \geq 0$.
Furthermore, note that $\Sigma_0^0 = \Pi_0^0$ is exactly the class of the decidable sets and $\Sigma_1^0$ is exactly the class of the computably enumerable sets.
\end{definition}
Next, we recall the concept of many--one reducibility and completeness:
\begin{definition}[Many--One~ Reducibility~ and~ Completeness~ \textnormal{\textbf{\cite{odifreddi1,post44,davis}}}]
Let $\Problem A,\, \Problem B$ be arithmetical sets and let $X$ be some appropriate universe such that $\Problem A,\Problem B \subseteq X$.
$\Problem A$ is called \emph{many--one reducible} (or simply \emph{reducible}) to $\Problem B$, denoted $\Problem A \leqm \Problem B$, iff there exists a computable function $r\colon X \rightarrow X$, such that $\forall\, {x} \in X\colon \big( x \in \Problem A \Longleftrightarrow r( x) \in \Problem B\big)$.
If $r$ is a function such that $r$ reduces $\Problem A$ to $\Problem B$, we denote this by $r\colon \Problem A \leqm \Problem B$.
Note that $\leqm$ is transitive. 

$\Problem A$ is called \emph{$\Gamma_n^0$--complete}, for $\Gamma \in \{\Sigma,\, \Pi\}$, iff both $\Problem A \in \Gamma_n^0$ and $\Problem A$ is \emph{$\Gamma_n^0$--hard}, meaning $\Problem C \leqm \Problem A$, for any set $\Problem C \in \Gamma_n^0$. 
Note that if $\Problem B \in \Gamma_n^0$ and $\Problem A \leqm \Problem B$, then $\Problem A \in \Gamma_n^0$, too.
Furthermore, note that if $\Problem A$ is $\Gamma_n^0$--complete and $\Problem A \leqm \Problem B$, then $\Problem B$ is necessarily $\Gamma_n^0$--hard.
Lastly, note that if $\Problem A$ is $\Sigma_n^0$--complete, then $\Problem{A} \in \Sigma_n^0\setminus \Pi_n^0$. Analogously, if $\Problem A$ is $\Pi_n^0$--complete, then $\Problem{A} \in \Pi_n^0\setminus \Sigma_n^0$.
\end{definition}
In the following, we study the hardness of obtaining covariance approximations both from above and from below.
Furthermore, we are interested in exact values of covariances as well as in deciding whether the covariance is infinite.
In order to formally investigate the arithmetical complexity of these problems, we define four problem sets which relate to upper and lower bounds for covariances and to the question whether the covariance is infinite:
%
%
%
%
%
%
\begin{definition}[Approximation Problems for Covariances]
We define the following decision problems:
\begin{align*}
	(C,\, \sigma,\, f,\, g,\, q) \in \LCOVAR \quad&\Longleftrightarrow\quad \Cov{C}{\sigma}{f}{g} > q\\
	(C,\, \sigma,\, f,\, g,\, q) \in \RCOVAR \quad&\Longleftrightarrow\quad \Cov{C}{\sigma}{f}{g} < q\\
	(C,\, \sigma,\, f,\, g,\, q) \in \COVAR \quad&\Longleftrightarrow\quad \Cov{C}{\sigma}{f}{g} = q\\
	(C,\, \sigma,\, f,\, g) \in \INFCOVAR \quad&\Longleftrightarrow\quad \Cov{C}{\sigma}{f}{g} \in \{-\infty,\, +\infty\}
\end{align*}
where $C \in \PProgs$, $\sigma \in \States$, $f,g \in \E$, and $q \in \Rats$.\footnote{Note that, for obvious reasons, we restrict to \emph{computable} expectations $f, g$ only.}
\end{definition}
The first fact we establish about the hardness of computing upper and lower bounds of covariances is that this is at most $\Sigma_2^0$--hard, thus not harder than deciding whether a non--probabilistic program, i.e. a program without observations and probabilistic choice, does \emph{not} terminate on all inputs, or deciding whether a probabilistic program terminates after an expected finite number of steps~\cite{odifreddi2,hardness}.
Formally, we establish the following results:
\begin{lemma}
\label{insigma2}
${\LCOVAR}$ and ${\RCOVAR}$
are both in $\Sigma_2^0$.
\end{lemma}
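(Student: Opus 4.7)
The plan is to express ``$\Cov{C}{\sigma}{f}{g} > q$'' as an $\exists \forall$--formula with a decidable matrix. For each $k \in \Nats$, let $W_k, A_k, B_k$ denote the $k$-th Kleene iterates approximating $\wp{C}{f \cdot g}(\sigma)$, $\wp{C}{f}(\sigma)$, $\wp{C}{g}(\sigma)$ respectively, starting from $\boldsymbol{0}$, and let $N_k$ denote the $k$-th Kleene iterate approximating $\wlp{C}{\mathbf{1}}(\sigma)$ starting from $\boldsymbol{1}$ (concretely, bound every inner \WHILESYM{} by $k$ iterations and treat overflow as $\ABORT$). These rationals are uniformly computable in $k$; $W_k, A_k, B_k$ are non-decreasing with suprema $W, A, B$ equal to the corresponding $\wpsymbol$-values, while $N_k$ is non-increasing with infimum $N := \wlp{C}{\mathbf{1}}(\sigma)$.

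Abbreviating $L := W/N$ and $U := A \cdot B / N^2$ (under the convention $0/0 = 0$ from Definition~\ref{def:cond-ev}), one has $\Cov{C}{\sigma}{f}{g} = L - U$. The central observation is the \emph{separator equivalence}
\[
    L - U \;>\; q \quad\Longleftrightarrow\quad \exists\, r \in \Rats \colon\; \bigl(L > q + r\bigr) \;\wedge\; \bigl(U \leq r\bigr),
\]
where $(\Leftarrow)$ is immediate and $(\Rightarrow)$ follows by selecting any rational in the non-empty open interval $(U, L - q) \subseteq \Reals$. A brief case split in $N$ shows the equivalence still holds at the corner case $N = 0$ (where also $W = A = B = 0$, so $L = U = 0$ and the formula correctly reduces to $q < 0$) and when any of $L$ or $U$ is infinite.

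It remains to classify the two conjuncts as predicates in $r$. For ``$L > q + r$'': if $q + r < 0$ the inequality holds trivially ($L \geq 0$); otherwise it is equivalent to $W > (q+r) N$. Since $q + r \geq 0$ makes $(W_k - (q+r) N_k)_k$ non-decreasing with limit $W - (q+r) N$, this reduces to the $\Sigma_1^0$ predicate $\exists k \colon\; W_k > (q+r) N_k$. Analogously, ``$U \leq r$'' is trivially false when $r < 0$ (as $U \geq 0$), and for $r \geq 0$ it is equivalent to $AB \leq r N^2$; since $(A_k B_k - r N_k^2)_k$ is non-decreasing with limit $AB - r N^2$, this is the $\Pi_1^0$ predicate $\forall k \colon\; A_k B_k \leq r N_k^2$. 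Substituting into the separator equivalence produces an $\exists (r, k) \in \Rats \times \Nats\; \forall k' \in \Nats$ description with decidable matrix, placing $\LCOVAR$ in $\Sigma_2^0$. For $\RCOVAR$ the dual equivalence $L - U < q \iff \exists r \in \Rats \colon\; L \leq r \wedge U > r - q$ is treated symmetrically, combining a $\Pi_1^0$ with a $\Sigma_1^0$ conjunct.

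The principal obstacle is the \emph{subtraction} in the definition of covariance: the Kleene iterates provide one-sided, monotone \emph{lower} bounds on both $L$ and $U$ (and one-sided upper bounds on neither), so no single-quantifier certificate for $L - U > q$ is available. The rational separator $r$ decouples the two independently monotone trends and is precisely what constrains the complexity to $\Sigma_2^0$ rather than pushing it higher up the arithmetical hierarchy.
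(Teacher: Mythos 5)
Your proof is correct and rests on the same machinery as the paper's: the computably enumerable monotone approximations $\wpk{C}{k}{\cdot}(\sigma)$ (ascending to the true $\wpsymbol$--value) and $\wlpk{C}{k}{\boldsymbol{1}}(\sigma)$ (descending to the true $\wlpsymbol$--value) that the paper sets up immediately before its proof. The difference lies in how the two monotone trends are combined. The paper quantifies the two approximation indices directly against each other in the single formula
\begin{align*}
\exists\, k~\forall\, \ell\colon~ \frac{\wpk{C}{k}{f \cdot g}(\sigma)}{\wlpk{C}{k}{\boldsymbol{1}}(\sigma)} - \frac{\wpk{C}{\ell}{f}(\sigma)\cdot\wpk{C}{\ell}{g}(\sigma)}{\wlpk{C}{\ell}{\boldsymbol{1}}(\sigma)^2} ~>~ q~,
\end{align*}
whereas you interpose an explicit rational separator $r$ between the two terms and split the condition into a $\Sigma_1^0$ conjunct (first term exceeds $q+r$) and a $\Pi_1^0$ conjunct (second term is at most $r$). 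Your decomposition buys two things. First, after clearing denominators the matrix consists only of comparisons of computable rationals, so its decidability is manifest (the paper's fractions need the $\nicefrac{0}{0}=0$ convention inside the matrix). Second, and more substantively, the separator makes the boundary case $\Cov{C}{\sigma}{f}{g} = q$ airtight: read literally, the paper's $\LCOVAR$ formula can misclassify an instance in which the $k$--indexed term attains its supremum while the $\ell$--indexed term does not (take $f \cdot g \equiv 0$ over a geometric loop and $q$ equal to the resulting covariance) --- which is presumably why the paper's own $\RCOVAR$ formula in the appendix inserts an additional $\exists\, \delta > 0$, playing exactly the role of your $r$. Both routes land in $\Sigma_2^0$; yours is slightly longer but needs no such patch.
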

For proving \autoref{insigma2}, we revert to a fact established in \cite{hardness}: All lower bounds for expected outcomes are computably enumerable.
As a consequence, there exists a computable function $\wpk{C}{k}{f}(\sigma)$ that is ascending in $k$, such that for given $C\in\PProgs$, $\sigma \in \States$, and $f\in\E$, we have
\begin{align*}
	\forall\, k \in \Nats\colon \wpk{C}{k}{f}(\sigma) ~&\leq~ \wp{C}{f}(\sigma), \quad\text{and}\\[1ex]
	\quad \sup_{k \in \Nats} \wpk{C}{k}{f}(\sigma) ~&=~ \wp{C}{f}(\sigma)~.
\end{align*}
Intuitively, for every $k \in \Nats$ the function $\wpk{C}{k}{f}(\sigma)$ outputs a lower bound of $\wp{C}{f}(\sigma)$ in ascending order.

Similarly, lower bounds for $\wlp{C}{\boldsymbol{1}}(\sigma)$ can be enumerated.
To see this, note that $\wp{C}{\boldsymbol{1}}(\sigma) = 1$ for any $\observesymbol$--free program $C$ and any state $\sigma$.
$\wp{C}{\boldsymbol{1}}(\sigma)$ can only be decreased by violation of an observation.
Informally, 
\begin{align*}
	\wp{C}{\boldsymbol{1}}(\sigma) ~=~ 1 - \text{``Probability of $C$ 
violating an observation"}~.
\end{align*}
Lower bounds for the latter probability can be enumerated by successively exploring the computation tree of $C$ on input $\sigma$ and accumulating the probability mass of all execution traces that lead to a violation of an observation.
As a consequence, there must exist a computable function $\wlpk{C}{k}{\boldsymbol{1}}(\sigma)$ that is descending in $k$, such that for given $C\in\PProgs$ and $\sigma \in \States$, 
\begin{align*}
	\forall\, k \in \Nats\colon \wlp{C}{\boldsymbol{1}}(\sigma) ~&\leq~ \wlpk{C}{k}{\boldsymbol{1}}(\sigma), \quad\text{and}\\[1ex]
	\wlp{C}{\boldsymbol{1}}(\sigma) ~&=~ \inf_{k \in \Nats} \wlpk{C}{k}{\boldsymbol{1}}(\sigma)~.
\end{align*}
Since $\wpk{C}{k}{f}(\sigma)$ is ascending and $\wlpk{C}{k}{\boldsymbol{1}}(\sigma)$ is descending in $k$, the quotient $\nicefrac{\wpk{C}{k}{f}(\sigma)}{\wlpk{C}{k}{\boldsymbol{1}}(\sigma)}$ is ascending in $k$.
We can now prove \autoref{insigma2}:
\begin{proof}[\autoref{insigma2}]
For $\LCOVAR \in \Sigma_2^0$, consider $(C,\, \sigma,\, f,\, g,\, q) \in \LCOVAR$ iff
\begin{align*}
	 \exists\, k~ \forall\, \ell\colon~ \frac{\wpk{C}{k}{f \cdot g }(\sigma)}{\wlpk{C}{k}{\boldsymbol{1}}(\sigma)} - \frac{\wpk{C}{\ell}{f}(\sigma) \cdot \wpk{C}{\ell}{g}(\sigma)}{\wlpk{C}{\ell}{\boldsymbol{1}}(\sigma)^2} ~>~ q~.
\end{align*}
%
For the proof for $\RCOVAR$, 
see \cite{technicalReport} 
\qed
\end{proof}
Regarding the hardness of deciding whether a given rational is equal to the covariance and the hardness of deciding non--finiteness of covariances, we establish that this is at most $\Pi_2^0$--hard, thus not harder than deciding whether a non--probabilistic program terminates on all inputs, or deciding whether a probabilistic program does \emph{not} terminate after an expected finite number of steps~\cite{odifreddi2,hardness}.
Formally, we establish the following:
\begin{lemma}
\label{inpi2}
${\COVAR}$ and ${\INFCOVAR}$
are both in $\Pi_2^0$.
\end{lemma}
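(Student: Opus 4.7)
The plan is to obtain both membership claims as complements of $\Sigma_2^0$ predicates built out of the formulas supplied by Lemma~\ref{insigma2}.

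For $\COVAR$, I would use that $\Cov{C}{\sigma}{f}{g} = q$ holds iff neither $\Cov{C}{\sigma}{f}{g} > q$ nor $\Cov{C}{\sigma}{f}{g} < q$. In set-theoretic terms this reads $\COVAR = \overline{\LCOVAR} \cap \overline{\RCOVAR}$. By Lemma~\ref{insigma2} both $\LCOVAR$ and $\RCOVAR$ lie in $\Sigma_2^0$, so both complements lie in $\Pi_2^0$, and $\Pi_2^0$ is closed under finite intersection.

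For $\INFCOVAR$, the naive reading $\forall n \in \Nats\colon \Cov{C}{\sigma}{f}{g} > n \vee \Cov{C}{\sigma}{f}{g} < -n$ has shape $\forall n \exists k \forall \ell\colon R$ and only yields $\Pi_3^0$. I would instead use that a (defined) extended real is in $\{-\infty,+\infty\}$ precisely when it fails to be bounded by some finite natural, so that
\[
    (C,\sigma,f,g) \in \INFCOVAR \quad\Longleftrightarrow\quad \neg\, \exists N \in \Nats\colon -N < \Cov{C}{\sigma}{f}{g} < N.
\]
The inner event $-N < \Cov{C}{\sigma}{f}{g} < N$ is the conjunction of the $\LCOVAR$ instance at $-N$ and the $\RCOVAR$ instance at $N$. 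Writing these as $\exists k \forall \ell\colon \phi$ and $\exists k' \forall \ell'\colon \psi$, pairing the existential witnesses and the universal variables keeps the conjunction in $\Sigma_2^0$. Prefixing with $\exists N$ merges into the leading existential block and preserves $\Sigma_2^0$; negation then delivers the required $\Pi_2^0$ bound.

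The main obstacle is spotting this boundedness reformulation in the $\INFCOVAR$ step: the natural expansion over thresholds introduces one quantifier alternation too many, and commuting the outer $\forall n$ with the $\exists k$ hidden inside the $\Sigma_2^0$ formula of Lemma~\ref{insigma2} is not valid in general. Once the problem is rephrased as the complement of a single $\exists N$ of a $\Sigma_2^0$ formula, the standard closure properties of the arithmetical hierarchy finish the argument.
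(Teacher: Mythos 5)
Your proposal is correct and matches the paper's own proof in both parts: $\COVAR$ is handled identically as $\overline{\LCOVAR}\cap\overline{\RCOVAR}$ with the standard quantifier contraction, and for $\INFCOVAR$ the paper likewise characterizes finiteness as boundedness and negates a $\Sigma_2^0$ formula, writing it as $\neg\,\exists b_1\,\exists b_2\,\forall k$ applied directly to the approximant sequence rather than routing through $\exists N$ and the $\LCOVAR$/$\RCOVAR$ predicates. The difference is purely cosmetic; your observation that the naive threshold expansion would land in $\Pi_3^0$ is exactly the pitfall the paper's formulation also avoids.
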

So far we provided upper bounds for the computational hardness of solving approximation problems for covariances. 
We now show that these bounds are tight in the sense that these problems are \emph{complete} for their respective level of the arithmetical hierarchy.
For that we need a $\Sigma_2^0$-- and a $\Pi_2^0$--hard problem in order to perform the necessary reductions for proving the hardness results.
Adequate problems are the problem of almost--sure termination and its complement:
\begin{theorem}[Hardness of the Almost--Sure Termination Problem~\cite{hardness}]
Let $C \in \PProgs$ be $\observesymbol$--free. 
Then $C$ \emph{terminates almost--surely} on input $\sigma \in \States$, iff it does so with probability $1$.
The problem set $\AST$ is defined as $(C,\, \sigma) \in \AST$ iff $C$ terminates almost--surely on input $\sigma$.
We denote the complement of $\AST$ by $\cAST$.\footnote{Note that by ``complement" we mean not exactly a set theoretic complement but rather all pairs $(C,\, \sigma)$ such that $C$ does not terminate almost--surely on $\sigma$.}
$\AST$ is $\Pi_2^0$--complete and $\cAST$ is $\Sigma_2^0$--complete.
\end{theorem}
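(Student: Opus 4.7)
The plan is to show $\AST \in \Pi_2^0$ directly from the computable ascending approximation of $\wpsymbol$ already invoked in the proof of \autoref{insigma2}, and to obtain $\Pi_2^0$-hardness by reducing the classical totality problem for Turing machines to $\AST$. The $\Sigma_2^0$-completeness of $\cAST$ will then fall out immediately, since negation turns a $\Pi_2^0$ formula into a $\Sigma_2^0$ one and the complement of a $\Pi_n^0$-complete set is $\Sigma_n^0$-complete.

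For $\AST \in \Pi_2^0$: an $\observesymbol$-free program $C$ terminates almost-surely on $\sigma$ iff $\wp{C}{\boldsymbol{1}}(\sigma) = 1$. Using the computable ascending approximants $\wpk{C}{k}{\boldsymbol{1}}(\sigma)$ with $\sup_k \wpk{C}{k}{\boldsymbol{1}}(\sigma) = \wp{C}{\boldsymbol{1}}(\sigma)$, this is equivalent to
\[
\forall\, n \in \Nats\colon~ \exists\, k \in \Nats\colon~ \wpk{C}{k}{\boldsymbol{1}}(\sigma) \;\geq\; 1 - \tfrac{1}{n+1},
\]
whose matrix is decidable in $(C,\sigma,n,k)$. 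Hence $\AST \in \Pi_2^0$.

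For $\Pi_2^0$-hardness I would reduce the $\Pi_2^0$-complete totality problem $\mathrm{TOT} = \{M \mid \forall x\colon M \text{ halts on } x\}$ to $\AST$. Given a deterministic Turing machine $M$, I construct the $\observesymbol$-free cpGCL program $C_M$ that first samples $n \in \Nats$ from a geometric distribution (via a standard loop that increments $n$ on each tails of a fair coin and exits on heads) and then deterministically simulates $M$ on input $n$, looping forever iff the simulation diverges. The sampling phase terminates with probability $1$ and the simulation phase is non-probabilistic, so the only traces on which $C_M$ diverges are those whose sampled $n$ makes $M(n)$ loop. Since every $n \in \Nats$ is sampled with strictly positive probability, $C_M$ terminates almost-surely iff $M \in \mathrm{TOT}$. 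The map $M \mapsto C_M$ is clearly computable, yielding $\mathrm{TOT} \leqm \AST$ and therefore $\Pi_2^0$-hardness.

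The main obstacle, I expect, is not the high-level reduction but faithfully encoding a Turing-machine simulator inside cpGCL over the finite variable alphabet $\Vars$ with rational-valued states. This is where the unboundedness of $\Rats$ pays off: a single rational variable can store an arbitrarily long tape configuration via a standard pairing, and $M$'s transition relation becomes a $\WHILESYM$-loop whose body dispatches on the encoded head position and control state through nested $\IFSYM$-tests and arithmetic on the encoding. Once this encoding is in place, correctness of the reduction is routine, and the $\Sigma_2^0$-completeness of $\cAST$ follows from $(C,\sigma) \in \cAST \Longleftrightarrow (C,\sigma) \notin \AST$ together with the closure properties of the arithmetical hierarchy under complementation.
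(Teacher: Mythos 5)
Your proposal is correct, but note that the paper does not prove this theorem at all---it imports it verbatim from the cited work~\cite{hardness}, and your argument reconstructs essentially the standard proof given there: membership of $\AST$ in $\Pi_2^0$ via computably enumerable lower bounds on the termination probability, and $\Pi_2^0$--hardness by geometrically sampling an input and then deterministically simulating a machine on it, which reduces a totality problem (your $\mathrm{TOT}$, interchangeable with the paper's \UHP) to $\AST$. The transfer to $\cAST$ by complementation is likewise the intended route, so there is nothing to object to beyond the routine bookkeeping of encoding the simulator in cpGCL, which you correctly identify as standard.
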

By reduction from $\cAST$ we now establish the following hardness results:\!
\begin{lemma}
\label{sigma2hard}
${\LCOVAR}$ and ${\RCOVAR}$
are both $\Sigma_2^0$--hard.
\end{lemma}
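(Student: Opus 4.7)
The plan is to reduce the $\Sigma_2^0$-complete problem $\cAST$ both to $\LCOVAR$ and to $\RCOVAR$. Given an instance $(C, \sigma)$ of $\cAST$, the program $C$ is $\observesymbol$-free; write $p = \wp{C}{\boldsymbol{1}}(\sigma)$, so that $(C, \sigma) \in \cAST$ iff $p < 1$. I fix two program variables $x, y$ that do not occur in $C$ (in particular, $C$ neither reads nor writes them) and wrap $C$ in a probabilistic choice that sets $x, y$ to two opposing patterns; the idea is to arrange that the resulting covariance on $x$ and $y$ is a simple rational function of $p$ which crosses a known threshold exactly at $p = 1$.

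For $\RCOVAR$ I use the ``positively correlated'' wrapper
\[
	D_R ~=~ \PCHOICE{\COMPOSE{\ASSIGN{x}{1}}{\COMPOSE{\ASSIGN{y}{1}}{C}}}{\nicefrac{1}{2}}{\COMPOSE{\ASSIGN{x}{0}}{\ASSIGN{y}{0}}}\,.
\]
Since $D_R$ is $\observesymbol$-free, $\wlp{D_R}{\boldsymbol{1}} \equiv 1$, and by \autoref{def:cond-ev} conditional expectations collapse to $\wp{D_R}{\cdot}(\sigma)$. Unfolding $\wp$ via \autoref{table:transrules} and using that, by freshness of $x, y$, one has $\wp{C}{f}(\tau) = f(\tau) \cdot p$ for every post-expectation $f$ depending only on $x, y$ and every state $\tau$ agreeing with $\sigma$ outside $\{x, y\}$, I expect $\Exp{D_R}{\sigma}{x} = \Exp{D_R}{\sigma}{y} = \Exp{D_R}{\sigma}{x \cdot y} = p/2$, hence
\[
	\Cov{D_R}{\sigma}{x}{y} ~=~ \tfrac{p}{2} - \tfrac{p^{2}}{4} ~=~ \tfrac{p(2 - p)}{4}\,.
\]
Since $(1 - p)^2 > 0 \iff p \neq 1$, this is strictly less than $\nicefrac{1}{4}$ iff $p < 1$, giving $(C, \sigma) \in \cAST \iff (D_R, \sigma, x, y, \nicefrac{1}{4}) \in \RCOVAR$.

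For $\LCOVAR$ I use the ``anti-correlated'' variant
\[
	D_L ~=~ \PCHOICE{\COMPOSE{\ASSIGN{x}{1}}{\COMPOSE{\ASSIGN{y}{0}}{C}}}{\nicefrac{1}{2}}{\COMPOSE{\ASSIGN{x}{0}}{\ASSIGN{y}{1}}}\,,
\]
in which $x \cdot y = 0$ on every terminating trace; a parallel $\wp$-unfolding gives $\Exp{D_L}{\sigma}{x \cdot y} = 0$, $\Exp{D_L}{\sigma}{x} = p/2$, $\Exp{D_L}{\sigma}{y} = \nicefrac{1}{2}$, and hence $\Cov{D_L}{\sigma}{x}{y} = -p/4$, which is $> -\nicefrac{1}{4}$ iff $p < 1$. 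Both induced maps $(C, \sigma) \mapsto (D_R, \sigma, x, y, \nicefrac{1}{4})$ and $(C, \sigma) \mapsto (D_L, \sigma, x, y, -\nicefrac{1}{4})$ are manifestly computable, so $\cAST \leqm \RCOVAR$ and $\cAST \leqm \LCOVAR$, and the claimed $\Sigma_2^0$-hardness follows from $\Sigma_2^0$-hardness of $\cAST$. The only delicate step is the $\wp$-bookkeeping: one must verify that divergent traces of $C$ contribute exactly $0$ to each relevant pre-expectation, so that the ``terminating mass'' $p/2$ from the first branch of each wrapper can be read off cleanly.
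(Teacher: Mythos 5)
Your proposal is correct and follows essentially the same route as the paper: a many--one reduction from the $\Sigma_2^0$--complete set $\cAST$ that wraps $C$ in a $[\nicefrac{1}{2}]$--probabilistic choice with an almost--surely terminating branch, so that the resulting covariance is a quadratic in the termination probability $p$ of $C$ crossing a fixed rational threshold exactly at $p=1$. The only (harmless) difference is that the paper realizes this quadratic as the \emph{variance} $q-q^2$ of a single indicator variable set after termination, whereas you use a genuine covariance of two fresh correlated (resp.\ anti--correlated) indicator variables, with thresholds $\nicefrac{1}{4}$ and $-\nicefrac{1}{4}$ instead of $\nicefrac{1}{4}$ and $0$.
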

\begin{proof}
For proving the $\Sigma_2^0$--hardness of ${\LCOVAR}$, consider the reduction function $r_{\mathcal{L}}(C,\, \sigma) = (C',\, \sigma,\, v,\, v,\, 0)$\footnote{We write $v$ for the expectation that in state $\sigma$ returns $\sigma(v)$.},  with $C' = \COMPOSE{\ASSIGN v 0}{\COMPOSE{\PCHOICE{\SKIP}{\nicefrac 1 2}{C}}{\ASSIGN v 1}}$, where variable $v$ does not occur in $C$.
Now consider the following:
\begin{align*}
	\Cov{C'}{\sigma}{v}{v} 	~=~ &\frac{\wp{C'}{v^2}(\sigma)}{\wlp{C'}{\boldsymbol{1}}(\sigma)} - \frac{\wp{C'}{v}(\sigma)^2}{\wlp{C'}{\boldsymbol{1}}(\sigma)^2}\\[1ex]
						~=~ &\frac{\wp{C'}{v^2}(\sigma)}{1} - \frac{\wp{C'}{v}(\sigma)^2}{1^2}			\tag{$C'$ is $\observesymbol$--free}\\[1ex]
						~=~ &\wp{C'}{v^2}(\sigma) - \wp{C'}{v}(\sigma)^2					\\
	\intertext{Since $v$ does not occur in $C$ and $v$ is set from 0 to 1 if and only if $C'$ has terminated, this is equal to:}
						~=~ &\wp{C'}{\boldsymbol{1}^2}(\sigma) - \wp{C'}{\boldsymbol{1}}(\sigma)^2	\\
						~=~ &\wp{C'}{\boldsymbol{1}}(\sigma) - \wp{C'}{\boldsymbol{1}}(\sigma)^2	
\end{align*}
Note that $\wp{C'}{\boldsymbol{1}}(\sigma)$ is exactly the probability of $C'$ terminating on input $\sigma$.
A plot of this termination probability against the resulting variance is given in \autoref{fig:parabola}.
We observe that $\Cov{C'}{\sigma}{v}{v} = \wp{C'}{\boldsymbol{1}}(\sigma) - \wp{C'}{\boldsymbol{1}}(\sigma)^2 > 0$ iff $C'$ terminates \emph{neither} with probability 0 \emph{nor} with probability 1.
Since, however, $C'$ terminates by construction \emph{at least} with probability $\nicefrac 1 2$, we obtain that $\Cov{C'}{\sigma}{v}{v} > 0$ iff $C'$ terminates with probability less than 1, which is the case iff $C$ terminates with probability less than 1.
Thus $r_{\mathcal{L}}(C,\, \sigma) = (C',\, \sigma,\, v,\, v,\, 0) \in \LCOVAR$ iff $(C,\, \sigma) \in \cAST$. Thus, $r_{\mathcal{L}}\colon \cAST \leqm \LCOVAR$.
Since $\cAST$ is $\Sigma_2^0$--complete, if follows that $\LCOVAR$ is $\Sigma_2^0$--hard.

For the the proof for ${\RCOVAR}$, see~\cite{technicalReport}.
\qed
\end{proof}
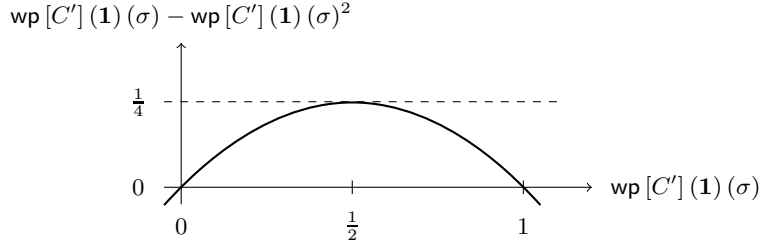
\begin{figure}[t]
	\begin{center}
			\begin{tikzpicture}[scale=4.5]
			\draw[use as bounding box,white] (0,-.16) rectangle (1,0.55);
			\draw[thick] (-0.05,-0.0525) parabola bend (0.5,.25) (1.05,-0.0525);

			\draw[->] (-0.05, 0) --  (1.2,0);
			\draw[] (.5, -.02) --  (.5, .02);
			\node (x) at (0.5, -.115) {$\frac 1 2$};
			\draw[] (1, -.02) --  (1, .02);
			\node (x) at (1, -.115) {$1$};
			\node (x) at (0, -.115) {$0$};
			\node (xlabel) at (1.475, 0) {\footnotesize $\wp{C'}{\boldsymbol{1}}(\sigma)$};

			\draw[->] (0, -0.05) --  (0,0.425);
			\draw[dashed] (-.05, .2521) --  (1.1, .2521);
			\node (y) at (-.125, .25) {$\frac 1 4$};
			\node (y) at (-.125, 0) {$0$};
			\node (ylabel) at (0, 0.505) {\footnotesize $\wp{C'}{\boldsymbol{1}}(\sigma) - \wp{C'}{\boldsymbol{1}}(\sigma)^2$};
		\end{tikzpicture}
		\vspace{-1.5\baselineskip}
	\end{center}
	\caption{Plot of the termination probability of a program against the resulting variance.}
	\label{fig:parabola}
\end{figure}
A hardness results for ${\COVAR}$ 
is obtained by reduction from $\AST$.
\begin{lemma}
\label{pi2hard1}
${\COVAR}$
is $\Pi_2^0$--hard.
\end{lemma}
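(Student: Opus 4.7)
The plan is to reuse the same padding construction as in the proof of Lemma~\ref{sigma2hard} and reduce from $\AST$ (rather than from $\cAST$). Concretely, I would define the reduction
\[
r_{\mathcal{E}}(C,\sigma) ~=~ (C',\,\sigma,\,v,\,v,\,0), \quad\text{where}\quad C' ~=~ \COMPOSE{\ASSIGN v 0}{\COMPOSE{\PCHOICE{\SKIP}{\nicefrac 1 2}{C}}{\ASSIGN v 1}},
\]
with $v$ a fresh variable not occurring in $C$. Since $C'$ is $\observesymbol$--free and $v$ is assigned $1$ exactly when $C'$ terminates properly, the calculation from Lemma~\ref{sigma2hard} carries over verbatim to yield
\[
\Cov{C'}{\sigma}{v}{v} ~=~ \wp{C'}{\boldsymbol{1}}(\sigma) \,-\, \wp{C'}{\boldsymbol{1}}(\sigma)^2~.
\]

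Next, I would argue equivalence of the two decision problems. Let $p = \wp{C'}{\boldsymbol{1}}(\sigma)$, i.e.\ the termination probability of $C'$ on $\sigma$. The $\nicefrac 1 2$--probabilistic choice with $\SKIP$ forces $p \in [\nicefrac 1 2,\,1]$ by construction, and on this interval the quadratic $p - p^2$ vanishes only at $p = 1$ (see \autoref{fig:parabola}). Hence $\Cov{C'}{\sigma}{v}{v} = 0$ iff $p = 1$ iff $C'$ terminates almost--surely on $\sigma$ iff $C$ terminates almost--surely on $\sigma$, where the last step uses that the only way for $C'$ to diverge is via the $C$--branch. This establishes $(C,\sigma) \in \AST$ iff $r_{\mathcal{E}}(C,\sigma) \in \COVAR$, so $r_{\mathcal{E}}\colon \AST \leqm \COVAR$. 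Because $r_{\mathcal{E}}$ is plainly computable and $\AST$ is $\Pi_2^0$--complete, $\Pi_2^0$--hardness of $\COVAR$ follows.

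The step I regard as potentially subtle is the sharpening of the previous argument from strict inequality to exact equality: in Lemma~\ref{sigma2hard} any non--almost--sure termination produced a \emph{strictly positive} covariance, which is all that was needed for $\LCOVAR$. For $\COVAR$ I must additionally rule out the spurious root $p = 0$ of $p - p^2$, which is precisely why prefixing with $\PCHOICE{\SKIP}{\nicefrac 1 2}{\cdot}$ is essential---it artificially pushes the termination probability away from $0$ so that the only way to hit the target value $q = 0$ is genuine almost--sure termination of the original program $C$. No new ingredient beyond this padding is required.
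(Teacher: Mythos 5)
Your proof is correct and follows essentially the same route as the paper: a reduction from the $\Pi_2^0$--complete \AST via a $\nicefrac{1}{2}$--padded program whose variance $p - p^2$ (with $p$ the termination probability) equals the target value exactly when $C$ terminates almost--surely. The only difference is cosmetic---the paper pads with $\ABORT$ to pin $p \in [0,\, \nicefrac{1}{2}]$ and targets $q = \nicefrac{1}{4}$, whereas you pad with $\SKIP$ to pin $p \in [\nicefrac{1}{2},\, 1]$ and target $q = 0$; both choices make the relevant root of the parabola unique on the attainable interval.
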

\begin{proof}
Similar to \autoref{sigma2hard} using $r_{\mathcal{V}}(C,\, \sigma) = \left(C',\, \sigma,\, v,\, v,\, \frac 1 4\right)$, with $C' = \COMPOSE{\ASSIGN v 0}{\COMPOSE{\PCHOICE{\ABORT}{\nicefrac 1 2}{C}}{\ASSIGN v 1}}$.
For details, see~\cite{technicalReport}.
\qed
\end{proof}
For a hardness result on ${\INFCOVAR}$
we use the universal halting problem for non--probabilistic programs.
\begin{theorem}[Hardness of the Universal Halting Problem~\textnormal{\textbf{\cite{odifreddi2}}}]
\label{UHPcomplete}
Let $C$ be a non--pro{\-}ba{\-}bi{\-}lis{\-}tic program.
The \emph{universal halting problem} is the problem of deciding whether $C$ terminates on all inputs.
Let $\UHP$ denote the \emph{problem set}, defined as $C \in \UHP$ iff $\forall \sigma \in \States\colon C$ terminates on input $\sigma$.
$\UHP$ is $\Pi_2^0$--complete. 
\end{theorem}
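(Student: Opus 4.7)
The plan is to establish $\Pi_2^0$--completeness via the two standard halves: membership of $\UHP$ in $\Pi_2^0$ and $\Pi_2^0$--hardness of $\UHP$. For membership, I would observe that for a non--probabilistic program $C$ and an input state $\sigma \in \States$, the relation ``$C$ halts on $\sigma$ within $t$ computation steps'' is decidable: one simply simulates $C$ on $\sigma$ for $t$ steps and inspects whether it has halted. Consequently,
\begin{align*}
	C \in \UHP \quad\Longleftrightarrow\quad \forall\, \sigma \in \States~ \exists\, t \in \Nats \colon C \text{ halts on } \sigma \text{ within } t \text{ steps,}
\end{align*}
which is in the canonical $\forall \exists$--shape required by \autoref{remarithmetic}, placing $\UHP$ in $\Pi_2^0$.

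For $\Pi_2^0$--hardness, I would reduce an arbitrary set $\Problem{A} \in \Pi_2^0$ to $\UHP$. By definition, $\Problem{A} = \big\{x \mid \forall y\; \exists z \colon R(x,y,z)\big\}$ for some decidable relation $R$. Using Turing--completeness of non--probabilistic cpGCL, I would define a computable map $r$ sending $x$ to a program $C_x$ that, on an input state encoding some $y \in \Nats$ via a fixed computable decoding of $\States$, uses a standard $\mu$--operator loop (i.e.\ a \WHILESYM--loop counting upwards through $z \in \Nats$) to search for the least $z$ with $R(x,y,z)$, halting upon success and looping forever otherwise. By construction, $C_x$ halts on $\sigma$ iff $\exists z \colon R(x, y_\sigma, z)$, where $y_\sigma$ is the natural number associated with $\sigma$; hence $C_x \in \UHP$ iff $\forall y\; \exists z \colon R(x,y,z)$ iff $x \in \Problem{A}$, yielding $r \colon \Problem{A} \leqm \UHP$. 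Together with membership this forces $\UHP \in \Pi_2^0 \setminus \Sigma_2^0$.

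The main obstacle is the concrete coding: one must fix a computable bijection $\States \to \Nats$ (using that $\Vars$ is finite and $\Rats$ countable) and argue that both this decoding and the $\mu$--loop, which conceptually operate on natural numbers, are faithfully expressible as non--probabilistic cpGCL programs over the finite variable set. These are standard but tedious Turing--machine--to--GCL simulation details, and for exactly this reason the paper quotes the result from Odifreddi~\cite{odifreddi2} rather than re--proving what is by now a classical theorem of Kleene and Post.
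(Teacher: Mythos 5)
Your proposal is correct, and it is essentially the canonical argument behind the cited result: the paper itself gives no proof of \autoref{UHPcomplete} but imports it from Odifreddi, and what you write---membership via the $\forall\sigma\,\exists t$ formulation with decidable step--bounded halting, hardness via a $\mu$--search program $C_x$ built from the Kleene normal form of an arbitrary $\Pi_2^0$ set---is exactly the standard proof that citation stands for. The only point worth making explicit is that the decoding $\States \to \Nats$, $\sigma \mapsto y_\sigma$, must be \emph{surjective} (so that ``$C_x$ halts on all $\sigma$'' really quantifies over all $y$), which holds since $\States$ is countably infinite and computably enumerable; with that noted, your reduction $r\colon \Problem{A} \leqm \UHP$ goes through as claimed.
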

We now establish by reduction from $\UHP$ the remaining hardness result:
\begin{lemma}
\label{pi2hard2}
${\INFCOVAR}$
is $\Pi_2^0$--hard.
\end{lemma}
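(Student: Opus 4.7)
My plan is to construct a many--one reduction $r_\infty\colon \UHP \leqm \INFCOVAR$. Given non--probabilistic $C$ (WLOG with a single integer variable $x$, since $\UHP$ restricted thus is still $\Pi_2^0$--complete), I would set $r_\infty(C) = (C',\, \sigma,\, v,\, v)$ for any initial state $\sigma$, where $v, n, \texttt{flag}$ are fresh variables and
\begin{align*}
C' \colon\;\; & \ASSIGN{v}{1};\ \ASSIGN{n}{0};\ \ASSIGN{\texttt{flag}}{1};\\
& \WHILESYM\,(\texttt{flag}=1)\,\big\{\ \PCHOICE{\ASSIGN{\texttt{flag}}{0}}{3/4}{\ASSIGN{x}{n};\ C;\ \ASSIGN{n}{n+1};\ \ASSIGN{v}{2v}}\ \big\}.
\end{align*}
Intuitively each loop entry flips a biased coin: with probability $3/4$ exit the loop, otherwise simulate $C$ on input $n$ and double $v$.

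I would then case--split on whether $C \in \UHP$. If so, every simulation terminates, so $C'$ terminates almost surely with $v = 2^k$ of probability $(1/4)^k \cdot (3/4)$; summing the resulting geometric series yields $\wp{C'}{v}(\sigma) = 3/2$ but $\wp{C'}{v^2}(\sigma) = (3/4)\sum_{k \geq 0} 1 = +\infty$. Since $C'$ is $\observesymbol$--free, $\wlp{C'}{\boldsymbol{1}}(\sigma) = 1$, hence $\Var{C'}{\sigma}{v} = +\infty$. If not, fix the least $n_0$ on which $C$ diverges; then any terminating trace of $C'$ takes the iterate--branch at most $n_0$ times, so $v \leq 2^{n_0}$ on terminating traces, and both $\wp{C'}{v}(\sigma)$ and $\wp{C'}{v^2}(\sigma)$ are finite partial sums, making the variance finite. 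Combining both cases gives $(C', \sigma, v, v) \in \INFCOVAR$ iff $C \in \UHP$.

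The main obstacle is tuning the pay--off and halt--probability so that the ``good'' case yields a clean $+\infty$ rather than an ill--defined $\infty - \infty$: I need $\wp{C'}{v}(\sigma)$ to converge while $\wp{C'}{v^2}(\sigma)$ diverges, and ``doubling $v$ with quarter--probability of continuing'' sits exactly on this threshold. A secondary subtlety is that the loop must be exited \emph{properly}: replacing the $\texttt{flag}$ trick by $\HALT$ would contribute $0$ to $\wpsymbol$ and destroy the analysis, so I exit by toggling $\texttt{flag}$, keeping each coin--flip an ordinary terminating branch.
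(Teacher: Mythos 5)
Your reduction is correct and follows the same high--level strategy as the paper's proof: reduce from $\UHP$ by dovetailing simulations of $C$ on successive inputs, use a biased coin to decide after each round whether to keep going, and let a fresh variable grow exponentially in the number of rounds so that exactly the ``$C$ terminates everywhere'' case produces an infinite moment. The constructions differ in two interesting ways. First, the paper interleaves \emph{single steps} of the simulation with coin flips (so its $C'$ terminates almost surely even when $C \notin \UHP$, and the payoff $v := 2^x$ is tied to the number of coin tosses), whereas you run each simulation of $C$ to completion inside one loop iteration; this is simpler, at the price that your $C'$ diverges with positive probability in the negative case --- which is harmless, since diverging traces contribute $0$ to $\wpsymbol$ and only finitely many terminating traces remain, exactly as you argue. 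Second, and more substantively, the paper makes $\Exp{C'}{\sigma}{v}$ itself infinite in the positive case and then asserts the variance is infinite, which under \autoref{def:covariances} is formally an ``$\infty - \infty$'' that needs a convention; your choice of exit probability $\nicefrac 3 4$ and doubling of $v$ is tuned so that $\Exp{C'}{\sigma}{v} = \nicefrac 3 2$ converges while $\Exp{C'}{\sigma}{v^2}$ diverges, giving a clean $+\infty$ for the variance. That is a genuine (small) improvement in rigor. The one point you should tighten is the ``WLOG a single integer variable'' step: the paper's $\UHP$ quantifies over all states in $\States = \Vars \to \Rats$, so you should either feed inputs through a computable enumeration $e \colon \Nats \to \States$ as the paper does, or explicitly reduce from the ($\Pi_2^0$--complete) totality problem over natural--number inputs; as stated the claim is true but asserted rather than justified.
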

\begin{proof}
For proving the $\Pi_2^0$--hardness of $\INFCOVAR$ we use the reduction function $r_\infty(C) = (C',\, \sigma,\, v,\, v)$, where $\sigma$ is arbitrary but fixed and $C'$ is the  program
\begin{alltt}
\ASSIGN{c}{1}; \ASSIGN{i}{0}; \ASSIGN{x}{0}; \ASSIGN{v}{0}; \ASSIGN{\mathit{term}}{0}; \(\mathit{InitC}\);
while (\(c\) \(\neq\) 0)\{
    \(\mathit{StepC}\); if (\(\mathit{term} = 1\))\{ \(\ASSIGN{v}{2\sp{x}}\); \ASSIGN{i}{i+1}; \ASSIGN{\mathit{term}}{0}; \(\mathit{InitC}\) \};
    \{\ASSIGN{c}{0}\} [0.5] \{\ASSIGN{c}{1}\}; \ASSIGN{x}{x+1} \} \(\textnormal{,}\)
\end{alltt}
where $\mathit{InitC}$ is a non--probabilistic program that initializes a simulation of the program $C$ on input $e(i)$ (where $e\colon \Nats \to \States$ is some computable enumeration of $\States$), and $\mathit{StepC}$ is a non--probabilistic program that does one single (further) step of that simulation and sets \textit{term} to 1 if that step has led to termination of $C$.

Intuitively, the program $C'$ starts by simulating $C$ on input $e(0)$.
During the simulation, it---figuratively speaking---gradually looses interest in further simulating $C$ by tossing a coin after each simulation step to decide whether to continue the simulation or not.
If eventually $C'$ finds that $C$ has terminated on input $e(0)$, it sets the variable $v$ to a number exponential in the number of coin tosses that were made so far, namely to $2^x$.
$C'$ then continues with the same procedure for the next input $e(1)$, and so on.

The variable $x$ keeps track of the number of loop iterations (starting from 1 as the first loop iteration will definitely take place), which equals the number of coin tosses. 
The $x$--th loop iteration takes place with probability $\nicefrac{1}{2^x}$.
The expected value $\Exp{C'}{\sigma}{v}$ is thus given by a series of the form $S = \sum_{i=1}^{\infty} \nicefrac{v_i}{2^i}$, where $v_i = 2^j$ for some $j\in\Nats$.
Two cases arise:

\textbf{(1)} $C \in \UHP$, i.e. $C$ terminates on every input.
In that case, $v$ will infinitely often be updated to $2^x$. Therefore, summands of the form $\nicefrac{2^i}{2^i}$ will appear infinitely often in $S$ and so $S$ diverges.
Hence, the expected value of $v$ is infinity and therefore, the variance of $v$ must be infinite as well. Thus, $(C', \sigma,\, v,\, v) \in \INFCOVAR$.

\textbf{(2)} $C \not\in \UHP$, i.e. there exists some input $\sigma'$ with minimal $i \in \Nats$ such that $e(i) = \sigma'$ on which $C$ does not terminate.
In that case, the numerator of all summands of $S$ is upper bounded by some constant $2^j$ and thus $S$ converges.
Boundedness of the $v_i$'s implies that the series $\sum_{i=1}^{\infty} \nicefrac{{v_i}^2}{2^i} = \Exp{C'}{\sigma}{v^2}$ also converges.
Hence, the variance of $v$ is finite and $(C', \sigma,\, v,\, v) \not\in \INFCOVAR$.
%
\qed
\end{proof}
Lemmas \ref{insigma2} to \ref{pi2hard2} together directly yield the following completeness results:
\begin{theorem}[The Hardness of Approximating Covariances]
\label{thm:hardness}
	\vspace{-.5\baselineskip}
	\begin{enumerate}
		\item[\textbf{1.}] ${\LCOVAR}$ and ${\RCOVAR}$
		are both $\Sigma_2^0$--complete.
		\item[\textbf{2.}] ${\COVAR}$ and $\INFCOVAR$
		are both $\Pi_2^0$--complete.
	\end{enumerate}
\end{theorem}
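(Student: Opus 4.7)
The theorem is a direct compilation of the five preceding lemmas, so my plan is simply to assemble them cleanly and verify that nothing is missing from the completeness arguments.

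For part \textbf{1}, I would observe that \autoref{insigma2} already places both $\LCOVAR$ and $\RCOVAR$ in $\Sigma_2^0$. Matching $\Sigma_2^0$-hardness of $\LCOVAR$ is given by \autoref{sigma2hard} via the reduction $r_{\mathcal{L}}\colon \cAST \leqm \LCOVAR$, and the analogous reduction for $\RCOVAR$ (deferred to \cite{technicalReport}) is invoked there as well. Since $\cAST$ is $\Sigma_2^0$-complete, both reductions yield $\Sigma_2^0$-hardness, and together with membership this gives $\Sigma_2^0$-completeness.

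For part \textbf{2}, membership in $\Pi_2^0$ is supplied by \autoref{inpi2}. For $\COVAR$, \autoref{pi2hard1} establishes $\Pi_2^0$-hardness by reduction from $\AST$, using the construction that forces the variance to equal $\nicefrac{1}{4}$ iff termination probability is exactly $\nicefrac{1}{2}$, which by the $\ABORT$-gadget corresponds precisely to $C$ terminating almost-surely. For $\INFCOVAR$, \autoref{pi2hard2} provides the reduction from $\UHP$ (which is $\Pi_2^0$-complete by the cited result of Odifreddi) through the enumeration-based program $C'$ whose variance diverges exactly when $C$ halts on every input. Combining membership with each hardness result yields $\Pi_2^0$-completeness.

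The only thing I would double-check is a pedantic point on the $\INFCOVAR$ reduction: in case \textbf{(2)} of \autoref{pi2hard2}, one needs that not only $\expsymbol(v^2)$ converges but that $\wlp{C'}{\mathbf{1}}(\sigma) > 0$ so that the conditional variance is well-defined and finite; this is immediate since $C'$ is $\observesymbol$-free and terminates with positive probability. No genuine obstacle arises---the whole theorem is a bookkeeping step packaging Lemmas \ref{insigma2}--\ref{pi2hard2} into the standard ``membership $+$ hardness $=$ completeness'' form, together with the closure remark from the definition of $\Gamma_n^0$-completeness that ensures each set lies strictly in its advertised level of the arithmetical hierarchy.
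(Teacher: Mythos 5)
Your proposal is correct and matches the paper exactly: the paper itself states that Lemmas \ref{insigma2} through \ref{pi2hard2} "together directly yield" the theorem, i.e.\ it is precisely the membership--plus--hardness bookkeeping you describe. Your extra check that $\wlp{C'}{\boldsymbol{1}}(\sigma) > 0$ in the $\INFCOVAR$ reduction is a harmless refinement the paper leaves implicit.
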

\begin{remark}[The Hardness of Approximating Variances]
It can be shown that \emph{variance} approximation is not easier than covariance approximation: exactly the same completeness results as in \autoref{thm:hardness} hold for analogous variance approximation problems.
In fact, we have always reduced to approximating a variance for obtaining our hardness results on covariances. \hfill$\triangle$
\end{remark}
As an immediate consequence of \autoref{thm:hardness}, computing both upper and lower bounds for covariances is equally difficult.
This is \emph{contrary to the case for expected values}: 
While computing upper bounds for expected values is also $\Sigma_2^0$--complete, computing lower bounds is $\Sigma_1^0$--complete, thus lower bounds are computably enumerable~\cite{hardness}.
Therefore we can computably enumerate an ascending sequence that converges to the sought--after expected value.
By \autoref{thm:hardness} this is \emph{not possible} for a covariance as $\Sigma_2^0$--sets are in general not computably enumerable.

\autoref{thm:hardness} rules out techniques based on finite loop--unrollings as \emph{complete} approaches for reasoning about the covariances of outcomes of probabilistic programs.
As this is a rather sobering insight, in the next section we will investigate invariant--aided techniques that are complete and can be applied to tackle these approximation problems.

\section{Invariant--Aided Reasoning on Outcome Covariances} \label{sec:invariants}

For straight--line (i.e.\ loop--free) programs, upper and lower bounds for cova{\-}ri{\-}an{\-}ces are obviously computable, e.g.\ by using the decompositions from Definitions \ref{def:cond-ev} and \ref{def:cond-cov}, and the inference rules from \autoref{table:transrules}. 
Problems do arise, however, for loops.
We have seen in the previous section that neither upper nor lower bounds are computably enumerable.
In this section we therefore present an invariant--aided approach for enumerating bounds on covariances of loops.
The underlying principle of such techniques is quite commonly a result due to Park:
\begin{theorem}[Park's Lemma \cite{parkslemma}]
\label{thm:parkslemma}
Let $(D,\, {\sqsubseteq})$ be a complete partial order and $F\colon D \to D$ be continuous.
Then, for all $d \in D$, it holds that $F(d) \sqsubseteq d$ implies $\lfp F 
\sqsubseteq d$, and $d \sqsubseteq F(d)$ implies $d \sqsubseteq \gfp F$.
\end{theorem}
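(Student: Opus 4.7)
The plan is to treat the two implications separately via Kleene's fixed point theorem, exploiting that continuity of $F$ implies monotonicity and that $\lfp F$ (resp.\ $\gfp F$) can be obtained as the supremum (resp.\ infimum) of an iteration chain.

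For the first implication, assume $F(d) \sqsubseteq d$. Since $D$ is a CPO, it has a least element $\bot$, and continuity of $F$ yields that the $\omega$-chain $\bot \sqsubseteq F(\bot) \sqsubseteq F^2(\bot) \sqsubseteq \cdots$ has supremum $\lfp F = \sup_{n \in \Nats} F^n(\bot)$. I then show by induction on $n$ that $F^n(\bot) \sqsubseteq d$: the base case $\bot \sqsubseteq d$ is trivial; for the step, from $F^n(\bot) \sqsubseteq d$ monotonicity gives $F^{n+1}(\bot) \sqsubseteq F(d)$, and by hypothesis $F(d) \sqsubseteq d$, so $F^{n+1}(\bot) \sqsubseteq d$. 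Because $d$ is thus an upper bound of the chain, the least upper bound satisfies $\lfp F \sqsubseteq d$.

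For the second implication I would proceed dually. Assuming a greatest element $\top$ (and sufficient regularity of $F$ to obtain $\gfp F$ as $\inf_{n \in \Nats} F^n(\top)$ via downward iteration, which holds for the concrete $\wlpsymbol$-style operators used in this paper on the complete lattice $\BE$), a dual induction shows $d \sqsubseteq F^n(\top)$ for all $n$: the base case is $d \sqsubseteq \top$; the inductive step uses $d \sqsubseteq F(d) \sqsubseteq F(F^n(\top)) = F^{n+1}(\top)$ by monotonicity. Taking the infimum over $n$ yields $d \sqsubseteq \gfp F$.

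The main obstacle is the subtlety in the $\gfp$ direction: for a general CPO, downward Kleene iteration from $\top$ need not stabilise at the greatest fixed point after $\omega$ steps. The cleanest way to sidestep this is to note that the domains of interest in this paper, namely $\E$ and $\BE$ under the pointwise order, are in fact complete lattices, so the Knaster--Tarski characterisations $\lfp F = \inf\{x \in D \mid F(x) \sqsubseteq x\}$ and $\gfp F = \sup\{x \in D \mid x \sqsubseteq F(x)\}$ are available. Then both implications reduce to one-line arguments: $d$ is a pre-fixed point in the first case, so $\lfp F \sqsubseteq d$; and $d$ is a post-fixed point in the second case, so $d \sqsubseteq \gfp F$. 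I would present the Kleene-based proof for the $\lfp$ part (matching the CPO formulation in the statement) and appeal to Knaster--Tarski on the relevant complete lattice for the $\gfp$ part, remarking that both views coincide on the domains used throughout the paper.
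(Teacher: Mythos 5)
Your proof is correct; note that the paper itself offers no proof of this statement---it is quoted as a classical result with a citation to Park---so there is no in-paper argument to compare against. The Kleene-iteration half for $\lfp F$ is the standard argument and is carried out correctly: continuity gives monotonicity, the induction $F^n(\bot) \sqsubseteq d$ goes through using $F(d) \sqsubseteq d$, and $d$ being an upper bound of the chain forces $\lfp F = \sup_n F^n(\bot) \sqsubseteq d$. More importantly, you correctly identified the one genuine subtlety in the statement as given: on a bare CPO with a merely Scott-continuous $F$, the greatest fixed point need not exist, and even when it does, downward iteration from $\top$ need not reach it in $\omega$ steps, so the dual Kleene argument does not work verbatim. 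Your resolution---observing that the orders actually used in the paper ($\E$ and $\BE$ under the pointwise order) are complete lattices, so the Knaster--Tarski characterisations $\lfp F = \inf\{x \mid F(x) \sqsubseteq x\}$ and $\gfp F = \sup\{x \mid x \sqsubseteq F(x)\}$ make both implications immediate---is exactly the right move, and indeed matches how the paper deploys the lemma (the $\gfp$ direction is only ever applied to $\wlpsymbol$ on $\BE$). The only stylistic remark is that once you invoke Knaster--Tarski you need only monotonicity, not continuity, so both halves could be dispatched uniformly by the one-line lattice argument; the Kleene route for $\lfp$ is nevertheless a fine choice since it mirrors the fixed-point iteration $F^k(\boldsymbol{0})$ that the paper's enumeration theorems rely on.
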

Using this theorem, we can verify in a relatively easy fashion that some element 
is an over--approximation of the least fixed point or an under--approximation 
of the greatest fixed point of a continuous mapping on a complete partial order.
In the following, let $C = \WHILE{B}{C'}$.
In order to exploit Park's Lemma for enumerating bounds on covariances for this while--loop, recall
\begin{align*}
	\Cov{C}{\sigma}{f}{g} 	~=~ &\Exp{C}{\sigma}{f \cdot g} - \Exp{C}{\sigma}{f} \cdot \Exp{C}{\sigma}{g} 	\\[1ex]
							~=~ &\frac{\wp{C}{f \cdot g}(\sigma)}{\wlp{C}{\boldsymbol{1}}(\sigma)} - \frac{\wp{C}{f}(\sigma) \cdot \wp{C}{g}(\sigma)}{\wlp{C}{\boldsymbol{1}}(\sigma)^2}~.
\end{align*}
By inspection of the last line, we can see that for obtaining an over--approxima{\-}tion of $\Cov{C}{\sigma}{f}{g}$, it suffices to over--approximate $\nicefrac{\wp{C'}{f \cdot g}(\sigma)}{\wlp{C'}{\boldsymbol{1}}(\sigma)}$, which can be done by over--approximating $\wp{C'}{f \cdot g}(\sigma)$ and under--approxi{\-}mating $\wlp{C'}{\boldsymbol{1}}(\sigma)$.
Since $\wpsymbol$ ($\wlpsymbol$) of a loop is defined in terms of a least (greatest) fixed point, we can apply Park's Lemma for over--approximating this fraction.
This leads us to the following proof rule:
\begin{theorem}[Invariant--Aided~ Over--Approximation~ of~ Covariances]
\label{thm:outcome-upper}
Let $C = \WHILE{B}{C'}$, $\sigma \in \States$, $f,g\in \E$, $F_h(X) = [\neg 
B]\cdot h + [B] \cdot \wp{C'}{X}$, for any $h \in \E$, and $G(Y) = [\neg B] + [B] \cdot 
\wlp{C'}{Y}$.
Furthermore, let $\widehat{X} \in \E$ and $\widehat{Y} \in \BE$, such that $F_{f \cdot g}\big(\widehat{X}\big) \preceq \widehat{X}$, $\widehat{Y} \preceq G\big(\widehat{Y}\big)$, and $\widehat{Y}(\sigma) > 0$.
Then for all $k \in \Nats$ it holds that\footnote{Here $F_h^k(X)$ stands for $k$--fold application of $F_h$ to $X$.}
\belowdisplayskip=0pt
\begin{align*}
	\Cov{C}{\sigma}{f}{g} ~\leq~ \frac{\widehat{X}(\sigma)}{\widehat{Y}(\sigma)} - \frac{F_{f}^k(\boldsymbol{0})(\sigma) \cdot F_{g}^k(\boldsymbol{0})(\sigma)}{G^k(\boldsymbol{1})(\sigma)^2}~.
\end{align*}
\normalsize
\end{theorem}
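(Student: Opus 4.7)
The plan is to prove the inequality by bounding each of the two fractions in the decomposition
$$\Cov{C}{\sigma}{f}{g} ~=~ \frac{\wp{C}{f \cdot g}(\sigma)}{\wlp{C}{\boldsymbol{1}}(\sigma)} - \frac{\wp{C}{f}(\sigma) \cdot \wp{C}{g}(\sigma)}{\wlp{C}{\boldsymbol{1}}(\sigma)^2}$$
separately, using Park's Lemma for the first term and Kleene iteration for the second.

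First, I would observe that both $F_h$ (for any $h\in\E$) and $G$ are continuous on the respective cpos $(\E,\preceq)$ and $(\BE,\preceq)$, so that $\wp{C}{h} = \lfp F_h$ and $\wlp{C}{\boldsymbol{1}} = \gfp G$. Applying Park's Lemma (\autoref{thm:parkslemma}) to the pre--fixed point assumption $F_{f\cdot g}(\widehat{X}) \preceq \widehat{X}$ yields $\wp{C}{f\cdot g} \preceq \widehat{X}$, and applied to the post--fixed point assumption $\widehat{Y} \preceq G(\widehat{Y})$ yields $\widehat{Y} \preceq \wlp{C}{\boldsymbol{1}}$. The hypothesis $\widehat{Y}(\sigma) > 0$ then guarantees $\wlp{C}{\boldsymbol{1}}(\sigma) > 0$ as well, so the division is well--defined and we obtain
$$\frac{\wp{C}{f\cdot g}(\sigma)}{\wlp{C}{\boldsymbol{1}}(\sigma)} ~\leq~ \frac{\widehat{X}(\sigma)}{\widehat{Y}(\sigma)}~.$$

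Next, for the subtracted term I would use the Kleene fixed point characterizations: since $F_f$ and $F_g$ are continuous on $(\E,\preceq)$ with bottom $\boldsymbol{0}$, we have $F_f^k(\boldsymbol{0}) \preceq \wp{C}{f}$ and $F_g^k(\boldsymbol{0}) \preceq \wp{C}{g}$ for every $k\in\Nats$. Dually, since $G$ is continuous on $(\BE,\preceq)$ with top $\boldsymbol{1}$, we have $\wlp{C}{\boldsymbol{1}} \preceq G^k(\boldsymbol{1})$ for every $k$. Because all involved quantities are non--negative and the denominators are strictly positive (both $\wlp{C}{\boldsymbol{1}}(\sigma)$ by the argument above and $G^k(\boldsymbol{1})(\sigma) \geq \wlp{C}{\boldsymbol{1}}(\sigma) > 0$), squaring preserves order and taking reciprocals reverses it, giving
$$\frac{\wp{C}{f}(\sigma)\cdot\wp{C}{g}(\sigma)}{\wlp{C}{\boldsymbol{1}}(\sigma)^2} ~\geq~ \frac{F_f^k(\boldsymbol{0})(\sigma)\cdot F_g^k(\boldsymbol{0})(\sigma)}{G^k(\boldsymbol{1})(\sigma)^2}~.$$

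Finally, combining the two bounds (the second one negated) yields the desired inequality. The only subtlety I anticipate is handling the denominator positivity carefully, but the chain $0 < \widehat{Y}(\sigma) \leq \wlp{C}{\boldsymbol{1}}(\sigma) \leq G^k(\boldsymbol{1})(\sigma)$ covers all cases. A small additional point is to confirm that $F_{f\cdot g}$ and the other $F_h$ truly are continuous (not merely monotone) on $\E$ equipped with $\preceq$; this is standard for the $\wpsymbol$/$\wlpsymbol$ calculus and is already implicit in \autoref{table:transrules}, since loops are defined via $\lfp$/$\gfp$ whose existence on these cpos rests on exactly that continuity.
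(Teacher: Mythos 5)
Your proof is correct and follows essentially the same route as the paper's: Park's Lemma applied to the pre--/post--fixed points $\widehat{X}$ and $\widehat{Y}$ bounds the term $\nicefrac{\wp{C}{f\cdot g}(\sigma)}{\wlp{C}{\boldsymbol{1}}(\sigma)}$ from above, while the Kleene iterates under--approximate the subtracted term. The only cosmetic difference is that the paper phrases the second half in terms of its enumeration functions $\wpsymbol^k$ and $\wlpsymbol^k$ rather than the iterates $F_f^k(\boldsymbol{0})$ and $G^k(\boldsymbol{1})$ appearing in the theorem statement, which your version actually matches more directly.
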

By this method we can computably enumerate upper bounds for covariances once appropriate invariants are found.
The catch is that if we choose the invariants, such that $F_{f \cdot g}\big( \widehat{X} \big)(\sigma) < \widehat{X}(\sigma)$ or $\widehat{Y}(\sigma) < G\big( \widehat{Y} \big)(\sigma)$, then the enumeration will \emph{not} get arbitrarily close to the actual covariance.
Note, however, that our method is complete since we could have chosen $\widehat{X} = \lfp F_{f \cdot g}$ and $\widehat{Y} = \gfp G$:
\begin{corollary}[Completeness of \autoref{thm:outcome-upper}] \label{thm:outcome-upper-completeness}
Let $C = \WHILE{B}{C'}$, $\sigma \in \States$, $f,g\in \E$.
Then there exist $\widehat{X} \in \E$ and $\widehat{Y} \in \BE$, such that
\belowdisplayskip=0pt
\begin{align*}
	\inf_{k \in \Nats}~ \frac{\widehat{X}(\sigma)}{\widehat{Y}(\sigma)} - \frac{F_{f}^k(\boldsymbol{0})(\sigma) \cdot F_{g}^k(\boldsymbol{0})(\sigma)}{G^k(\boldsymbol{1})(\sigma)^2} ~=~ \Cov{C}{\sigma}{f}{g}.
\end{align*}
\normalsize
\end{corollary}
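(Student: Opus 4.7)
The plan is to invoke \autoref{thm:outcome-upper} with the canonical choice of invariants $\widehat{X} = \lfp F_{f \cdot g}$ and $\widehat{Y} = \gfp G$, and then show that the inequality furnished by that theorem collapses to equality as $k \to \infty$.

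First I verify that these are admissible witnesses. Since $F_{f \cdot g}$ and $G$ are Scott--continuous operators on the complete partial orders $(\E, \preceq)$ and $(\BE, \preceq)$ respectively, we have $F_{f \cdot g}\big(\widehat{X}\big) = \widehat{X}$ and $\widehat{Y} = G\big(\widehat{Y}\big)$, so in particular the required $\preceq$--inequalities hold with equality. Moreover, by the definitions of $\wpsymbol$ and $\wlpsymbol$ for while loops in \autoref{table:transrules}, $\widehat{X}(\sigma) = \wp{C}{f \cdot g}(\sigma)$ and $\widehat{Y}(\sigma) = \wlp{C}{\boldsymbol{1}}(\sigma)$. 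The side condition $\widehat{Y}(\sigma) > 0$ exactly matches the case in which the conditional expectations of \autoref{def:cond-ev} are well--defined by the $\frac{0}{0} = 0$ convention; the degenerate case $\widehat{Y}(\sigma) = 0$ forces both $\wp{C}{h}(\sigma) = 0$ for all $h \in \E$ considered, and can be dispatched separately.

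Next I would take the $k \to \infty$ limit. Kleene's theorem applied to $F_f$, $F_g$, and $G$ yields pointwise monotone convergence $F_f^k(\boldsymbol{0}) \nearrow \lfp F_f = \wp{C}{f}$, $F_g^k(\boldsymbol{0}) \nearrow \lfp F_g = \wp{C}{g}$, and $G^k(\boldsymbol{1}) \searrow \gfp G = \widehat{Y}$. Since multiplication and division (with strictly positive limit denominator) of monotone sequences in $\PosRealsInf$ are order--continuous in each argument, evaluation at $\sigma$ gives
\begin{align*}
\lim_{k \to \infty}\, \frac{F_f^k(\boldsymbol{0})(\sigma) \cdot F_g^k(\boldsymbol{0})(\sigma)}{G^k(\boldsymbol{1})(\sigma)^2} ~=~ \frac{\wp{C}{f}(\sigma) \cdot \wp{C}{g}(\sigma)}{\wlp{C}{\boldsymbol{1}}(\sigma)^2}.
\end{align*}
This sequence is monotonically non--decreasing in $k$, so the full expression inside the infimum of the corollary is monotonically non--increasing; hence its infimum coincides with its limit. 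Substituting the identifications of $\widehat{X}(\sigma)$ and $\widehat{Y}(\sigma)$ back in, the limit equals
\begin{align*}
\frac{\wp{C}{f \cdot g}(\sigma)}{\wlp{C}{\boldsymbol{1}}(\sigma)} - \frac{\wp{C}{f}(\sigma) \cdot \wp{C}{g}(\sigma)}{\wlp{C}{\boldsymbol{1}}(\sigma)^2} ~=~ \Cov{C}{\sigma}{f}{g}
\end{align*}
by Definitions \ref{def:cond-ev} and \ref{def:cond-cov}, which finishes the proof.

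The main obstacle I expect is bookkeeping around $\PosRealsInf$: one must check that the monotone--convergence step for the quotient survives when $\wp{C}{f}(\sigma)$ or $\wp{C}{g}(\sigma)$ equals $\infty$, and that the degenerate case $\widehat{Y}(\sigma) = 0$, which is excluded from the hypothesis of \autoref{thm:outcome-upper}, agrees with the $\frac{0}{0} = 0$ convention underlying \autoref{def:cond-ev}. Both amount to routine case distinctions and do not affect the structure of the argument.
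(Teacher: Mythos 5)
Your proposal is correct and matches the paper's own argument, which consists precisely of the remark that one may take $\widehat{X} = \lfp F_{f \cdot g}$ and $\widehat{Y} = \gfp G$ so that the bound of \autoref{thm:outcome-upper} becomes tight in the limit; your elaboration via Kleene iteration ($F_f^k(\boldsymbol{0}) \nearrow \wp{C}{f}$, $G^k(\boldsymbol{1}) \searrow \wlp{C}{\boldsymbol{1}}$) and monotone convergence of the quotient is exactly the intended justification. The edge cases you flag ($\widehat{Y}(\sigma) = 0$ and infinite expectations) are likewise left implicit in the paper and do not affect the argument.
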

By considerations analogous to the ones above, we can formulate dual results for lower bounds.
For details, see~\cite{technicalReport}.
\begin{example}[Application of \autoref{thm:outcome-upper}]
	Reconsider the loop from \autoref{firstexample}.
	For reasoning about the variance of $x$, we pick the invariants 
	\begin{align*}
		\widehat X ~=~ 		&[c \neq 0]\cdot x^2 + [c = 1]\cdot\left( [x \textnormal{ is even}] \cdot \nicefrac{1}{27}\left(9x^2 + 30 x + 41 \right) \right.\\[-1pt]
						&\left.\qquad\qquad\qquad\qquad\qquad\qquad {}+ [x \textnormal{ is odd}] \cdot \nicefrac{2}{27}\left(9x^2 + 12x + 20\right) \right), \quad \text{and}\\[2pt]
		\widehat Y ~=~ 		&[c \neq 0] + [c = 1]\cdot\left( [x \textnormal{ is even}] \cdot \nicefrac 1 3 + [x \textnormal{ is odd}] \cdot \nicefrac 2 3 \right)~,
	\end{align*}
	which satisfy the preconditions of \autoref{thm:outcome-upper}.
	If we enter the loop in a state $\sigma$ with $\sigma(c) = 1$ and $\sigma(x) = 0$, we have $\nicefrac{\widehat{X}(\sigma)}{\widehat{Y}(\sigma)} = \nicefrac{41}{9}$ which is our first upper bound.
	We can now enumerate further upper bounds by doing fixed point iteration on $F_{x}(X) =  [c \neq 1]\cdot x + [c = 1] \cdot \wp{\mathit{loop\,body}}{X} = [c \neq 1]\cdot x + [c = 1] \cdot \frac 1 2 \big([x \textnormal{ is odd}]\cdot X\subst{c}{0} + X\subst{x}{x+1}\big)$ and $G(Y) =  [c \neq 1]+ [c = 1] \cdot \wlp{\mathit{loop\,body}}{Y} = [c \neq 1] + [c = 1] \cdot \frac 1 2 \big([x \textnormal{ is odd}]\cdot Y\subst{c}{0} + Y\subst{x}{x+1}\big)$:
	\begin{align*}
		\frac{41}{9} - \frac{F_{x}^1(\boldsymbol{0})(\sigma)^2}{G^1(\boldsymbol{1})(\sigma)^2} = \frac{41}{9} - \frac{F_{x}^2(\boldsymbol{0})(\sigma)^2}{G^2(\boldsymbol{1})(\sigma)^2} = \frac{41}{9}, \qquad 
		\frac{41}{9} - \frac{F_{x}^3(\boldsymbol{0})(\sigma)^2}{G^3(\boldsymbol{1})(\sigma)^2} = \frac{37}{9}, \qquad \ldots
	\end{align*}
	Finally, this sequence converges to $\nicefrac{41}{9} - \nicefrac{25}{9} = \nicefrac{16}{9}$ as the variance of $x$. \hfill$\triangle$
\end{example}

\section{Reasoning about Run--Time Variances}
In addition to the (co)variance of outcomes we are interested in the variance of the program's \emph{run--time}.
Intuitively, the run--time of a program corresponds to its number of executed operations, where each operation is weighted according to some run--time model.
For simplicity, our run--time model assumes $\SKIP$, guard evaluations and assignments to consume one unit of time. 
Other statements are assumed to consume no time at all.
More elaborated run--time models, e.g. in which the run--time of assignments depends on the size of a given expression, are possible design choices that can easily be integrated in our formalization.

We describe the run--time variance in terms of an operational model  Markov Chain (MC) with rewards.
The model is similar to the ones studied in \cite{mfps,esop16}, but additionally keeps track of the run--time in a dedicated variable $\tau$ which is \emph{not accessible by the program}, but may occur in expectations.
\begin{definition}[Run--Time Expectations]
Let $\Statestau = \{\sigma ~|~ \Vars \dotcup \{\tau\} \to \Rats\}$.
The \emph{set of run--time expectations} is then defined as $\Etau = \left\{t ~\middle|~ t \colon \Statestau \to \PosRealsInf\right\}$.
\end{definition}
A corresponding $\wpsymbol$--style calculus to reason about expected run--times and variances of probabilistic programs is presented afterwards.
We first briefly recall some necessary notions about MCs and refer to~\cite[Ch.\ 10]{katoenbaier} for a comprehensive introduction.
A \emph{Markov} \emph{Chain} is a tuple $\MCSYMBOL = (\MCSTATES, \MCTRANS, s_I, \MCREW)$, where $\MCSTATES$ is a countable set of \emph{states}, $s_I \in \MCSTATES$ is the initial state, $\MCTRANS : \MCSTATES \times \MCSTATES \to [0,1]$ is the \emph{transition} \emph{probability} \emph{function} such that for each state $s \in \MCSTATES$,  $\sum_{s' \in \MCSTATES} \MCTRANS(s,s') \in \{0,1\}$, and $\MCREW : \MCSTATES \to \PosReals$ is a \emph{reward} \emph{function}.
Instead of $\MCTRANS(s,s') = p$, we often write $s \xrightarrow{p} s'$.
A \emph{path} in $\MCSYMBOL$ is a finite or infinite sequence $\pi = s_0 s_1 \ldots$ such that 
$s_i \in S$ and $\MCTRANS(s_i,s_{i+1}) > 0$ for each $i \geq 0$ (where we tacitly assume $\MCTRANS(s_i,s_{i+1}) = 0$ if $\pi$ is a finite path of length $n$ and $i \geq n$).
The \emph{cumulative} \emph{reward} and the probability of a finite path $\hat{\pi} = s_0 \ldots s_n$ are given by
$
 \MCREW(\hat{\pi}) ~=~ \sum_{k=0}^{n-1} \MCREW(s_k) 
$ 
and
$
 \MCPROB{\MCSYMBOL}{\hat{\pi}} ~=~ \prod_{k=0}^{n-1} \MCTRANS(s_k,s_{k+1}).
$ 
These notions are lifted to infinite paths by the standard cylinder set construction (cf.~\cite{katoenbaier}).

Given a set of target states $T \subseteq \MCSTATES$, $\MCREACH{T}$ denotes the set of all paths in $\MCSYMBOL$ reaching a state in $T$ from initial state $s_I$. Analogously, all paths starting in $s_I$ that never reach a state in $T$ are denoted by $\neg \MCREACH{T}$.
The \emph{expected} \emph{reward} that $\MCSYMBOL$ eventually reaches $T$ from a state $s \in \MCSTATES$ is  defined as follows:
\begin{align*}
 \ExpRew{\MCSYMBOL}{\MCREACH{T}} ~=~ \begin{cases}
                                                 \sum_{\pi \in \MCREACH{T}} \MCPROB{\MCSYMBOL}{\pi} \cdot \MCREW(\pi) & ~\text{if}~ \sum_{\pi \in \MCREACH{T}} \MCPROB{\MCSYMBOL}{\pi} = 1 \\
                                                 \infty & ~\text{if}~ \sum_{\pi \in \MCREACH{T}} \MCPROB{\MCSYMBOL}{\pi} < 1.
                                              \end{cases}
\end{align*}
Moreover, the \emph{conditional} \emph{expected} \emph{reward} of $\MCSYMBOL$ reaching $T$ from $s$ under the condition that a set of undesired states $U \subseteq \MCSTATES$ is never reached is given by\footnote{Again, we stick to the convention that $\frac 0 0 = 0$.}
\begin{align*}
 \CondExpRew{\MCSYMBOL}{\MCREACH{T}}{\neg \MCREACH{U}} 
 ~=~
 \frac{
    \ExpRew{\MCSYMBOL}{\MCREACH{T} \cap \neg \MCREACH{U}}
 }{
    \MCPROB{\MCSYMBOL}{\neg \MCREACH{U}}
 }.
\end{align*}
We are now in a position to define an operational model for our probabilistic programming language $\PProgs$. Let $\TERM$ and $\OBSERVEFAIL$ be two special symbols denoting successful termination of a program and failure of an observation, respectively.
\begin{definition}[The Operational MC of a $\PProgs$--Program]\label{def:operational}
 Given a program $C \in \PProgs$, an initial program state $\sigma_0 \in 
\Statestau$ and a post--run--time $t \in \E$, the according \emph{MC} is given 
by $\OPMC{C}{\sigma_0}{t} = (\MCSTATES,\, \MCTRANS,\, s_I,\,\MCREW)$, where
 \begin{itemize}
  \item $\MCSTATES = ((\PProgs \cup \{ \TERM \} \cup \{ \TERM;C ~|~ C \in \PProgs \}) \times \Statestau) ~\cup~ \{ \MCSTATE{\SINK},\, \MCSTATE{\OBSERVEFAIL} \}$,
  \item the transition probability function $\MCTRANS$ is given by the rules in \autoref{fig:transition-function},
  \item $s_I = \MCSTATE{C, \sigma_0}$, and 
  \item $\MCREW : \MCSTATES \to \PosReals$ is the reward function defined by $\MCREW(s) = t(\sigma)$ if $s = \MCSTATE{\TERM, \sigma}$ for some $\sigma \in \Statestau$ and $\MCREW(s) = 0$, otherwise.
 \end{itemize}
\end{definition}
\begin{figure}[tb]
\begin{tabular}{p{\textwidth - 1ex}}
           $
             \MCINFRULE
                 {}
                 {\MCTRANSITION{\TERM, \sigma}{\SINK}{1}}
                 {terminated}
           $
           \hfill 
           $
             \MCINFRULE
                 {}
                 {\MCTRANSITION{\SINK}{\SINK}{1}}
                 {sink}
           $
           \\[4ex]
           $
             \MCINFRULE
                 {}
                 {\MCTRANSITION{\EMPTY, \sigma}{\TERM, \sigma}{1}}
                 {empty}
           $
	  \hfill 
           $
             \MCINFRULE
                 {}
                 {\MCTRANSITION{\SKIP, \sigma}{\TERM, \sigma\subst{\tau}{\tau+1}}{1}}
                 {skip}
           $
           \\[4ex]
           $
             \MCINFRULE
                 {}
                 {\MCTRANSITION{\HALT, \sigma}{\SINK}{1}}
                 {halt}
           $
	  \hfill 
           $
             \MCINFRULE
                 {}
                 {\MCTRANSITION{\ASSIGN{x}{E}, \sigma}{\TERM, \sigma[x/E,\, \tau/\tau + 1]}{1}}
                 {assgn}
           $
           \\[4ex]
           $
             \MCINFRULE
                 {\MCTRANSITION{C_1, \sigma}{C_1', \sigma'}{p} \quad 0 < p \leq 1}
                 {\MCTRANSITION{\COMPOSE{C_1}{C_2}, \sigma}{\COMPOSE{C_1'}{C_2}, \sigma'}{p}}
                 {seq-1}
           $
	  \hfill 
           $
             \MCINFRULE
                 {}
                 {\MCTRANSITION{\COMPOSE{\TERM}{C_2}, \sigma}{C_2, \sigma}{1}}
                 {seq-2}
           $
           \\[4ex]
           $
             \MCINFRULE
                 {}
                 {\MCTRANSITION{\PCHOICE{C_1}{p}{C_2}, \sigma}{C_1, \sigma\subst{\tau}{\tau+1}}{p}}
                 {pc-1}
           $
	  \\[4ex]
           $
             \MCINFRULE
                 {}
                 {\MCTRANSITION{\PCHOICE{C_1}{p}{C_2}, \sigma}{C_2, \sigma\subst{\tau}{\tau+1}}{1-p}}
                 {pc-2}
           $
           \\[4ex]
           $
             \MCINFRULE
                 {[B](\sigma) = 1}
                 {\MCTRANSITION{\ITE{B}{C_1}{C_2}, \sigma}{C_1, \sigma\subst{\tau}{\tau+1}}{1}}
                 {if-true}
           $
           \\[4ex]
           $
             \MCINFRULE
                 {[B](\sigma) = 0}
                 {\MCTRANSITION{\ITE{B}{C_1}{C_2}, \sigma}{C_2, \sigma\subst{\tau}{\tau+1}}{1}}
                 {if-false}
           $
           \\[4ex]
           $
             \MCINFRULE
                 {}
                 {\MCTRANSITION{\WHILE{B}{C}, \sigma}{\ITE{B}{\COMPOSE{C}{\WHILE{B}{C}}}{\EMPTY}, \sigma}{1}}
                 {while}
           $
           \\[4ex]
           $
             \MCINFRULE
                 {}
                 {\MCTRANSITION{\ABORT, \sigma}{\ABORT, \sigma}{1}}
                 {diverge}
           $
 	   \\[4ex]
           $
             \MCINFRULE
                 {[B](\sigma) = 1}
                 {\MCTRANSITION{\OBSERVE{B}, \sigma}{\TERM, \sigma\subst{\tau}{\tau+1}}{1}}
                 {observe-true}
           $
           \\[4ex]
           $
             \MCINFRULE
                 {[B](\sigma) = 0}
                 {\MCTRANSITION{\OBSERVE{B}, \sigma}{\OBSERVEFAIL}{1}}
                 {observe-false}
           $
           \hfill
           $
             \MCINFRULE
                 {}
                 {\MCTRANSITION{\OBSERVEFAIL}{\SINK}{1}}
                 {observe-failed}
           $
           \\[4ex]
\end{tabular}
\hrule
\normalsize
  \caption{Rules for defining the transition probability function of the MC of 
a $\PProgs$--program.}
  \label{fig:transition-function}
\end{figure}
In this construction, $\sigma_0(\tau)$ represents the \emph{post--execution} 
\emph{time} of a program, i.e. the run--time that is added after a program 
finishes its execution. Hence, $\tau$ precisely captures the run--time of a 
program if $\sigma_0(\tau) = 0$.
The rules presented in \autoref{fig:transition-function} defining the transition 
probability function are mostly self--explanatory. 
Since we assume guard evaluations, probabilistic choices, assignments and 
the statement $\SKIP$ to consume one unit of time. Hence, $\tau$ is incremented 
accordingly for each of these statements and remains untouched otherwise. 

\autoref{fig:operational-sketch} sketches the structure of the operational MC 
$\OPMC{C}{\sigma}{t}$. Here, clouds represent a set of states and squiggly 
arrows indicate that a set of states is reachable by one or more paths. Each run 
either terminates successfully (i.e.\ it visits some state 
$\MCSTATE{\TERM,~\sigma'}$), or violates an observation (i.e.\ it visits 
$\MCSTATE{\OBSERVEFAIL}$), or diverges. In the first two cases each run 
eventually ends up in the $\MCSTATE{\SINK}$ state. Note that states of the form 
$\MCSTATE{\TERM,~\sigma'}$ are the only ones that may have a positive reward. 
Furthermore, each of the auxiliary states of the form 
$\MCSTATE{\TERM,~\sigma'}$, $\MCSTATE{\OBSERVEFAIL}$ and $\MCSTATE{\SINK}$ is 
needed to properly deal with $\ABORT$, $\HALT$ and $\OBSERVE{B}$.
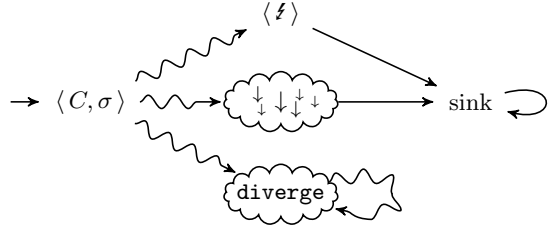
\begin{figure}[t]
  \begin{center}
      \begin{tikzpicture}[->,>=stealth',shorten >=1pt,node distance=2.5cm,semithick,minimum size=1cm]
\tikzstyle{every state}=[draw=none]
  \draw[white, use as bounding box] (-1.2,-1.8) rectangle (6.5,1.5);
   \node [state, initial, initial text=] (init) {$\MCSTATE{C, \sigma}$};  
   \node [cloud, draw=black,cloud puffs=15, cloud puff arc= 150,
        minimum width=1.5cm, minimum height=.75cm, aspect=1] (exit) [right of=init] {$\TERM$};
   \node [state] (bad) [above=0.3cm of exit] {$\MCSTATE{\OBSERVEFAIL}$};
   \node [state] (sink) [right of=exit] {$\SINK$};
   \node [cloud, draw=black,cloud puffs=15, cloud puff arc= 150,
        minimum width=1.5cm, minimum height=.75cm, aspect=1] (diverge) [below=0.5 cm of exit] {$\phantom{\TERM}$};

    \node [] (divergetext) [below=-0.825 cm of diverge] {\small$\mathpzc{\ABORT}$};

   \node [state] (haken1) at (2.2, .1) {\scriptsize $\TERM$};
   \node [state] (haken2) at (2.25, -.1) {\tiny $\TERM$};
    \node [state] (haken3) at (2.7, -.1) {\scriptsize $\TERM$};
   \node [state] (haken4) at (2.75, .1) {\tiny $\TERM$};
   \node [state] (haken5) at (2.95, .0) {\tiny $\TERM$};

  \path [] 
      (init) edge [decorate,decoration={snake, post length=2mm}] (exit)
      (init) edge [decorate,decoration={snake, post length=2mm}] (bad)
      (init) edge [decorate,decoration={snake, post length=2mm}] (diverge)
      (exit) edge [] (sink)
      (bad) edge [] (sink)
      (sink) edge [loop right] (sink)
      (diverge) edge [loop right,decorate,decoration={snake, post length=2mm}] (diverge)
  ;
\end{tikzpicture}
  \end{center}
  \vspace{-1.5\baselineskip}
  \caption{Schematic depiction of the structure of the operational MC $\OPMC{C}{\sigma}{t}$.}
  \label{fig:operational-sketch}
\end{figure}

Since $\tau$ precisely captures the run--time of a program if $\tau$ is initially set to $0$, the \emph{expected} \emph{run--time} of executing $C \in \PProgs$ on input $\sigma \in \Statestau$ with $\sigma(\tau) = 0$ is given by the conditional expected reward of $\OPMC{C}{\sigma}{\tau}$ reaching $\MCSTATE{\SINK}$, given that no observation fails, i.e.~$\Exp{C}{\sigma}{\tau} = \CondExpRew{\OPMC{C}{\sigma}{\tau}}{\MCREACH{\MCSTATE{\SINK}}}{\neg \MCREACH{\MCSTATE{\OBSERVEFAIL}}}$.
Then, in compliance with \autoref{def:covariances}, the \emph{run--time} \emph{variance} $\RtVar{C}{\sigma} $ of $C \in \PProgs$ in state $\sigma \in \Statestau$ with $\sigma(\tau) = 0$ is given by $\Exp{C}{\sigma}{\tau^2} ~-~ \left( \Exp{C}{\sigma}{\tau} \right)^2$ which is
\begin{align*}
	\CondExpRew{\OPMC{C}{\sigma}{\tau^2}}{\MCREACH{\MCSTATE{\SINK}}}{\neg \MCREACH{\MCSTATE{\OBSERVEFAIL}}} - \left( \CondExpRew{\OPMC{C}{\sigma}{\tau}}{\MCREACH{\MCSTATE{\SINK}}}{\neg \MCREACH{\MCSTATE{\OBSERVEFAIL}}} \right)^2~.
\end{align*}
In the following we provide a corresponding $\wpsymbol$--style calculus to reason about expected run--times and run--time variances of probabilistic programs.
A formal definition of the \emph{run--time transformer $\rtsymbol \colon \PProgs \To (\Etau \To \Etau)$} is provided in \autoref{table:transrules} (rightmost column). 
Intuitively, it behaves like $\wpsymbol$ except that a \emph{dedicated run--time variable $\tau$} is updated accordingly for each program statement that consumes time.
In~\cite{esop16}, a transformer for expected run--times without the need for an additional variable $\tau$ is studied. 
However, this approach fails when reasoning about run--time variances since it fails to capture expected squared run--times.
The run--time transformer $\rtsymbol$ precisely captures the notion of expected run--time of our operational model. 
\begin{theorem}[Operational--Denotational Correspondence]
\label{thm:operational:correspondence}
 Let $C \in \PProgs$, $t \in \Etau$, and $\sigma \in \Statestau$. Then 
\belowdisplayskip=0pt
 \begin{align*}
   \CondExpRew{\OPMC{C}{\sigma}{t}}{\MCREACH{\MCSTATE{\SINK}}}{\neg \MCREACH{\MCSTATE{\OBSERVEFAIL}}} ~=~ \frac{\rt{C}{t}(\sigma)}{\wlp{C}{1}(\sigma)}.
 \end{align*}
\normalsize
\end{theorem}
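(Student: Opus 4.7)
The plan is to establish the equality by separately proving operational--denotational correspondences for the numerator and the denominator of the conditional expected reward. Unfolding the definition of $\CondExpRew{\cdot}{\cdot}{\cdot}$, it suffices to show
\begin{align*}
 \ExpRew{\OPMC{C}{\sigma}{t}}{\MCREACH{\SINK} \cap \neg\MCREACH{\OBSERVEFAIL}} ~&=~ \rt{C}{t}(\sigma) \qquad \text{and}\\
 \MCPROB{\OPMC{C}{\sigma}{t}}{\neg\MCREACH{\OBSERVEFAIL}} ~&=~ \wlp{C}{\boldsymbol{1}}(\sigma)~.
\end{align*}
Both identities will be proved by structural induction on $C$. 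For the second identity, the key observation is that the only way to lose probability mass from reaching $\SINK$ without visiting $\OBSERVEFAIL$ is via the rule \textsf{[observe-false]}, which syntactically mirrors the definition $\wlp{\OBSERVE{B}}{\boldsymbol{1}} = [B]$.

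First I would dispatch all atomic cases ($\SKIP$, $\EMPTY$, $\ABORT$, $\HALT$, $\ASSIGN{x}{E}$, $\OBSERVE{B}$) by a direct inspection of the one--step transitions in \autoref{fig:transition-function}, comparing them with the corresponding lines of \autoref{table:transrules}. The cases for sequential composition, conditional choice, and probabilistic choice then follow by standard path--decomposition arguments and the induction hypothesis, using that the reward function is $0$ on all non--terminal states so that incremental rewards can be accumulated along prefixes of paths. Here the careful bookkeeping of the $\tau$--updates done in \autoref{fig:transition-function} has to be matched with the $\subst{\tau}{\tau+1}$ substitutions appearing in the rules for $\rtsymbol$; this amounts to chasing substitutions through the induction and is routine.

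The hard part will be the while case $C = \WHILE{B}{C'}$, where the operational semantics requires taking a supremum over all finite path prefixes (or a countable sum over paths in $\MCREACH{\SINK}$), while $\rtsymbol[\WHILE{B}{C'}]$ is defined as a least fixed point. The plan is to introduce, for each $n \in \Nats$, the bounded loop $\BOUNDEDWHILE{n}{B}{C'}$ (the $n$--fold syntactic unrolling replacing the innermost loop by $\EMPTY$ once the termination guard still holds), and to show by an inner induction on $n$ that
\begin{align*}
 \ExpRew{\OPMC{\BOUNDEDWHILE{n}{B}{C'}}{\sigma}{t}}{\MCREACH{\SINK} \cap \neg\MCREACH{\OBSERVEFAIL}} ~=~ \Phi^n(\boldsymbol{0})(\sigma)~,
\end{align*}
where $\Phi$ is the $\rtsymbol$--functional associated with the while--loop. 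Continuity of $\Phi$ on $(\Etau,\preceq)$ together with the Kleene fixed--point theorem then yields $\sup_n \Phi^n(\boldsymbol{0}) = \rt{\WHILE{B}{C'}}{t}$, and on the operational side the measures of the truncated event sets converge monotonically to the measure of the full event $\MCREACH{\SINK} \cap \neg\MCREACH{\OBSERVEFAIL}$ by $\sigma$--continuity of the path measure; this matches the two sides even on diverging inputs since a divergent computation tree with positive probability contributes $\infty$ to both sides. The analogous argument with the greatest--fixed--point unrolling for $\wlpsymbol$ (and co--continuity on the bounded cpo $\BE$) yields the second identity, completing the induction.
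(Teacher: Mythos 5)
Your proposal is correct and follows essentially the same route as the paper: split the conditional expected reward into numerator and denominator, prove an operational--denotational correspondence for each by structural induction, and handle the while--loop via bounded syntactic unrollings matched against Kleene iteration of the $\rtsymbol$--functional (resp.\ the dual for $\wlpsymbol$). The only organizational differences are that the paper first proves the numerator identity for the unconditioned event $\MCREACH{\MCSTATE{\SINK}}$ and then separately observes that intersecting with $\neg\MCREACH{\MCSTATE{\OBSERVEFAIL}}$ changes nothing (since paths through $\MCSTATE{\OBSERVEFAIL}$ carry zero reward), and that it obtains the denominator identity from a known characterization of $\wlpsymbol$ rather than by a fresh induction --- both immaterial to the argument.
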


As a result of \autoref{thm:operational:correspondence} we immediately obtain a formal definition of the run--time variance of probabilistic programs in terms of $\rtsymbol$ and $\wlpsymbol$. Formally, the \emph{run--time} \emph{variance} of $C \in \PProgs$ in state $\sigma \in \Statestau$ with $\sigma(\tau) = 0$ is given by
 \begin{align*}
    \RtVar{C}{\sigma} ~=~ & \CondExpRew{\OPMC{C}{\sigma}{\tau^2}}{\MCREACH{\MCSTATE{\SINK}}}{\neg \MCREACH{\MCSTATE{\OBSERVEFAIL}}} \\
                          & ~-~ \left( \CondExpRew{\OPMC{C}{\sigma}{\tau}}{\MCREACH{\MCSTATE{\SINK}}}{\neg \MCREACH{\MCSTATE{\OBSERVEFAIL}}} \right)^2 \\[1ex]
                      ~=~ & \frac{\rt{C}{\tau^2}(\sigma)}{\wlp{C}{\boldsymbol{1}}(\sigma)}
                            ~-~ \frac{\left(\rt{C}{\tau}(\sigma)\right)^2}{\left(\wlp{C}{\boldsymbol{1}}(\sigma)\right)^2}.
 \end{align*}
%
Since $\rtsymbol$ is continuous (cf.~\cite{technicalReport} for a formal proof), the invariant--aided approach based on Park's Lemma (\autoref{thm:parkslemma}) presented in Section \ref{sec:invariants} is applicable to approximate run--time variances as well.
We present the result for approximating upper bounds only. The dual result for lower bounds is obtained analogously.
\begin{theorem}[Invariant--Aided Over--Approximation of Run--Time Variances] \label{thm:runtime-upper}
  Let $C = \WHILE{B}{C'}$ and $\sigma \in \Statestau$ with $\sigma(\tau) = 0$. 
  Moreover, let $F_h(X) = [\neg B]\cdot h + [B] \cdot \rt{C'}{X}$, and $G(Y) = 
[\neg B] + [B] \cdot \wlp{C'}{Y}$.
  Furthermore, let $\widehat{X} \in \Etau$ and $\widehat{Y} \in \BE$, such that $F_{\tau^2}\big(\widehat{X}\big) \preceq \widehat{X}$, $\widehat{Y} \preceq G\big(\widehat{Y}\big)$, and $\widehat{Y}(\sigma) > 0$.
  Then for each $k \in \Nats$, it holds
\belowdisplayskip=0pt
  \begin{align*}
	  \RtVar{C}{\sigma} ~\leq~ \frac{\widehat{X}(\sigma)}{\widehat{Y}(\sigma)} 
				  ~-~ \left( \frac{F_{\tau}^{k}(\boldsymbol{0})(\sigma)}{G^{k}(\boldsymbol{1})(\sigma)} \right)^2.
  \end{align*} 
\normalsize
\end{theorem}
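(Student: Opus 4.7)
The plan is to mimic the proof of \autoref{thm:outcome-upper}, replacing $\wpsymbol$ with $\rtsymbol$ and taking the post--expectation to be $\tau^2$ (resp.\ $\tau$) instead of $f \cdot g$ (resp.\ $f$, $g$). By \autoref{thm:operational:correspondence}, the quantity of interest unfolds as
\begin{align*}
\RtVar{C}{\sigma} ~=~ \frac{\rt{C}{\tau^2}(\sigma)}{\wlp{C}{\boldsymbol{1}}(\sigma)} ~-~ \left( \frac{\rt{C}{\tau}(\sigma)}{\wlp{C}{\boldsymbol{1}}(\sigma)} \right)^{\!2},
\end{align*}
and by \autoref{table:transrules} we have $\rt{C}{h} = \lfp F_h$ for $h \in \{\tau,\, \tau^2\}$ and $\wlp{C}{\boldsymbol{1}} = \gfp G$.

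First I would upper--bound the non--subtracted summand. Park's Lemma (\autoref{thm:parkslemma}) applied to the invariant hypotheses $F_{\tau^2}(\widehat{X}) \preceq \widehat{X}$ and $\widehat{Y} \preceq G(\widehat{Y})$ yields $\rt{C}{\tau^2} \preceq \widehat{X}$ and $\widehat{Y} \preceq \wlp{C}{\boldsymbol{1}}$. From $\widehat{Y}(\sigma) > 0$ I get $\wlp{C}{\boldsymbol{1}}(\sigma) > 0$, so that division is well--defined and
\begin{align*}
\frac{\rt{C}{\tau^2}(\sigma)}{\wlp{C}{\boldsymbol{1}}(\sigma)} ~\leq~ \frac{\widehat{X}(\sigma)}{\widehat{Y}(\sigma)}.
\end{align*}

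Next I would lower--bound the subtracted squared summand for arbitrary $k \in \Nats$. Since $F_\tau$ is continuous on $\Etau$ and $G$ is continuous on $\BE$ (a routine structural induction, with the continuity of $\rtsymbol$ spelled out in \cite{technicalReport}), Kleene's fixed--point theorem gives $\lfp F_\tau = \sup_{k} F_\tau^k(\boldsymbol{0})$ and $\gfp G = \inf_{k} G^k(\boldsymbol{1})$. Hence $F_\tau^k(\boldsymbol{0})(\sigma) \leq \rt{C}{\tau}(\sigma)$ and $G^k(\boldsymbol{1})(\sigma) \geq \wlp{C}{\boldsymbol{1}}(\sigma) > 0$, so that
\begin{align*}
0 ~\leq~ \frac{F_\tau^k(\boldsymbol{0})(\sigma)}{G^k(\boldsymbol{1})(\sigma)} ~\leq~ \frac{\rt{C}{\tau}(\sigma)}{\wlp{C}{\boldsymbol{1}}(\sigma)}.
\end{align*}
Squaring preserves the inequality since both sides are nonnegative reals, and combining this with the bound from the previous paragraph delivers the claimed inequality.

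The main point to get right is the continuity of $F_\tau$ and $G$, which is what licenses the ascending/descending Kleene iterations from $\boldsymbol{0}$ and $\boldsymbol{1}$; everything else is essentially a copy of the outcome case in \autoref{thm:outcome-upper}. A minor subtlety is the treatment of $+\infty$: if $\rt{C}{\tau^2}(\sigma) = \infty$, the invariant forces $\widehat{X}(\sigma) = \infty$ and the inequality becomes vacuous, while if $\rt{C}{\tau^2}(\sigma) < \infty$ (and therefore $\rt{C}{\tau}(\sigma) < \infty$ by Jensen), all arithmetic stays in $\PosReals$ and the monotonicity argument goes through unchanged.
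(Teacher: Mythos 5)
Your proof is correct and follows essentially the same route as the paper, which itself only remarks that the run--time case is ``analogous to the proof of \autoref{thm:outcome-upper}'': Park's Lemma bounds $\rt{C}{\tau^2}(\sigma)$ from above by $\widehat{X}(\sigma)$ and $\wlp{C}{\boldsymbol{1}}(\sigma)$ from below by $\widehat{Y}(\sigma)>0$, while the Kleene iterates $F_\tau^k(\boldsymbol{0})$ and $G^k(\boldsymbol{1})$ (licensed by the continuity of $\rtsymbol$ proved in the appendix) under--approximate the subtracted squared term. Your extra remarks on the $\infty$ case and Jensen are not in the paper's argument but are harmless and, if anything, slightly more careful.
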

The proof of \autoref{thm:runtime-upper} is analogous to the proof of \autoref{thm:outcome-upper}.
Again, since it is always possible to choose $\widehat{X} = \lfp F_{\tau^2}$ and $\widehat{Y} = \gfp G$, \autoref{thm:runtime-upper} is complete, i.e. there exist $\widehat{X} \in \Etau$ and $\widehat{Y} \in \BE$ such that
\belowdisplayskip=0pt
\begin{align*}
   \inf_{k \in \Nats}~ \frac{\widehat{X}(\sigma)}{\widehat{Y}(\sigma)} - \left( \frac{F_{\tau}^{k}(\boldsymbol{0})(\sigma)}{G^{k}(\boldsymbol{1})(\sigma)} \right)^2 ~=~ \RtVar{C}{\sigma}.
\end{align*}
\normalsize

\section{Conclusion}
We have studied the computational hardness of obtaining both upper and 
lower bounds on (co)va{\-}ri{\-}an{\-}ce of 
outcomes and established that this is $\Sigma_2^0$--complete. Thus neither upper 
nor lower bounds are computably enumerable.
Furthermore, we have established that deciding whether the (co)variance equals a given rational and deciding whether the covariance is infinite is $\Pi_2^0$--complete.

In the second part of the paper, we continued by presenting a sound and complete invariant--aided approach which allows to computably enumerate upper and lower bounds on (co)variances of while--loops, once appropriate loop--invariants are found. 
Finally, we have shown how this approach can be extended to reason about the 
variance of run--times.

\bibliographystyle{splncs03}
\bibliography{literature}

\clearpage
\appendix
\section{Appendix}
\subsection{Remaining Proof of \autoref{insigma2}}
\label{proof:insigma2}

For proving $\RCOVAR \in \Sigma_2^0$, consider that $(C,\, \sigma,\, f,\, g,\, q) \in \RCOVAR$ iff
\belowdisplayskip=0pt
\begin{align*}
	 \exists\, \delta > 0 ~ \exists\, \ell~ \forall\, k \colon~ \frac{\wpk{C}{k}{f \cdot g}(\sigma)}{\wlpk{C}{k}{\boldsymbol{1}}(\sigma)} - \frac{\wpk{C}{\ell}{f}(\sigma) \cdot \wpk{C}{\ell}{g}(\sigma)}{\wlpk{C}{\ell}{\boldsymbol{1}}(\sigma)^2} ~<~ q - \delta~.
\end{align*}
\normalsize
%
%
\qed

\subsection{Proof of \autoref{inpi2}}
\label{proof:inpi2}

For proving $\COVAR \in \Sigma_2^0$, consider that $(C,\, \sigma,\, f,\, g,\, q) \in \COVAR$ iff
\begin{align*}
	(C,\, \sigma,\, f,\, g,\, q) \not\in \LCOVAR \quad\text{and}\quad (C,\, \sigma,\, f,\, g,\, q) \not\in \RCOVAR~.
\end{align*}
Now by \autoref{insigma2} there must exist \emph{decidable} relations $L$ and $R$, such that $(C,\, \sigma,\, f,\, g,\, q) \in \COVAR$ iff
\begin{align*}
	&\neg \exists\, y_1 \, \forall\, y_2\colon (y_1,\, y_2,\, C, \sigma, f, g, q) \in L \:\wedge\: \neg \exists\, y_1' \, \forall\, y_2'\colon (y_1',\, y_2',\, C, \sigma, f, g, q) \in R\\
	\Longleftrightarrow~&\forall\, y_1 \, \exists\, y_2\colon (y_1,\, y_2,\, C, \sigma, f, g, q) \not\in L \:\wedge\: \forall\, y_1' \, \exists\, y_2'\colon (y_1',\, y_2',\, C, \sigma, f, g, q) \not\in R\\
	\Longleftrightarrow~&\forall\, y_1 \, \forall\, y_1' \, \exists\, y_2 \, \exists\, y_2'\colon (y_1,\, y_2,\, C, \sigma, f, g, q) \not\in L \:\wedge\: (y_1',\, y_2',\, C, \sigma, f, g, q) \not\in R~,
\end{align*}
which is a $\Pi_2$--formula.

For proving $\INFCOVAR \in \Pi_2^0$, consider that $(C,\, \sigma,\, f,\, g,\, q) \in \INFCOVAR$ iff
\begin{align*}
	\neg~\exists\, b_1 \, \exists\, b_2 \, \forall\, k \colon b_1 \leq \frac{\wpk{C}{k}{f \cdot g}(\sigma)}{\wlpk{C}{k}{\boldsymbol{1}}(\sigma)} - \frac{\wpk{C}{k}{f}(\sigma) \cdot \wpk{C}{k}{g}(\sigma)}{\wlpk{C}{k}{\boldsymbol{1}}(\sigma)^2} \leq b_2~.
\end{align*}
The above is the negation of a $\Sigma_2^0$--formula which is equivalent to a $\Pi_2^0$--formula.
%
\qed

\subsection{Remaining Proof of \autoref{sigma2hard}}
\label{proof:sigma2hard}

For proving the $\Sigma_2^0$--hardness of $\RCOVAR$, we reduce the $\Sigma_2^0$--complete $\cAST$ to $\RCOVAR$.
For that, consider the reduction function $r_{\mathcal{R}}(C,\, \sigma) = \left(C',\, \sigma,\, v,\, v,\, \frac 1 4\right)$,  with $C'$ given by
\begin{align*}
	\COMPOSE{\ASSIGN v 0}{\COMPOSE{\PCHOICE{\ABORT}{\nicefrac 1 2}{C}}{\ASSIGN v 1}}~,
\end{align*}
where variable $v$ does not occur in $C$.
We have
\begin{align*}
	\Cov{C'}{\sigma}{v}{v} ~=~  \wp{C'}{\boldsymbol{1}}(\sigma) - \wp{C'}{\boldsymbol{1}}(\sigma)^2~.
\end{align*}
Recall that $\wp{C'}{\boldsymbol{1}}(\sigma)$ is exactly the probability of $C'$ terminating on input $\sigma$.
By reconsidering \autoref{fig:parabola}, we can see that $\Cov{C'}{\sigma}{v}{v} = \wp{C'}{\boldsymbol{1}}(\sigma) - \wp{C'}{\boldsymbol{1}}(\sigma)^2 < \frac{1}{4}$ holds iff $C'$ does not terminate with probability $\nicefrac 1 2$.
Since by construction $C'$ terminates with a probability of at most $\nicefrac 1 2$, it follows that $\Cov{C'}{\sigma}{v}{v} < \frac 1 4$ holds iff $C'$ terminates with probability less than 1, which is the case iff $C$ terminates with probability less than 1.
Thus $r_{\mathcal{R}}(C,\, \sigma) = \left(C',\, \sigma,\, v,\, v,\, \frac 1 4\right) \in \RCOVAR$ iff $(C,\, \sigma) \in \cAST$ and therefore we have $r_{\mathcal{R}}\colon \cAST\allowbreak \leqm \RCOVAR$.
Since $\cAST$ is $\Sigma_2^0$--complete, if follows that $\RCOVAR$ is $\Sigma_2^0$--hard.
%
\qed

\subsection{Proof of \autoref{pi2hard1}}
\label{proof:pi2hard1}

For proving the $\Pi_2^0$--hardness of $\COVAR$, we reduce the $\Pi_2^0$--complete $\AST$ to $\COVAR$.
For that, consider the reduction function $r_{\mathcal{V}}(C,\, \sigma) = \left(C',\, \sigma,\, v,\, v,\, \frac 1 4\right)$,  with $C'$ given by
\begin{align*}
	\COMPOSE{\ASSIGN v 0}{\COMPOSE{\PCHOICE{\ABORT}{\nicefrac 1 2}{C}}{\ASSIGN v 1}}~,
\end{align*}
where variable $v$ does not occur in $C$.
Again, we have
\begin{align*}
	\Cov{C'}{\sigma}{v}{v} ~=~  \wp{C'}{\boldsymbol{1}}(\sigma) - \wp{C'}{\boldsymbol{1}}(\sigma)^2
\end{align*}
(cf.\ proof of \autoref{sigma2hard}).
A plot of the latter is given in \autoref{fig:parabola}.
Recall that $\wp{C'}{\boldsymbol{1}}(\sigma)$ is exactly the probability of $C'$ terminating on input $\sigma$.
We can see that $\Cov{C'}{\sigma}{v}{v} = \wp{C'}{\boldsymbol{1}}(\sigma) - \wp{C'}{\boldsymbol{1}}(\sigma)^2 = \frac{1}{4}$ iff $C'$ terminates with probability $\nicefrac 1 2$.
Since $C'$ terminates at most with probability $\nicefrac 1 2$, we obtain that $\Cov{C'}{\sigma}{v}{v} = \frac 1 4$ iff $C'$ terminates with probability $\nicefrac 1 2$, which is the case iff $C$ terminates almost--surely.
Thus $r_{\mathcal{V}}(C,\, \sigma) = \left(C',\, \sigma,\, v,\, v,\, \frac 1 4\right) \in \COVAR$ iff $(C,\, \sigma) \in \AST$ and therefore $r_{\mathcal{V}}\colon \AST \leqm \COVAR$.
Since $\AST$ is $\Pi_2^0$--complete, we obtain that $\COVAR$ is $\Pi_2^0$--hard.
%
\qed

\subsection{Proof of \autoref{thm:outcome-upper}}
\label{proof:outcome-upper}

By the precondition of \autoref{thm:outcome-upper} and by \autoref{thm:parkslemma}, $\wp{C}{f \cdot g}(\sigma) \leq \widehat{X}(\sigma)$ and $0 < \widehat{Y}(\sigma) \leq \wlp{C}{\boldsymbol{1}}(\sigma)$.
Therefore
\begin{align*}
	\Exp{C}{\sigma}{f \cdot g} ~=~ \frac{\wp{C}{f \cdot g}(\sigma)}{\wlp{C}{\boldsymbol{1}}(\sigma)} ~\leq~ \frac{\widehat{X}(\sigma)}{\widehat{Y}(\sigma)}~.
\end{align*}
Furthermore, for each $k \in \Nats$ we have
\begin{align*}
	\wpk{C}{k}{f}(\sigma) ~&\leq~ \wp{C}{f}(\sigma)~,\\
	\wpk{C}{k}{g}(\sigma) ~&\leq~ \wp{C}{g}(\sigma)~,\quad\text{ and}\\
	\wpk{C}{k}{\boldsymbol{1}}(\sigma) ~&\geq~ \wlp{C}{\boldsymbol{1}}(\sigma)~,
\end{align*}
which all together yields
\begin{align*}
	&\frac{\wpk{C}{k}{f}(\sigma) \cdot \wpk{C}{k}{g}(\sigma)}{\wlpk{C}{k}{\boldsymbol{1}}(\sigma)^2}\\
	& \qquad~\leq~ \frac{\wp{C}{f}(\sigma) \cdot \wp{C}{g}(\sigma)}{\wlp{C}{\boldsymbol{1}}(\sigma)^2} ~=~ \Exp{C}{\sigma}{f} \cdot \Exp{C}{\sigma}{g}
\end{align*}
An over--approximation of $\Exp{C}{\sigma}{f \cdot g}$ subtracted by an under--approximation of $\Exp{C}{\sigma}{f} \cdot \Exp{C}{\sigma}{g}$ yields then an over--approximation of $\Exp{C}{\sigma}{f \cdot g} -\Exp{C}{\sigma}{f} \cdot \Exp{C}{\sigma}{g} = \Cov{C}{\sigma}{f}{g}$.
\qed

\subsection{Invariant--Aided Under--Approximation of Cova{\-}ri{\-}an{\-}ces}
\label{app:outcome-lower}
\begin{theorem}[Invariant--Aided Under--Approximation of Cova{\-}ri{\-}an{\-}ces]
\label{thm:outcome-lower}
Let $C = \WHILE{B}{C'}$, $\sigma \in \States$, $f,g\in \E$, $F_h(X) = [\neg 
B]\cdot h + [B] \cdot \wp{C'}{X}$, and $G(Y) = [\neg B] + [B] \cdot 
\wlp{C'}{Y}$.
Furthermore, let $\widehat{X}_f, \widehat{X}_g \in \E$ and $\widehat{Y} \in \BE$, such that $F_f(\widehat{X}_f) \preceq \widehat{X}_f$, $F_g(\widehat{X}_g) \preceq \widehat{X}_g$, and $\widehat{Y} \preceq G(\widehat{Y})$.
Then for all $k \in \Nats$ it holds that
\begin{align*}
	\frac{\wpk{C}{k}{f \cdot g}(\sigma)}{\wlpk{C}{k}{\boldsymbol{1}}(\sigma)} - \frac{\widehat{X}_f(\sigma) \cdot \widehat{X}_g(\sigma)}{\widehat{Y}(\sigma)^2} ~\leq~ \Cov{C}{\sigma}{f}{g}~.
\end{align*}
\end{theorem}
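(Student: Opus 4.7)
The plan is to mirror the proof of \autoref{thm:outcome-upper}, but swap which side of each inequality is approximated. Recall
\[
  \Cov{C}{\sigma}{f}{g} \;=\; \frac{\wp{C}{f\cdot g}(\sigma)}{\wlp{C}{\boldsymbol{1}}(\sigma)} \;-\; \frac{\wp{C}{f}(\sigma)\cdot\wp{C}{g}(\sigma)}{\wlp{C}{\boldsymbol{1}}(\sigma)^2}.
\]
So to under-approximate the covariance, I need to (i) under-approximate $\Exp{C}{\sigma}{f\cdot g}$ and (ii) over-approximate $\Exp{C}{\sigma}{f}\cdot\Exp{C}{\sigma}{g}$, then subtract. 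Each piece in turn will be handled by either Park's Lemma or by the ascending/descending effectively-computable approximation sequences $\wpk{C}{k}{\cdot}$ and $\wlpk{C}{k}{\boldsymbol{1}}$ already introduced in the paper.

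For (i), I would use that $\wpk{C}{k}{f\cdot g}(\sigma) \le \wp{C}{f\cdot g}(\sigma)$ and $\wlpk{C}{k}{\boldsymbol{1}}(\sigma) \ge \wlp{C}{\boldsymbol{1}}(\sigma)$ for every $k \in \Nats$, so that
\[
  \frac{\wpk{C}{k}{f\cdot g}(\sigma)}{\wlpk{C}{k}{\boldsymbol{1}}(\sigma)} \;\le\; \Exp{C}{\sigma}{f\cdot g}.
\]
For (ii), Park's Lemma applied to the continuous operators $F_f, F_g$ and the assumptions $F_f(\widehat{X}_f) \preceq \widehat{X}_f$, $F_g(\widehat{X}_g) \preceq \widehat{X}_g$ gives $\wp{C}{f} = \lfp F_f \preceq \widehat{X}_f$ and $\wp{C}{g} = \lfp F_g \preceq \widehat{X}_g$, while the dual direction of Park's Lemma applied to the continuous $G$ and the assumption $\widehat{Y} \preceq G(\widehat{Y})$ yields $\widehat{Y} \preceq \gfp G = \wlp{C}{\boldsymbol{1}}$. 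Pointwise at $\sigma$ this gives
\[
  \frac{\widehat{X}_f(\sigma)\cdot \widehat{X}_g(\sigma)}{\widehat{Y}(\sigma)^2} \;\ge\; \frac{\wp{C}{f}(\sigma)\cdot\wp{C}{g}(\sigma)}{\wlp{C}{\boldsymbol{1}}(\sigma)^2} \;=\; \Exp{C}{\sigma}{f}\cdot\Exp{C}{\sigma}{g},
\]
using that all four quantities are non-negative so that enlarging the (non-negative) numerator and shrinking the (positive) denominator both increase the fraction.

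Subtracting the second inequality from the first delivers the claimed bound for every $k$. The main subtlety, as in \autoref{thm:outcome-upper}, is the edge case where the denominators vanish: if $\widehat{Y}(\sigma)=0$ then the subtracted term must be read under the convention $\tfrac{0}{0}=0$, and one should either tacitly assume $\widehat{Y}(\sigma)>0$ (as is done in the dual over-approximation theorem) or observe that the inequality is then vacuous because $\wlp{C}{\boldsymbol{1}}(\sigma)=0$ forces the conditional expectations in the definition of $\Cov{C}{\sigma}{f}{g}$ to be $0$ by the same convention. Apart from that bookkeeping, the proof is a direct application of Park's Lemma and monotonicity of the bounding sequences, exactly as in the over-approximation case.
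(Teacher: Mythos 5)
Your proof is correct and is precisely the argument the paper intends: the paper's own proof of this theorem is just the one--line remark ``analogous to \autoref{thm:outcome-upper}'', and your dualization---Park's Lemma to over--approximate $\wp{C}{f}$, $\wp{C}{g}$ and under--approximate $\wlp{C}{\boldsymbol{1}}$, combined with the monotone sequences $\wpsymbol^k$, $\wlpsymbol^k$ for the $f\cdot g$ term---is exactly that analogous argument. One tiny quibble: in your edge--case discussion, $\widehat{Y}(\sigma)=0$ does \emph{not} force $\wlp{C}{\boldsymbol{1}}(\sigma)=0$ (only $\widehat{Y}\preceq\wlp{C}{\boldsymbol{1}}$ holds), but the inequality still survives that case because either the subtracted term is $+\infty$ or, under the $\tfrac{0}{0}=0$ convention, $\widehat{X}_f(\sigma)=0$ or $\widehat{X}_g(\sigma)=0$ forces the true product term to vanish as well.
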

\begin{proof}
Analogous to the proof of \autoref{thm:outcome-upper}.
\end{proof}
\begin{corollary}[Completeness of \autoref{thm:outcome-lower}]
Let $C = \WHILE{B}{C'}$, $\sigma \in \States$, $f,g\in \E$.
Then there exist $\widehat{X}_f, \widehat{X}_g \in \E$ and $\widehat{Y} \in \BE$, such that
\begin{align*}
	\sup_{k \in \Nats}~ \frac{\wpk{C}{k}{f \cdot g}(\sigma)}{\wlpk{C}{k}{\boldsymbol{1}}(\sigma)} - \frac{\widehat{X}_f(\sigma) \cdot \widehat{X}_g(\sigma)}{\widehat{Y}(\sigma)^2} ~=~ \Cov{C}{\sigma}{f}{g}~.
\end{align*}
\end{corollary}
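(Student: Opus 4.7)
The plan is to mimic the proof sketch sketched for Corollary \ref{thm:outcome-upper-completeness}: exhibit explicit witnesses by taking the exact fixed points, check that they satisfy the hypotheses of Theorem \ref{thm:outcome-lower}, and then analyse the $k$--indexed quantity to show the supremum really attains $\Cov{C}{\sigma}{f}{g}$.

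First I would choose
$\widehat{X}_f = \wp{C}{f}$, $\widehat{X}_g = \wp{C}{g}$, and $\widehat{Y} = \wlp{C}{\boldsymbol{1}}$,
i.e.\ $\widehat{X}_f = \lfp F_f$, $\widehat{X}_g = \lfp F_g$, and $\widehat{Y} = \gfp G$, which by the defining semantics of while--loops (\autoref{table:transrules}) satisfy $F_f(\widehat{X}_f) = \widehat{X}_f$, $F_g(\widehat{X}_g) = \widehat{X}_g$, and $\widehat{Y} = G(\widehat{Y})$. In particular the hypotheses of Theorem \ref{thm:outcome-lower} hold. For these choices, the subtracted term collapses to
\[
\frac{\widehat{X}_f(\sigma)\cdot \widehat{X}_g(\sigma)}{\widehat{Y}(\sigma)^2} ~=~ \frac{\wp{C}{f}(\sigma)\cdot \wp{C}{g}(\sigma)}{\wlp{C}{\boldsymbol{1}}(\sigma)^2} ~=~ \Exp{C}{\sigma}{f}\cdot \Exp{C}{\sigma}{g},
\]
using the $\nicefrac{0}{0}=0$ convention in the degenerate case $\wlp{C}{\boldsymbol{1}}(\sigma) = 0$ (the only case in which the denominator vanishes, cf.\ \autoref{def:cond-ev}).

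Next I would invoke the two monotonicity facts that were spelled out right before the proof of Lemma \ref{insigma2}: $\wpk{C}{k}{f\cdot g}(\sigma)$ is ascending in $k$ with supremum $\wp{C}{f\cdot g}(\sigma)$, and $\wlpk{C}{k}{\boldsymbol{1}}(\sigma)$ is descending in $k$ with infimum $\wlp{C}{\boldsymbol{1}}(\sigma)$. Hence
\[
\frac{\wpk{C}{k}{f\cdot g}(\sigma)}{\wlpk{C}{k}{\boldsymbol{1}}(\sigma)}
\]
is itself ascending in $k$ and converges to $\nicefrac{\wp{C}{f\cdot g}(\sigma)}{\wlp{C}{\boldsymbol{1}}(\sigma)} = \Exp{C}{\sigma}{f\cdot g}$. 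Since the second term is constant in $k$, it can be pulled out of the supremum, yielding
\[
\sup_{k\in\Nats}\,\Bigl(\tfrac{\wpk{C}{k}{f\cdot g}(\sigma)}{\wlpk{C}{k}{\boldsymbol{1}}(\sigma)} - \tfrac{\widehat{X}_f(\sigma)\widehat{X}_g(\sigma)}{\widehat{Y}(\sigma)^2}\Bigr) ~=~ \Exp{C}{\sigma}{f\cdot g} - \Exp{C}{\sigma}{f}\cdot\Exp{C}{\sigma}{g} ~=~ \Cov{C}{\sigma}{f}{g},
\]
which is the claim.

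The only real obstacle is the boundary case where either $\wlp{C}{\boldsymbol{1}}(\sigma) = 0$ or $\wp{C}{f\cdot g}(\sigma) = \infty$. In the former case almost every trace violates an observation and, via the $\nicefrac{0}{0}=0$ convention together with $\wpk{C}{k}{f\cdot g}(\sigma) = 0$ for every $k$, both sides of the desired equality reduce to $0$; in the latter case the ascending sequence of quotients is unbounded and both sides are $+\infty$. These two degenerate sub--cases merely need to be acknowledged; the bulk of the argument is the quotient--convergence step, which itself is an elementary real analysis fact about monotone numerator and monotone denominator with a positive limit.
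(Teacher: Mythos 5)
Your proof is correct and takes essentially the same route the paper intends: the paper states this corollary without an explicit proof, but justifies the analogous \autoref{thm:outcome-upper-completeness} by precisely the observation you use, namely that one may choose the exact fixed points $\widehat{X}_f = \lfp F_f$, $\widehat{X}_g = \lfp F_g$, $\widehat{Y} = \gfp G$ as invariants, after which the remaining term converges by the monotone--quotient argument already spelled out before \autoref{insigma2}. Your treatment of the degenerate cases $\wlp{C}{\boldsymbol{1}}(\sigma) = 0$ and $\wp{C}{f \cdot g}(\sigma) = \infty$ is consistent with the paper's conventions, so nothing is missing.
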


\subsection{Proof of \autoref{thm:operational:correspondence}} \label{app:operational:correspondence}

The proof relies on several auxiliary results which are presented first.

\begin{lemma} \label{thm:operational:composition}
 Let $C_1,C_2 \in \PProgs$, $f \in \E$, and $\sigma \in \Statestau$. Then 
 \begin{align*}
   \ExpRew{\OPMC{\COMPOSE{C_1}{C_2}}{\sigma}{f}}{\MCREACH{\MCSTATE{\SINK}}} ~=~ \ExpRew{\OPMC{C_1}{\sigma}{g(C_2,f)}}{\MCREACH{\MCSTATE{\SINK}}},
 \end{align*}
 where
 \begin{align*}
   g(C_2,f) ~=~ \ExpRew{\lambda \rho \,.\, \OPMC{C_2}{\rho}{f}}{\MCREACH{\MCSTATE{\SINK}}}.
 \end{align*}
\end{lemma}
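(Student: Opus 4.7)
The approach is to exploit the compositional structure of the operational MC encoded by the rules \textit{seq-1} and \textit{seq-2}: executing $\COMPOSE{C_1}{C_2}$ from $\sigma$ first simulates $C_1$ from $\sigma$, and upon successful termination in some state $\sigma'$, continues with $C_2$ from $\sigma'$. First I would observe that every path in $\OPMC{\COMPOSE{C_1}{C_2}}{\sigma}{f}$ reaching $\MCSTATE{\SINK}$ is of exactly one of two kinds: (i)~a path that visits a state $\MCSTATE{\COMPOSE{\TERM}{C_2}, \sigma'}$ for a unique $\sigma' \in \Statestau$, which by inversion on \textit{seq-1} and \textit{seq-2} decomposes canonically as a $C_1$-prefix ending in $\MCSTATE{\TERM,\sigma'}$, a single deterministic \textit{seq-2} transition, and a $C_2$-suffix from $\MCSTATE{C_2,\sigma'}$ to $\MCSTATE{\SINK}$; or (ii)~a path that reaches $\MCSTATE{\OBSERVEFAIL}$ before $C_1$ terminates and hence collects no reward.

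Next I would factor the probability of each type-(i) path as the product of the probability of its $C_1$-prefix in the MC $\OPMC{C_1}{\sigma}{\cdot}$ and the probability of its $C_2$-suffix in $\OPMC{C_2}{\sigma'}{f}$ (the \textit{seq-2} transition carrying probability $1$). Grouping type-(i) paths according to the witnessing state $\sigma'$ and applying Tonelli's theorem (admissible because $\Statestau$ is countable and all rewards are non-negative) rewrites the left-hand side as
\begin{align*}
  \sum_{\sigma' \in \Statestau} p(\sigma') \cdot \ExpRew{\OPMC{C_2}{\sigma'}{f}}{\MCREACH{\MCSTATE{\SINK}}},
\end{align*}
where $p(\sigma')$ is the total probability that $\OPMC{C_1}{\sigma}{\cdot}$ reaches $\MCSTATE{\TERM,\sigma'}$. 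The second factor is exactly $g(C_2,f)(\sigma')$, while $p(\sigma')$ is also the probability mass with which $\OPMC{C_1}{\sigma}{g(C_2,f)}$ accrues reward at $\MCSTATE{\TERM,\sigma'}$, since that MC places positive rewards precisely on states of this shape and the reward value there is $g(C_2,f)(\sigma')$. Summing over $\sigma'$ therefore yields $\ExpRew{\OPMC{C_1}{\sigma}{g(C_2,f)}}{\MCREACH{\MCSTATE{\SINK}}}$.

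The main obstacle I anticipate is the handling of the $\infty$-convention in the definition of $\ExpRew{}{\cdot}$, which fires whenever less than full probability mass reaches $\MCSTATE{\SINK}$. I would argue that the total $\MCSTATE{\SINK}$-mass agrees on the two sides: both MCs lose mass only to divergence in $C_1$ or to divergence in $C_2$ from some state $\sigma'$ reachable as a terminating outcome of $C_1$, and these two losses coincide by the very decomposition used above, so the two sides evaluate to $\infty$ in the same cases. In particular, if $g(C_2,f)(\sigma') = \infty$ for some $\sigma'$ reached with positive probability, then $\OPMC{C_2}{\sigma'}{f}$ has positive probability of not reaching $\MCSTATE{\SINK}$, which on the left-hand side manifests as a positive-probability set of non-$\MCSTATE{\SINK}$ runs of $\OPMC{\COMPOSE{C_1}{C_2}}{\sigma}{f}$, forcing the left-hand side to be $\infty$ as well. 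A smaller bookkeeping issue is verifying that the path decomposition in step~(i) is a genuine bijection onto pairs (prefix, suffix); this follows directly from the deterministic shape of rules \textit{seq-1} and \textit{seq-2}, which never alter the underlying program state during the shuffling between the two subprograms.
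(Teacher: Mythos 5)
Your proof follows the same route as the paper's: decompose every reward--carrying path of $\OPMC{\COMPOSE{C_1}{C_2}}{\sigma}{f}$ at its unique visit to $\MCSTATE{\COMPOSE{\TERM}{C_2},\sigma'}$, regroup by $\sigma'$, and recognize the tail expected reward as the reward $g(C_2,f)(\sigma')$ assigned in $\OPMC{C_1}{\sigma}{g(C_2,f)}$ --- the paper merely sketches this with a picture, whereas you additionally justify the regrouping via Tonelli and address the $\infty$--convention, which the paper leaves implicit. This is correct; the only caveat is that the sink--masses of the two chains need not literally agree (the right--hand chain does not simulate $C_2$'s divergence), but your ``in particular'' clause already handles exactly that case, so no gap remains.
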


\begin{proof}
 The MC $\OPMC{C}{\sigma}{f}$ is of the following form:
\begin{center}
 \begin{tikzpicture}[->,>=stealth',shorten >=1pt,node distance=2.7cm,semithick,minimum size=1cm]
\tikzstyle{every state}=[draw=none]

   \node [state, initial, initial text=,] (init) {$\MCSTATE{\COMPOSE{C_1}{C_2}, \sigma}$};  
   \node (initLabel) [node distance=0.5cm,gray,below of=init] {$0$};
   
   \node [state,] (termp) [right of = init] {$\MCSTATE{\COMPOSE{\TERM}{C_2}, \sigma'}$};
   \node (termpLabel) [node distance=0.5cm,gray,below of=termp] {$0$};
   
   \node [state,] (termp') [below of = termp] {$\MCSTATE{\COMPOSE{\TERM}{C_2}, \sigma''}$};
   \node (termp'Label) [node distance=0.5cm,gray,below of=termp'] {$0$};
   
   \node [state,] (p) [right of = termp] {$\MCSTATE{C_2, \sigma'}$};
   \node (pLabel) [node distance=0.5cm,gray,below of=p] {$0$};
   
   \node [state,] (q) [right of = termp'] {$\MCSTATE{C_2, \sigma''}$};
   \node (qLabel) [node distance=0.5cm,gray,below of=q] {$0$};
   
   \node [] (dummy) [below= 0.8 cm of init] {$\vdots$};
   
    \node [] (dummy1) [on grid, right =2cm of p] {$\ldots$};
    \node [] (dummy2) [on grid, right =2cm of q] {$\ldots$};
  \path [] 
      (init) edge [decorate,decoration={snake, post length=2mm}] (termp)
      (init) edge [decorate,decoration={snake, post length=2mm}] (dummy)
      (init) edge [decorate,decoration={snake, post length=2mm}] (termp')
      (termp) edge [] (p)
      (termp') edge [] (q)
      (p) edge [decorate,decoration={snake, post length=2mm}] (dummy1)
      (q) edge [decorate,decoration={snake, post length=2mm}] (dummy2)
  ;
\end{tikzpicture}
\end{center}
Hence, every path starting in $\MCSTATE{\COMPOSE{C_1}{C_2},\sigma}$ either eventually reaches a state $\MCSTATE{\COMPOSE{\TERM}{C_2}, \sigma'}$ for some $\sigma' \in \Statestau$ and then immediately reaches $\MCSTATE{C_2, \sigma'}$ or diverges, i.e. never reaches $\MCSTATE{\SINK}$. Since $\MCSTATE{C_2, \sigma'}$ is the initial state of MC $\OPMC{C_2}{\sigma'}{f}$, we can transform $\OPMC{\COMPOSE{C_1}{C_2}}{\sigma}{f}$ into an MC $\OPMC{C_1}{\sigma}{g(C_2,f)}$ having the same expected reward by setting 
\begin{align*}
  g(C_2,f) ~=~ \ExpRew{\lambda \rho \,.\, \OPMC{C_2}{\rho}{f}}{\MCREACH{\MCSTATE{\SINK}}}.
\end{align*}
\end{proof}

%

\begin{definition}[Bounded $\WHILESYM$ Loops]
 Let $C \in \PProgs$, $f \in \E$, and $k \in \Nats$. Then
 \begin{align*}
    \BOUNDEDWHILE{0}{B}{C} ~=~ & \HALT, ~\text{and} \\
    \BOUNDEDWHILE{k+1}{B}{C} ~=~ & \ITE{B}{\COMPOSE{C}{\BOUNDEDWHILE{k}{B}{C}}}{\EMPTY} \\
 \end{align*}
\end{definition}

To improve readability, let $C' = \WHILE{B}{C}$ and $C_k = \BOUNDEDWHILE{k}{B}{C}$ for the remainder of this section.

\begin{lemma} \label{thm:operational:while-approx}
  Let $C \in \PProgs$ and $f \in \E$. Then
  \begin{align*}
    \sup_{k \in \Nats} \rt{\BOUNDEDWHILE{k}{B}{C}}{f} ~=~ \rt{\WHILE{B}{C}}{f}.
  \end{align*}
\end{lemma}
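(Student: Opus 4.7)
The plan is to mirror, for $\rtsymbol$ of bounded loops, the standard Kleene fixed point argument that underlies the semantics of $\WHILESYM$. Define the functional
\begin{align*}
  \Phi_f(X) ~=~ \bigl([\neg B]\cdot f \:+\: [B]\cdot\rt{C}{X}\bigr)\subst{\tau}{\tau+1},
\end{align*}
so that by the table of transformer rules we have $\rt{C'}{f} = \rt{\WHILE{B}{C}}{f} = \lfp \Phi_f$. The goal is then to show that each $\rt{C_k}{f}$ coincides with the $k$th Kleene iterate $\Phi_f^{k}(\boldsymbol{0})$ of $\Phi_f$ on $\boldsymbol{0}$; the conclusion will follow because $\Phi_f$ is continuous (continuity of $\rtsymbol$ is established in~\cite{technicalReport}), so $\sup_{k\in\Nats}\Phi_f^{k}(\boldsymbol{0}) = \lfp \Phi_f$.

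The first main step is a straightforward induction on $k$ showing $\rt{C_k}{f} = \Phi_f^{k}(\boldsymbol{0})$. For the base case, $\rt{C_0}{f} = \rt{\HALT}{f} = \boldsymbol{0} = \Phi_f^{0}(\boldsymbol{0})$. For the step, I unfold
\begin{align*}
  \rt{C_{k+1}}{f} ~&=~ \rt{\ITE{B}{\COMPOSE{C}{C_k}}{\EMPTY}}{f} \\
  ~&=~ \bigl([B]\cdot\rt{\COMPOSE{C}{C_k}}{f} \:+\: [\neg B]\cdot\rt{\EMPTY}{f}\bigr)\subst{\tau}{\tau+1} \\
  ~&=~ \bigl([B]\cdot\rt{C}{\rt{C_k}{f}} \:+\: [\neg B]\cdot f\bigr)\subst{\tau}{\tau+1} \\
  ~&=~ \Phi_f\bigl(\rt{C_k}{f}\bigr),
\end{align*}
using the rules for $\ITE{\cdot}{\cdot}{\cdot}$, $\COMPOSE{\cdot}{\cdot}$, and $\EMPTY$ from \autoref{table:transrules}. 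The induction hypothesis then yields $\rt{C_{k+1}}{f} = \Phi_f(\Phi_f^{k}(\boldsymbol{0})) = \Phi_f^{k+1}(\boldsymbol{0})$.

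The second main step is to combine this identity with Kleene's fixed point theorem. Since $\rtsymbol[C]$ is continuous in its second argument (by the continuity result for $\rtsymbol$ established in~\cite{technicalReport}) and pointwise multiplication by an indicator, pointwise addition, and the syntactic substitution $\subst{\tau}{\tau+1}$ are all continuous on the $\omega$-cpo $(\Etau,\preceq)$, the composed functional $\Phi_f$ is continuous. Therefore
\begin{align*}
  \rt{\WHILE{B}{C}}{f} ~=~ \lfp\Phi_f ~=~ \sup_{k\in\Nats} \Phi_f^{k}(\boldsymbol{0}) ~=~ \sup_{k\in\Nats} \rt{\BOUNDEDWHILE{k}{B}{C}}{f},
\end{align*}
which is the desired equality.

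The only genuinely delicate step is the continuity of $\Phi_f$, which rests on continuity of $\rtsymbol[C]$ itself; since that fact is packaged as an auxiliary lemma in the extended version and the remaining operations appearing in $\Phi_f$ are pointwise and continuous, everything else in the proof is a routine unfolding of the transformer rules and Kleene iteration.
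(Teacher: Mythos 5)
Your proposal is correct and follows essentially the same route as the paper's proof: the identical functional $F_f(X) = ([\neg B]\cdot f + [B]\cdot\rt{C}{X})\subst{\tau}{\tau+1}$, the same induction showing $\rt{\BOUNDEDWHILE{k}{B}{C}}{f} = F_f^{k}(\boldsymbol{0})$, and the same appeal to Kleene's fixed point theorem via continuity of $\rtsymbol$. The only difference is cosmetic (you prove the induction before invoking Kleene, and you spell out the continuity of the composed functional slightly more explicitly), so nothing further is needed.
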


\begin{proof}
    Let $F_{f}(X) ~=~ \big([\neg B] \cdot f + [B] \cdot \rt{C}{X}\big)\subst{\tau}{\tau + 1}$. Assume, for the moment, that for each $k \in \Nats$, we have
    $\rt{C_k}{f} = F^{k}_{f}(\mathsf{0})$. Then, using Kleene's Fixed Point Theorem, we can establish that
    \begin{align*}
       \sup_{k \in \Nats} \rt{C_k}{f} ~=~ \sup_{k \in \Nats} F^{k}_{f}(\mathsf{0}) ~=~ \lfp X.~F_f(X) ~=~ \rt{C'}{f}.
    \end{align*}
    Hence, it suffices to show $\rt{C_k}{f} = F^{k}_{f}(\mathsf{0})$ for each $k \in \Nats$. We proceed by induction on $k$.
    \paragraph{I.B. $k = 0$}
    \begin{align*}
       \rt{C_0}(f) ~=~ \rt{\HALT}{f} ~=~ \mathsf{0} ~=~ F^{0}_{f}(\mathsf{0}).
    \end{align*}
    
    \paragraph{I.H.} Assume that $\rt{C_k}{f} = F^{k}_{f}(\mathsf{0})$ holds for an arbitrary, fixed $k \in \Nats$.
    
    \paragraph{I.S. $k \mapsto k+1$}
    \begin{align*}
           & \rt{C_{k+1}}{f} \\
       ~=~ & \rt{\ITE{B}{\COMPOSE{C}{C_k}}{\EMPTY}}{f} \tag{Def. $C_{k+1}$} \\
       ~=~ & \big([\neg B] \cdot f + [B] \cdot \rt{C}{\rt{C_k}{f}}\big)\subst{\tau}{\tau + 1} \tag{\autoref{table:transrules}} \\
       ~=~ & \big([\neg B] \cdot f + [B] \cdot \rt{C}{F^{k}_{f}(\mathsf{0})}\big)\subst{\tau}{\tau + 1} \tag{I.H.} \\
       ~=~ & F^{k+1}_{f}(\mathsf{0}) \tag{Def. $F_{f}$}
    \end{align*}
\end{proof}

\begin{lemma} \label{thm:operational:conditional}
 Let $C_1,C_2 \in \PProgs$, $f \in \E$ and $\sigma \in \Statestau$. Then
 \begin{align*}
   & \ExpRew{\OPMC{\ITE{B}{C_1}{C_2}}{\sigma}{f}}{\MCREACH{\MCSTATE{\SINK}}} \\
   ~=~ & [B](\sigma) \cdot \ExpRew{\OPMC{C_1}{\sigma\subst{\tau}{\tau+1}}{f}}{\MCREACH{\MCSTATE{\SINK}}} \\
       & + [\neg B](\sigma) \cdot \ExpRew{\OPMC{C_2}{\sigma\subst{\tau}{\tau+1}}{f}}{\MCREACH{\MCSTATE{\SINK}}}.
 \end{align*}
\end{lemma}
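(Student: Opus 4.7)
The statement is essentially a one--step unfolding fact for the MC constructed in \autoref{def:operational}, so the plan is to reduce it to inspection of the transition rules \textsf{[if-true]} and \textsf{[if-false]} together with an observation about the reward of the initial state.

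First, I would fix $\sigma \in \Statestau$ and perform a case split on $[B](\sigma) \in \{0,1\}$. In the case $[B](\sigma) = 1$, rule \textsf{[if-true]} gives that the initial state $s_I = \MCSTATE{\ITE{B}{C_1}{C_2}, \sigma}$ has exactly one outgoing transition, namely $\MCTRANSITION{\ITE{B}{C_1}{C_2}, \sigma}{C_1, \sigma\subst{\tau}{\tau+1}}{1}$. Moreover, by the definition of $\MCREW$ in \autoref{def:operational}, $s_I$ is not of the form $\MCSTATE{\TERM, \rho}$ and therefore carries reward $0$. Consequently, every path in $\OPMC{\ITE{B}{C_1}{C_2}}{\sigma}{f}$ starting in $s_I$ has the shape $s_I \cdot \pi'$, where $\pi'$ is a path starting in $\MCSTATE{C_1, \sigma\subst{\tau}{\tau+1}}$ in the MC $\OPMC{C_1}{\sigma\subst{\tau}{\tau+1}}{f}$. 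Prepending $s_I$ multiplies the path probability by $1$ and adds $0$ to its cumulative reward, and $s_I \cdot \pi'$ reaches $\SINK$ iff $\pi'$ does. Hence the expected rewards agree:
\begin{align*}
  \ExpRew{\OPMC{\ITE{B}{C_1}{C_2}}{\sigma}{f}}{\MCREACH{\MCSTATE{\SINK}}}
  ~=~ \ExpRew{\OPMC{C_1}{\sigma\subst{\tau}{\tau+1}}{f}}{\MCREACH{\MCSTATE{\SINK}}}.
\end{align*}
Since in this case $[B](\sigma) = 1$ and $[\neg B](\sigma) = 0$, this matches the right--hand side of the lemma.

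The case $[B](\sigma) = 0$ is completely symmetric: rule \textsf{[if-false]} yields a single transition to $\MCSTATE{C_2, \sigma\subst{\tau}{\tau+1}}$ with probability $1$, and the same path--correspondence argument gives equality with $\ExpRew{\OPMC{C_2}{\sigma\subst{\tau}{\tau+1}}{f}}{\MCREACH{\MCSTATE{\SINK}}}$, which again matches the right--hand side because then $[\neg B](\sigma) = 1$ and $[B](\sigma) = 0$. Combining both cases concludes the proof.

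There is no real obstacle here; the only point worth being careful about is the definition of ``cumulative reward of a path'' on paths of the form $s_I \cdot \pi'$ and the formal justification that the bijection $\pi' \mapsto s_I \cdot \pi'$ preserves probabilities as well as the event $\MCREACH{\MCSTATE{\SINK}}$. This can be made rigorous by either arguing directly in terms of the cylinder--set measure as defined in~\cite{katoenbaier}, or by invoking the standard first--step analysis for MCs applied to the unique successor of $s_I$.
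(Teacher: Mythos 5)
Your proposal is correct and follows essentially the same route as the paper's own proof: a case split on $[B](\sigma)$, observing via the rules \textsf{[if-true]} and \textsf{[if-false]} that the initial state has a unique probability--$1$ successor which is the initial state of the corresponding sub--MC, so the expected rewards coincide. You are in fact slightly more careful than the paper (which argues by a picture), since you explicitly note that the initial state carries reward $0$ and that the path bijection preserves probabilities and the reachability event.
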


\begin{proof}
  As shown in the figure below, two cases arise. 
  \begin{center}
    \scalebox{0.8}{
\begin{tikzpicture}[->,>=stealth',shorten >=1pt,node distance=2.7cm,semithick,minimum size=1cm]
\tikzstyle{every state}=[draw=none]
  \draw[white, use as bounding box] (-3.5,-4.5) rectangle (10.8,.8);

   \node [state, initial, initial text=,] (init) {$\MCSTATE{\ITE{B}{C_1}{C_2}, \sigma}$};  
   \node (initLabel) [node distance=0.5cm,gray,below of=init] {$0$};

   \node [state,] (p) [node distance = 4.7cm, right of = init] {$\MCSTATE{C_1, \sigma\subst{\tau}{\tau+1}}$};
   \node (pLabel) [node distance=0.5cm,gray,below of=p] {$0$};
   
   \node [state,] (q) [node distance = 3.3cm, below of = init] {$\MCSTATE{C_2, \sigma\subst{\tau}{\tau+1}}$};
   \node (qLabel) [node distance=0.5cm,gray,below of=q] {$0$};
   
    \node [] (dummy1) [on grid, right =3.5cm of p] {$\ldots$};
    \node [] (dummy2) [on grid, right =3.5cm of q] {$\ldots$};
  \path [] 
      (init) edge [] node [below=-0.8cm] {\scriptsize{$[B]$}} (p)
      (initLabel) edge [] node [right] {\scriptsize{$[\neg B]$}} (q)
      (p) edge [decorate,decoration={snake, post length=2mm}] (dummy1)
      (q) edge [decorate,decoration={snake, post length=2mm}] (dummy2)
  ;
\end{tikzpicture}
}
  \end{center}
  If $[B](\sigma) = 1$ then each path $\pi \in \MCREACH{\MCSTATE{\SINK}}$ reaches $\MCSTATE{C_1,\sigma\subst{\tau}{\tau+1}}$ with probability one.
  Conversely, if $[B](\sigma) = 1$ then each path $\pi \in \MCREACH{\MCSTATE{\SINK}}$ reaches $\MCSTATE{C_2,\sigma\subst{\tau}{\tau+1}}$ with probability one.
  These states are the initial states of the MCs $\OPMC{C_1}{\sigma\subst{\tau}{\tau+1}}{f}$ and $\OPMC{C_2}{\sigma\subst{\tau}{\tau+1}}{f}$, respectively.
\end{proof}

\begin{lemma} \label{thm:operational:while-leq}
  Let $C \in \PProgs$, $f \in \E$ and $\sigma \in \Statestau$. Then
  \begin{align*}
    \sup_{k \in \Nats} \ExpRew{\OPMC{\BOUNDEDWHILE{k}{B}{C}}{\sigma}{f}}{\MCREACH{\MCSTATE{\SINK}}} ~\leq~ \ExpRew{\OPMC{\WHILE{B}{C}}{\sigma}{f}}{\MCREACH{\MCSTATE{\SINK}}}.
  \end{align*}
\end{lemma}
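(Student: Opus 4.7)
The plan is to prove, by induction on $k \in \Nats$, the pointwise inequality
\begin{align*}
    \ExpRew{\OPMC{\BOUNDEDWHILE{k}{B}{C}}{\sigma}{f}}{\MCREACH{\MCSTATE{\SINK}}} \;\leq\; \ExpRew{\OPMC{\WHILE{B}{C}}{\sigma}{f}}{\MCREACH{\MCSTATE{\SINK}}}
\end{align*}
for every $\sigma \in \Statestau$, and then pass to the supremum in $k$ (which is immediate since the right--hand side does not depend on $k$). For the base case $k=0$ we have $\BOUNDEDWHILE{0}{B}{C} = \HALT$; the unique maximal path from $\MCSTATE{\HALT,\sigma}$ traces $\MCSTATE{\HALT,\sigma}\to\MCSTATE{\SINK}\to\MCSTATE{\SINK}\to\cdots$, avoiding every reward--bearing state of the form $\MCSTATE{\TERM,\cdot}$, so the left--hand side is $0$ and the inequality is trivial.

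For the inductive step $k\mapsto k+1$, I would first use the \emph{while}--rule to observe that from $\MCSTATE{\WHILE{B}{C},\sigma}$ there is a deterministic, reward--$0$ transition to $\MCSTATE{\ITE{B}{\COMPOSE{C}{\WHILE{B}{C}}}{\EMPTY},\sigma}$, so the expected reward of $\WHILE{B}{C}$ equals that of its one--step unfolding. Applying Lemma~\autoref{thm:operational:conditional} and then Lemma~\autoref{thm:operational:composition} to both this unfolding and to $\BOUNDEDWHILE{k+1}{B}{C} = \ITE{B}{\COMPOSE{C}{\BOUNDEDWHILE{k}{B}{C}}}{\EMPTY}$ rewrites both sides as a common $[\neg B](\sigma)$--summand $[\neg B](\sigma)\cdot\ExpRew{\OPMC{\EMPTY}{\sigma\subst{\tau}{\tau+1}}{f}}{\MCREACH{\MCSTATE{\SINK}}}$ plus a $[B](\sigma)$--summand of the shape $[B](\sigma)\cdot\ExpRew{\OPMC{C}{\sigma\subst{\tau}{\tau+1}}{g(\cdot,f)}}{\MCREACH{\MCSTATE{\SINK}}}$, where the inner reward function is $g(\BOUNDEDWHILE{k}{B}{C},f)$ on the left and $g(\WHILE{B}{C},f)$ on the right. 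The inductive hypothesis, applied at each state $\rho$, yields exactly $g(\BOUNDEDWHILE{k}{B}{C},f) \preceq g(\WHILE{B}{C},f)$ pointwise, so the inductive step reduces to monotonicity of the functional $h \mapsto \ExpRew{\OPMC{C}{\rho}{h}}{\MCREACH{\MCSTATE{\SINK}}}$ in its reward argument $h$.

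I expect this monotonicity step to be the main obstacle. At the path level it is straightforward: varying only $h$ leaves the states, transitions, and path probabilities of the MC unchanged, and the reward collected on a path reaching $\MCSTATE{\SINK}$ is just $h(\sigma')$ at the unique $\MCSTATE{\TERM,\sigma'}$ state it visits, if any (paths reaching $\MCSTATE{\SINK}$ through $\MCSTATE{\OBSERVEFAIL}$ contribute $0$ on either side). What needs care is the degenerate alternative in the definition of $\ExpRew{}{}$: when $\sum_{\pi \in \MCREACH{\MCSTATE{\SINK}}} \MCPROB{}{\pi} < 1$, both expected rewards are $\infty$ by definition and the inequality holds vacuously; crucially, this case distinction is governed solely by the underlying MC structure and thus occurs simultaneously for $h = g(\BOUNDEDWHILE{k}{B}{C},f)$ and $h = g(\WHILE{B}{C},f)$. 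Once monotonicity is secured, the induction closes and taking $\sup_{k \in \Nats}$ on the left--hand side delivers the claim.
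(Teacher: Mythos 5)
Your proposal is correct and follows essentially the same route as the paper's proof: induction on $k$ with $\HALT$ as the base case, one--step unfolding of the loop via the \textsf{while}--rule, and decomposition through Lemma~\ref{thm:operational:conditional} and Lemma~\ref{thm:operational:composition} so that the induction hypothesis applies to the inner reward functions $g(\BOUNDEDWHILE{k}{B}{C},f) \preceq g(\WHILE{B}{C},f)$. The only difference is that you explicitly isolate and justify the monotonicity of $h \mapsto \ExpRew{\OPMC{C}{\rho}{h}}{\MCREACH{\MCSTATE{\SINK}}}$ in the reward argument (including the degenerate $\infty$ case), a step the paper's I.H.\ application uses silently.
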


\begin{proof}
  We show
  \begin{align*}
    \ExpRew{\OPMC{C_k}{\sigma}{f}}{\MCREACH{\MCSTATE{\SINK}}} ~\leq~ \ExpRew{\OPMC{\WHILE{B}{C}}{\sigma}{f}}{\MCREACH{\MCSTATE{\SINK}}}
  \end{align*}
  for each $k \in \Nats$ by induction on $k$.
  
  \paragraph{I.B. $k=0$}
  \begin{align*}
        & \ExpRew{\OPMC{C_0}{\sigma}{f}}{\MCREACH{\MCSTATE{\SINK}}} \\
    ~=~ & \ExpRew{\OPMC{\HALT}{\sigma}{f}}{\MCREACH{\MCSTATE{\SINK}}} \\
    ~=~ & 0 \leq \ExpRew{\OPMC{\WHILE{B}{C}}{\sigma}{f}}{\MCREACH{\MCSTATE{\SINK}}}.
  \end{align*}
  
  \paragraph{I.H.} Assume for an arbitrary, fixed $k \in \Nats$ that 
  \begin{align*}
    \ExpRew{\OPMC{C_k}{\sigma}{f}}{\MCREACH{\MCSTATE{\SINK}}} ~\leq~ \ExpRew{\OPMC{\WHILE{B}{C}}{\sigma}{f}}{\MCREACH{\MCSTATE{\SINK}}}
  \end{align*}
  
  \paragraph{I.S. $k \mapsto k+1$}
  \begin{align*}
        & \ExpRew{\OPMC{C_{k+1}}{\sigma}{f}}{\MCREACH{\MCSTATE{\SINK}}} \\
    ~=~ & \ExpRew{\OPMC{\ITE{B}{\COMPOSE{C}{C_{k}}}{\EMPTY}}{\sigma}{f}}{\MCREACH{\MCSTATE{\SINK}}}  \tag{Def. $C_{k+1}$} \\
    ~=~ & [B](\sigma) \cdot \ExpRew{\OPMC{\COMPOSE{C}{C_{k}}}{\sigma\subst{\tau}{\tau+1}}{f}}{\MCREACH{\MCSTATE{\SINK}}} \tag{\autoref{thm:operational:conditional}} \\
        & + [\neg B](\sigma) \cdot \ExpRew{\OPMC{\EMPTY}{\sigma\subst{\tau}{\tau+1}}{f}}{\MCREACH{\MCSTATE{\SINK}}} \\
    ~=~ & [B](\sigma) \cdot \ExpRew{\OPMC{C}{\sigma\subst{\tau}{\tau+1}}{g(C_k,f)}}{\MCREACH{\MCSTATE{\SINK}}} \tag{\autoref{thm:operational:composition}} \\
        & + [\neg B](\sigma) \cdot \ExpRew{\OPMC{\EMPTY}{\sigma\subst{\tau}{\tau+1}}{f}}{\MCREACH{\MCSTATE{\SINK}}} \\
    ~\leq~ & [B](\sigma) \cdot \ExpRew{\OPMC{C}{\sigma\subst{\tau}{\tau+1}}{g(C',f)}}{\MCREACH{\MCSTATE{\SINK}}} \tag{I.H.} \\
        & + [\neg B](\sigma) \cdot \ExpRew{\OPMC{\EMPTY}{\sigma\subst{\tau}{\tau+1}}{f}}{\MCREACH{\MCSTATE{\SINK}}} \\        
    ~=~ & [B](\sigma) \cdot \ExpRew{\OPMC{\COMPOSE{C}{C'}}{\sigma\subst{\tau}{\tau+1}}{f}}{\MCREACH{\MCSTATE{\SINK}}} \tag{\autoref{thm:operational:composition}} \\
        & + [\neg B](\sigma) \cdot \ExpRew{\OPMC{\EMPTY}{\sigma\subst{\tau}{\tau+1}}{f}}{\MCREACH{\MCSTATE{\SINK}}} \\    
    ~=~ & \ExpRew{\OPMC{\ITE{B}{\COMPOSE{C}{C'}}{\EMPTY}}{\sigma}{f}}{\MCREACH{\MCSTATE{\SINK}}} \tag{\autoref{thm:operational:conditional}} \\
    ~=~ &  \ExpRew{\OPMC{\WHILE{B}{C}}{\sigma}{f}}{\MCREACH{\MCSTATE{\SINK}}},
  \end{align*}
  where the last step is immediate, because each path of $\OPMC{C'}{\sigma}{f}$ starting in $\MCSTATE{C',\sigma}$ (which has $0$ reward) first reaches state 
  $\MCSTATE{\ITE{B}{\COMPOSE{C}{C'}}{\EMPTY}, \sigma}$ with probability $1$.
\end{proof}

\begin{lemma} \label{thm:operational:while-geq}
  Let $C \in \PProgs$, $f \in \E$ and $\sigma \in \Statestau$. Then
  \begin{align*}
    \sup_{k \in \Nats} \ExpRew{\OPMC{\BOUNDEDWHILE{k}{B}{C}}{\sigma}{f}}{\MCREACH{\MCSTATE{\SINK}}} ~\geq~ \ExpRew{\OPMC{\WHILE{B}{C}}{\sigma}{f}}{\MCREACH{\MCSTATE{\SINK}}}.
  \end{align*}
\end{lemma}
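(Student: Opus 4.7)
The plan is to express both sides of the claimed inequality as iterates of a common continuous operator on expectations, and then invoke Kleene's Fixed Point Theorem. First I would define
\begin{align*}
\mathcal{F}_f(h)(\sigma) ~=~ &[B](\sigma)\cdot \ExpRew{\OPMC{C}{\sigma\subst{\tau}{\tau+1}}{h}}{\MCREACH{\MCSTATE{\SINK}}} \\
&\quad + [\neg B](\sigma)\cdot f(\sigma\subst{\tau}{\tau+1}).
\end{align*}
Proceeding by induction on $k$ in the same way as the proof of \autoref{thm:operational:while-approx}, but using \autoref{thm:operational:conditional} and \autoref{thm:operational:composition} to rewrite operational expected rewards instead of invoking the rules for $\rtsymbol$, I would establish the identity
\begin{align*}
\ExpRew{\OPMC{\BOUNDEDWHILE{k}{B}{C}}{\sigma}{f}}{\MCREACH{\MCSTATE{\SINK}}} ~=~ \mathcal{F}_f^k(\boldsymbol{0})(\sigma).
\end{align*}
Unfolding the unbounded loop through its single initial transition to $\MCSTATE{\ITE{B}{\COMPOSE{C}{\WHILE{B}{C}}}{\EMPTY},\sigma}$ and then applying the same two lemmas would show that $h_\infty(\rho) := \ExpRew{\OPMC{\WHILE{B}{C}}{\rho}{f}}{\MCREACH{\MCSTATE{\SINK}}}$ is a fixed point, i.e.\ $h_\infty = \mathcal{F}_f(h_\infty)$.

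The next step is to prove that $\mathcal{F}_f$ is $\omega$-continuous on $(\Etau,\preceq)$. Since the cumulative reward along a path of $\OPMC{C}{\sigma}{h}$ depends on the choice of $h$ only at the unique state of the form $\MCSTATE{\TERM,\cdot}$ that it may visit, the monotone convergence theorem gives continuity of $h \mapsto \ExpRew{\OPMC{C}{\sigma}{h}}{\MCREACH{\MCSTATE{\SINK}}}$, and this lifts through the conditional and the pointwise multiplication defining $\mathcal{F}_f$. Kleene's Fixed Point Theorem will then yield $\sup_{k} \mathcal{F}_f^k(\boldsymbol{0}) = \lfp \mathcal{F}_f$.

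It will then remain to identify $h_\infty$ with this \emph{least} fixed point rather than some larger one, and this is where I expect the main difficulty. I would handle it by a path-decomposition argument: every path $\pi \in \MCREACH{\MCSTATE{\SINK}}$ in $\OPMC{\WHILE{B}{C}}{\sigma}{f}$ traverses the while guard only finitely often, and therefore corresponds, for every sufficiently large $k$, to a path in $\OPMC{\BOUNDEDWHILE{k}{B}{C}}{\sigma}{f}$ of matching probability and reward. Partial sums over paths with at most $k$ guard-crossings thus lower-bound $\mathcal{F}_f^k(\boldsymbol{0})(\sigma)$ and converge to $h_\infty(\sigma)$. In the exceptional case where $h_\infty(\sigma) = \infty$ because $\sum_{\pi \in \MCREACH{\MCSTATE{\SINK}}} \MCPROB{\OPMC{\WHILE{B}{C}}{\sigma}{f}}{\pi} < 1$, the divergence must stem from the body $C$ itself, so the same divergence is inherited by every $\OPMC{\BOUNDEDWHILE{k}{B}{C}}{\sigma}{f}$ with $k \geq 1$, forcing the iterates to be $\infty$ as well and yielding the inequality trivially.
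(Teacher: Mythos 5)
Your core argument is, in the end, the same as the paper's: the decisive step is the path decomposition in your penultimate paragraph, which matches every finite positive--reward path of $\OPMC{\WHILE{B}{C}}{\sigma}{f}$ (such a path crosses the loop header only finitely often) with a path of equal probability and reward in $\OPMC{\BOUNDEDWHILE{k}{B}{C}}{\sigma}{f}$ for a suitable $k$. The paper proves exactly this by induction on the number of visited states with first component $\WHILE{B}{C}$. The fixed--point scaffolding you build around it (the operator $\mathcal{F}_f$, its continuity, Kleene iteration) is not wrong, but as you yourself observe it cannot close the argument on its own: showing that $h_\infty$ is \emph{a} fixed point of $\mathcal{F}_f$ only yields $h_\infty \succeq \lfp \mathcal{F}_f$, which is the direction already covered by \autoref{thm:operational:while-leq}, not the direction this lemma needs. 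So the machinery is redundant and the burden falls entirely on the path argument, as in the paper.

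There is, however, a genuine error in the one place where you go beyond the paper, namely the exceptional case $\sum_{\pi \in \MCREACH{\MCSTATE{\SINK}}} \MCPROB{\OPMC{\WHILE{B}{C}}{\sigma}{f}}{\pi} < 1$. You claim that this ``divergence must stem from the body $C$ itself'' and is therefore inherited by every bounded unrolling. That is not so: the probability of reaching $\MCSTATE{\SINK}$ can drop below $1$ solely because the loop iterates forever with positive probability while the body terminates on every iteration. Take $B = \texttt{true}$ and $C = \SKIP$. Then $\OPMC{\WHILE{B}{C}}{\sigma}{f}$ never reaches $\MCSTATE{\SINK}$, so the right--hand side is $\infty$ by the convention in the definition of $\ExpRew{\cdot}{\cdot}$; but every $\BOUNDEDWHILE{k}{B}{C}$ eventually reaches a top--level $\HALT$, which moves to $\MCSTATE{\SINK}$ with probability $1$, so each bounded unrolling reaches $\MCSTATE{\SINK}$ almost surely and has expected reward $0$. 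Your claimed inheritance of divergence therefore fails, and this very instance shows that the $\infty$--branch of the definition of $\ExpRew{\cdot}{\cdot}$ is in tension with the stated inequality. To be fair, the paper's own proof silently restricts attention to positive--reward paths and never addresses the sub--probabilistic case either; but the argument you offer for that case does not work and should be either repaired or explicitly excluded.
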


\begin{proof}
   Let $\pi \in \MCREACH{\MCSTATE{\SINK}}$ be a path in $\OPMC{C'}{\sigma}{f}$. Then there exists a finite prefix $\hat{\pi}$ of $\pi$ reaching a state $\MCSTATE{\TERM, \sigma'}$ for some $\sigma' \in \Statestau$ such that $\MCREW(\pi) = \MCREW(\hat{\pi}) > 0$. Since $\hat{\pi}$ is finite, only finitely many states with first component $C'$, say $k$, are visited. We show that a corresponding path $\hat{\pi}'$ with $\MCREW(\hat{\pi}') = \MCREW(\hat{\pi})$ exists in $\OPMC{C_k}{\sigma}{f}$ by induction on $k \geq 1$.
   
   \paragraph{I.B. $k=1$} There exists only one suitable path $\hat{\pi}$ in $\OPMC{C'}{\sigma}{f}$ reaching a state with first component $\TERM$, which is
   \begin{align*}
     \hat{\pi} ~=~ & \MCSTATE{C', \sigma} \MCSTATE{\ITE{B}{\COMPOSE{C}{C'}}{\EMPTY}} \\
                   & \MCSTATE{\EMPTY, \sigma\subst{\tau}{\tau+1}} \MCSTATE{\TERM, \sigma\subst{\tau}{\tau+1}}.
   \end{align*}
   The corresponding path in $\OPMC{C_k}{\sigma}{f}$ with the same reward ($f(\sigma\subst{\tau}{\tau+1})$) is 
   \begin{align*}
     \hat{\pi} ~=~ \MCSTATE{C_1, \sigma} \MCSTATE{\EMPTY, \sigma\subst{\tau}{\tau+1}} \MCSTATE{\TERM, \sigma\subst{\tau}{\tau+1}}.
   \end{align*}
   
   \paragraph{I.H.} For each finite path $\hat{\pi}$ reaching a state with first component $\TERM$ and positive reward in $\OPMC{C'}{\sigma}{f}$ visiting $k > 1$ (for an arbitrary, fixed $k \geq 1$) states with first component $C'$, there exists a path $\hat{\pi}'$ reaching a state with first component $\TERM$ in $\OPMC{C_k}{\sigma}{f}$ such that
   $\MCREW(\hat{\pi}) = \MCREW(\hat{\pi}')$.
   
   \paragraph{I.S. $k \mapsto k+1$} Each finite path $\hat{\pi}$ as described above is of the form
   \begin{align*}
      \hat{\pi} ~=~ & \MCSTATE{C', \sigma} \MCSTATE{\ITE{B}{\COMPOSE{C}{C'}}{\EMPTY}} \\
                    & \MCSTATE{\COMPOSE{C}{C'}, \sigma'} \ldots \MCSTATE{C', \sigma''} \ldots \MCSTATE{\TERM, \sigma'''}
   \end{align*}
   such that $k$ states with first component $C'$ are visited when starting in state $\MCSTATE{C', \sigma'}$. Let $\hat{\pi}_2$ be a path starting in this state. By I.H. there exists a corresponding path $\hat{\pi}_2'$ in $\OPMC{C_k}{\sigma'}{f}$ such that $\MCREW(\hat{\pi}_2) = \MCREW(\hat{\pi}_2')$. Then
   \begin{align*}
      \hat{\pi}' ~=~ \MCSTATE{C_{k+1}, \sigma} \ldots, \MCSTATE{\COMPOSE{\TERM}{C_k}, \sigma'} \hat{\pi}_2'
   \end{align*}
   is a path in $\OPMC{C_{k+1}}{\sigma'}{f}$ with $\MCREW(\hat{\pi}') = \MCREW(\hat{\pi})$.
   Hence, for each finite path $\hat{\pi}$ with positive reward in $\OPMC{C'}{\sigma}{f}$ there exists a corresponding path $\hat{\pi}'$ in $\OPMC{C_k}{\sigma}{f}$ for some $k \in \Nats$. Thus, we include all paths with positive reward in the MC $\OPMC{C'}{\sigma}{f}$ by taking the supremum over $k \in \Nats$ of the expected reward of the MCs $\OPMC{C_k}{\sigma}{f}$.
\end{proof}

Putting \autoref{thm:operational:while-leq} and \autoref{thm:operational:while-geq} together, we immediately obtain

\begin{lemma} \label{thm:operational:while-eq}
  Let $C \in \PProgs$, $f \in \E$ and $\sigma \in \Statestau$. Then
  \begin{align*}
    \sup_{k \in \Nats} \ExpRew{\OPMC{\BOUNDEDWHILE{k}{B}{C}}{\sigma}{f}}{\MCREACH{\MCSTATE{\SINK}}} ~=~ \ExpRew{\OPMC{\WHILE{B}{C}}{\sigma}{f}}{\MCREACH{\MCSTATE{\SINK}}}.
  \end{align*}
\end{lemma}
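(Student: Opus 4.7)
The statement is really the conjunction of the two preceding inequalities, so my plan is essentially to observe antisymmetry and check that nothing is being elided.

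First I would note that both sides of the claimed equality are elements of $\PosRealsInf$, and that the ordering on this set is antisymmetric: whenever $a \leq b$ and $a \geq b$ both hold, we conclude $a = b$ (including the case where either side is $\infty$). So the entire proof reduces to citing the two ``halves'' proved immediately above.

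Next, I would apply \autoref{thm:operational:while-leq} to obtain
\begin{align*}
  \sup_{k \in \Nats} \ExpRew{\OPMC{\BOUNDEDWHILE{k}{B}{C}}{\sigma}{f}}{\MCREACH{\MCSTATE{\SINK}}} ~\leq~ \ExpRew{\OPMC{\WHILE{B}{C}}{\sigma}{f}}{\MCREACH{\MCSTATE{\SINK}}},
\end{align*}
and then \autoref{thm:operational:while-geq} to obtain the reverse inequality with the same two sides. Combining these by antisymmetry yields the desired equality. Conceptually, \autoref{thm:operational:while-leq} says the finite unrollings never over--approximate the full loop (each $k$--bounded loop is dominated by the true loop), while \autoref{thm:operational:while-geq} says that every terminating (positive--reward) path of $\OPMC{\WHILE{B}{C}}{\sigma}{f}$ traverses only finitely many copies of the loop guard and is therefore already witnessed in some $\OPMC{\BOUNDEDWHILE{k}{B}{C}}{\sigma}{f}$.

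Since the two preceding lemmas have already discharged the genuine content (path--matching up to finitely many loop iterations, and inductive simulation of bounded loops by the unbounded one), the present lemma requires no additional argument beyond this combination. There is therefore no real obstacle here; the only care needed is to ensure that the two applications are made with the same $C$, $f$, and $\sigma$, which is immediate since the statements are identical in these parameters.
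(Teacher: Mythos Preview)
Your proposal is correct and matches the paper's approach exactly: the paper simply states that the lemma follows by putting \autoref{thm:operational:while-leq} and \autoref{thm:operational:while-geq} together, which is precisely the antisymmetry argument you spell out.
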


\begin{lemma} \label{thm:operational:soundness}
 Let $C \in \PProgs$, $f \in \E$, and $\sigma \in \Statestau$. Then 
 \begin{align*}
   \ExpRew{\OPMC{C}{\sigma}{f}}{\MCREACH{\MCSTATE{\SINK}}} ~=~ \rt{C}{f}(\sigma).
 \end{align*}
\end{lemma}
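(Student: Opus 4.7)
The plan is to proceed by structural induction on $C \in \PProgs$. For each base case---$\SKIP$, $\EMPTY$, $\ABORT$, $\HALT$, and $\ASSIGN{x}{E}$---both sides are easily read off from \autoref{table:transrules} and the transition rules in \autoref{fig:transition-function} and compared directly. For instance, for $\SKIP$ the only path reaching $\MCSTATE{\SINK}$ goes $\MCSTATE{\SKIP,\sigma} \to \MCSTATE{\TERM, \sigma\subst{\tau}{\tau+1}} \to \MCSTATE{\SINK}$ with probability $1$ and accumulated reward $f(\sigma\subst{\tau}{\tau+1}) = \rt{\SKIP}{f}(\sigma)$. For $\ABORT$ no path reaches $\MCSTATE{\SINK}$ (all mass is trapped in the self-loop), so the probability mass of $\MCREACH{\MCSTATE{\SINK}}$ is $0$; by the definition of expected reward this yields $\infty = \rt{\ABORT}{f}(\sigma)$. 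For $\HALT$ the only reachable state after one step is $\MCSTATE{\SINK}$, which carries reward $0$, again matching $\rtsymbol$.

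For the loop-free compound cases the induction hypothesis closes the argument almost immediately. For $\COMPOSE{C_1}{C_2}$ I would apply \autoref{thm:operational:composition}, use the IH for $C_2$ pointwise at each state $\rho$ to conclude $g(C_2,f)(\rho) = \rt{C_2}{f}(\rho)$, and then the IH for $C_1$ on post-reward $\rt{C_2}{f}$ to obtain $\rt{C_1}{\rt{C_2}{f}}(\sigma) = \rt{\COMPOSE{C_1}{C_2}}{f}(\sigma)$. For $\ITE{B}{C_1}{C_2}$ I would invoke \autoref{thm:operational:conditional} and apply the IH to each branch. The $\PCHOICE{C_1}{p}{C_2}$ case is fully analogous, splitting paths via rules \emph{pc-1} and \emph{pc-2} into a $p$-weighted combination and applying the IH. The $\OBSERVE{B}$ case is a direct computation: if $[B](\sigma) = 1$ the only path with positive reward goes via \emph{observe-true} and yields $f(\sigma\subst{\tau}{\tau+1})$; otherwise rule \emph{observe-false} routes all mass to $\MCSTATE{\OBSERVEFAIL}$ whose reward is $0$. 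Both match $\rt{\OBSERVE{B}}{f}(\sigma) = [B] \cdot f\subst{\tau}{\tau+1}$ evaluated at $\sigma$.

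The main obstacle is the while case $C = \WHILE{B}{C'}$, where $\rtsymbol$ is a least fixed point while the operational side is an expected reward over possibly infinite paths. The plan is to bridge the two via the bounded unrollings $C_k = \BOUNDEDWHILE{k}{B}{C'}$. \autoref{thm:operational:while-approx} already gives $\sup_{k \in \Nats} \rt{C_k}{f} = \rt{C}{f}$ on the denotational side, and \autoref{thm:operational:while-eq} supplies the matching operational equation $\sup_{k \in \Nats} \ExpRew{\OPMC{C_k}{\sigma}{f}}{\MCREACH{\MCSTATE{\SINK}}} = \ExpRew{\OPMC{C}{\sigma}{f}}{\MCREACH{\MCSTATE{\SINK}}}$. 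Hence it suffices to show
\[
  \ExpRew{\OPMC{C_k}{\sigma}{f}}{\MCREACH{\MCSTATE{\SINK}}} ~=~ \rt{C_k}{f}(\sigma) \qquad \text{for every } k \in \Nats,
\]
which I would prove by an inner induction on $k$. The base $k=0$ is the already-settled $\HALT$ case. For the step $k+1$, unfolding $C_{k+1} = \ITE{B}{\COMPOSE{C'}{C_k}}{\EMPTY}$ and applying the already-settled cases for $\ITE{B}{\cdot}{\cdot}$ and $\COMPOSE{\cdot}{\cdot}$, together with the outer structural IH on the body $C'$ and the inner IH on $C_k$, closes the step. Taking $\sup_k$ on both sides finishes the while case, and thereby the induction. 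Two subtle points I would check carefully are that the $\sup_k$/$\lfp$ exchange is justified by continuity of $\rtsymbol$ and by \autoref{thm:operational:while-eq}, and that \autoref{thm:operational:composition} remains applicable when the intermediate reward $\rt{C_2}{f}$ is possibly $\infty$-valued---which is the case because the IH of $C_1$ is universally quantified over all post-rewards.
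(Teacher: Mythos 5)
Your proposal is correct and follows essentially the same route as the paper's proof: structural induction with explicit path analysis for the atomic cases, Lemmas \ref{thm:operational:composition} and \ref{thm:operational:conditional} for composition and conditionals, and a reduction of the loop case to the bounded unrollings $\BOUNDEDWHILE{k}{B}{C'}$ combined via Lemmas \ref{thm:operational:while-approx} and \ref{thm:operational:while-eq}. Your explicit inner induction on $k$ for the bounded loops, and your remarks on the $\sup$/$\lfp$ exchange and on $\infty$-valued intermediate rewards, only make precise what the paper leaves implicit.
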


\begin{proof}
  We will make use of a classical substitution lemma stating that $f(\sigma\subst{x}{E}) = f\subst{x}{E}(\sigma)$.
  The proof is by structural induction on the structure of $C$.
  
  \paragraph{The Effectless Program $C = \SKIP$.}
  The MC $\OPMC{\SKIP}{\sigma}{f}$ is of the form
  \begin{center}
   \begin{tikzpicture}[->,>=stealth',shorten >=1pt,node distance=2.5cm,semithick,minimum size=2cm]
\tikzstyle{every state}=[draw=none]
  \draw[white, use as bounding box] (-1.6,-0.85) rectangle (6.5,.55);
  
   \node [state, initial, initial text=,] (init) {$\MCSTATE{\SKIP, \sigma}$};  
   \node (initLabel) [node distance=0.5cm,gray,below of=init] {$0$};
   
   \node [state] (exit) [right of=init] {$\MCSTATE{\TERM, \sigma\subst{\tau}{\tau+1}}$};
   \node (exitLabel) [node distance=0.5cm,gray,below of=exit] {$f(\sigma\subst{\tau}{\tau+1})$};
   
   \node [state] (sink) [right of=exit] {$\MCSTATE{\SINK}$};
   \node (sinkLabel) [node distance=0.5cm,gray,below of=sink] {$0$};

  \path [] 
      (init) edge [] node [above=-0.8cm] {\scriptsize{}} (exit)
      (exit) edge [] node [above=-0.8cm] {\scriptsize{}} (sink)
      (sink) edge [loop right] node [right=-0.8cm] {\scriptsize{}} (sink)
  ;
\end{tikzpicture}
  \end{center}
  Thus, we have $\MCREACH{\MCSTATE{\SINK}} = \{ \hat{\pi} \}$, where
  \begin{align*}
   \hat{\pi} = \MCSTATE{C,\sigma} \MCSTATE{\TERM,\sigma\subst{\tau}{\tau+1}} \MCSTATE{\SINK}.
  \end{align*}
  Moreover, $\MCPROB{\OPMC{C}{\sigma}{f}}{\hat{\pi}} = 1$ and $\MCREW(\hat{\pi}) = f(\sigma\subst{\tau}{\tau+1})$. Thus
  \begin{align*}
       & \ExpRew{\OPMC{C}{\sigma}{f}}{\MCREACH{\MCSTATE{\SINK}}} \\
   ~=~ & \sum_{\pi \in \MCREACH{\MCSTATE{\SINK}}} \MCPROB{\OPMC{C}{\sigma}{f}}{\pi} \cdot \MCREW(\pi) \\
   ~=~ & \MCPROB{\OPMC{C}{\sigma}{f}}{\hat{\pi}} \cdot \MCREW(\hat{\pi}) \\
   ~=~ & 1 \cdot f(\sigma\subst{\tau}{\tau+1}) \\
   ~=~ & f(\sigma\subst{\tau}{\tau+1}) \\
   ~=~ & f\subst{\tau}{\tau+1}(\sigma) \\
   ~=~ & \rt{C}{\sigma}.
  \end{align*}

  \paragraph{The Empty Program $C = \EMPTY$.}
  The MC $\OPMC{\EMPTY}{\sigma}{f}$ is of the form
  \begin{center}
   \begin{tikzpicture}[->,>=stealth',shorten >=1pt,node distance=2.5cm,semithick,minimum size=2cm]
\tikzstyle{every state}=[draw=none]
  \draw[white, use as bounding box] (-1.6,-0.85) rectangle (6.5,.55);
  
   \node [state, initial, initial text=,] (init) {$\MCSTATE{\EMPTY, \sigma}$};  
   \node (initLabel) [node distance=0.5cm,gray,below of=init] {$0$};
   
   \node [state] (exit) [right of=init] {$\MCSTATE{\TERM, \sigma\subst{\tau}{\tau+1}}$};
   \node (exitLabel) [node distance=0.5cm,gray,below of=exit] {$f(\sigma\subst{\tau}{\tau+1})$};
   
   \node [state] (sink) [right of=exit] {$\MCSTATE{\SINK}$};
   \node (sinkLabel) [node distance=0.5cm,gray,below of=sink] {$0$};

  \path [] 
      (init) edge [] node [above=-0.8cm] {\scriptsize{}} (exit)
      (exit) edge [] node [above=-0.8cm] {\scriptsize{}} (sink)
      (sink) edge [loop right] node [right=-0.8cm] {\scriptsize{}} (sink)
  ;
\end{tikzpicture}
  \end{center}
  Thus, we have $\MCREACH{\MCSTATE{\SINK}} = \{ \hat{\pi} \}$, where
  \begin{align*}
   \hat{\pi} = \MCSTATE{C,\sigma} \MCSTATE{\TERM,\sigma} \MCSTATE{\SINK}.
  \end{align*}
  Moreover, $\MCPROB{\OPMC{C}{\sigma}{f}}{\hat{\pi}} = 1$ and $\MCREW(\hat{\pi}) = f(\sigma)$. Thus
  \begin{align*}
       & \ExpRew{\OPMC{C}{\sigma}{f}}{\MCREACH{\MCSTATE{\SINK}}} \\
   ~=~ & \sum_{\pi \in \MCREACH{\MCSTATE{\SINK}}} \MCPROB{\OPMC{C}{\sigma}{f}}{\pi} \cdot \MCREW(\pi) \\
   ~=~ & \MCPROB{\OPMC{C}{\sigma}{f}}{\hat{\pi}} \cdot \MCREW(\hat{\pi}) \\
   ~=~ & 1 \cdot f(\sigma) \\
   ~=~ & f(\sigma) \\
   ~=~ & \rt{C}{\sigma}.
  \end{align*}

  \paragraph{The Diverging Program $C = \ABORT$.}
  The MC $\OPMC{\ABORT}{\sigma}{f}$ is of the form
  \begin{center}
   \begin{tikzpicture}[->,>=stealth',shorten >=1pt,node distance=3.5cm,semithick,minimum size=2cm]
\tikzstyle{every state}=[draw=none]
  \draw[white, use as bounding box] (-1.6,-0.85) rectangle (6.5,.55);
  
   \node [state, initial, initial text=,] (init) {$\MCSTATE{\ABORT, \sigma}$};  
   \node (initLabel) [node distance=0.5cm,gray,below of=init] {$0$};

   \node [state] (sink) [right of=init] {$\MCSTATE{\SINK}$};
   \node (sinkLabel) [node distance=0.5cm,gray,below of=sink] {$0$};

  \path [] 
      (init) edge [loop right] node [above=-0.8cm] {\scriptsize{}} (init)
      (sink) edge [loop right] node [right=-0.8cm] {\scriptsize{}} (sink)
  ;
\end{tikzpicture}
  \end{center}
  Then $\MCREACH{\MCSTATE{\SINK}} = \emptyset$ and thus the probability of reaching $\MCSTATE{\SINK}$ is $0$.
  Thus, we immediately obtain 
  \begin{align*}
       & \ExpRew{\OPMC{C}{\sigma}{f}}{\MCREACH{\MCSTATE{\SINK}}} \\
   ~=~ & \infty \\
   ~=~ & \rt{C}{\sigma}.
  \end{align*}
  
  \paragraph{The Erroneous Program $C = \HALT$.}
  The MC $\OPMC{\HALT}{\sigma}{f}$ is of the form
  \begin{center}
   \begin{tikzpicture}[->,>=stealth',shorten >=1pt,node distance=2.5cm,semithick,minimum size=2cm]
\tikzstyle{every state}=[draw=none]
  \draw[white, use as bounding box] (-1.6,-0.85) rectangle (6.5,.55);
  
   \node [state, initial, initial text=,] (init) {$\MCSTATE{\HALT, \sigma}$};  
   \node (initLabel) [node distance=0.5cm,gray,below of=init] {$0$};
   
   \node [state] (sink) [right of=init] {$\MCSTATE{\SINK}$};
   \node (sinkLabel) [node distance=0.5cm,gray,below of=sink] {$0$};

  \path [] 
      (init) edge [] node [above=-0.8cm] {\scriptsize{}} (sink)
      (sink) edge [loop right] node [right=-0.8cm] {\scriptsize{}} (sink)
  ;
\end{tikzpicture}
  \end{center}
  Thus, each path $\pi \in \MCREACH{\MCSTATE{\SINK}}$ is of the form
  \begin{align*}
   \hat{\pi} = \MCSTATE{\HALT,\sigma} \MCSTATE{\SINK}
  \end{align*}
  Moreover, $\MCPROB{\OPMC{C}{\sigma}{f}}{\pi} = 1$ and $\MCREW(\pi) = 0$. Then
  \begin{align*}
       & \ExpRew{\OPMC{C}{\sigma}{f}}{\MCREACH{\MCSTATE{\SINK}}} \\
   ~=~ & \sum_{\pi \in \MCREACH{\MCSTATE{\SINK}}} \MCPROB{\OPMC{C}{\sigma}{f}}{\pi} \cdot \MCREW(\pi) \\
   ~=~ & \MCPROB{\OPMC{C}{\sigma}{f}}{\hat{\pi}} \cdot \MCREW(\hat{\pi}) \\
   ~=~ & 1 \cdot 0 \\
   ~=~ & 0 \\
   ~=~ & \rt{C}{\sigma}.
  \end{align*}
  
  \paragraph{The Assignment $C = \ASSIGN{x}{E}$.}
  The MC $\OPMC{\ASSIGN{x}{E}}{\sigma}{f}$ is of the form
  \begin{center}
   \begin{tikzpicture}[->,>=stealth',shorten >=1pt,node distance=3.5cm,semithick,minimum size=2cm]
\tikzstyle{every state}=[draw=none]
  \draw[white, use as bounding box] (-1.6,-0.85) rectangle (6.5,.55);
  
   \node [state, initial, initial text=,] (init) {$\MCSTATE{\ASSIGN{x}{E}, \sigma}$};  
   \node (initLabel) [node distance=0.5cm,gray,below of=init] {$0$};
   
   \node [state] (exit) [right of=init] {$\MCSTATE{\TERM, \sigma[x/E,\tau/\tau+1]}$};
   \node (exitLabel) [node distance=0.5cm,gray,below of=exit] {$f(\sigma[x/E,\tau/\tau+1])$};
   
   \node [state] (sink) [right of=exit] {$\MCSTATE{\SINK}$};
   \node (sinkLabel) [node distance=0.5cm,gray,below of=sink] {$0$};

  \path [] 
      (init) edge [] node [above=-0.8cm] {\scriptsize{}} (exit)
      (exit) edge [] node [above=-0.8cm] {\scriptsize{}} (sink)
      (sink) edge [loop right] node [right=-0.8cm] {\scriptsize{}} (sink)
  ;
\end{tikzpicture}
  \end{center}
  Thus, we have $\MCREACH{\MCSTATE{\SINK}} = \{ \hat{\pi} \}$, where
  \begin{align*}
   \hat{\pi} = \MCSTATE{C,\sigma} \MCSTATE{\TERM,\sigma[x/E,\tau/\tau+1]} \MCSTATE{\SINK}.
  \end{align*}
  Moreover, $\MCPROB{\OPMC{C}{\sigma}{f}}{\hat{\pi}} = 1$ and $\MCREW(\hat{\pi}) = f(\sigma[x/E,\tau/\tau+1])$. Thus
  \begin{align*}
       & \ExpRew{\OPMC{C}{\sigma}{f}}{\MCREACH{\MCSTATE{\SINK}}} \\
   ~=~ & \sum_{\pi \in \MCREACH{\MCSTATE{\SINK}}} \MCPROB{\OPMC{C}{\sigma}{f}}{\pi} \cdot \MCREW(\pi) \\
   ~=~ & \MCPROB{\OPMC{C}{\sigma}{f}}{\hat{\pi}} \cdot \MCREW(\hat{\pi}) \\
   ~=~ & 1 \cdot f(\sigma\subst{\tau}{\tau+1}) \\
   ~=~ & f(\sigma[x/E,\tau/\tau+1]) \\
   ~=~ & f[x/E](\sigma[\tau/\tau+1]) \\
   ~=~ & f[x/E,\tau/\tau+1](\sigma) \\
   ~=~ & \rt{C}{\sigma}.
  \end{align*}

  \paragraph{The Observation $C = \OBSERVE{B}$.}
  The MC $\OPMC{\OBSERVE{B}}{\sigma}{f}$ is of the form
  \begin{center}
   \scalebox{0.8}{
\begin{tikzpicture}[->,>=stealth',shorten >=1pt,node distance=2.7cm,semithick,minimum size=1cm]
\tikzstyle{every state}=[draw=none]
  \draw[white, use as bounding box] (-3.5,-4) rectangle (10.8,1);

   \node [state, initial, initial text=,] (init) {$\MCSTATE{\OBSERVE{B}{C}, \sigma}$};  
   \node (initLabel) [node distance=0.5cm,gray,below of=init] {$0$};

   \node [state,] (p) [node distance = 4.7cm, right of = init] {$\MCSTATE{\TERM, \sigma\subst{\tau}{\tau+1}}$};
   \node (pLabel) [node distance=0.5cm,gray,below of=p] {$f(\sigma\subst{\tau}{\tau+1})$};
   
   \node [state,] (q) [node distance = 3.7cm, below of = init] {$\MCSTATE{\OBSERVEFAIL}$};
   \node (qLabel) [node distance=0.5cm,gray,below of=q] {$0$};
   
    \node [] (sink) [on grid, below =3.7cm of p] {$\MCSTATE{\SINK}$};
    \node (qLabel) [node distance=0.5cm,gray,below of=sink] {$0$};
  \path [] 
      (init) edge [] node [below=-0.8cm] {\scriptsize{$[B]$}} (p)
      (initLabel) edge [] node [right] {\scriptsize{$[\neg B]$}} (q)
      (p) edge [] (sink)
      (q) edge [] (sink)
      (sink) edge [loop right] (sink)
  ;
\end{tikzpicture}
}
  \end{center}
  Two cases arise. If $[B](\sigma) = 1$ we have $\MCREACH{\MCSTATE{\SINK}} = \{ \hat{\pi} \}$, where
  \begin{align*}
   \hat{\pi} = \MCSTATE{C,\sigma} \MCSTATE{\TERM,\sigma\subst{\tau}{\tau+1}} \MCSTATE{\SINK}.
  \end{align*}
  Moreover, $\MCPROB{\OPMC{C}{\sigma}{f}}{\hat{\pi}} = 1$ and $\MCREW(\hat{\pi}) = f(\sigma\subst{\tau}{\tau+1})$. Thus
  \begin{align*}
       & \ExpRew{\OPMC{C}{\sigma}{f}}{\MCREACH{\MCSTATE{\SINK}}} \\
   ~=~ & \sum_{\pi \in \MCREACH{\MCSTATE{\SINK}}} \MCPROB{\OPMC{C}{\sigma}{f}}{\pi} \cdot \MCREW(\pi) \\
   ~=~ & \MCPROB{\OPMC{C}{\sigma}{f}}{\hat{\pi}} \cdot \MCREW(\hat{\pi}) \\
   ~=~ & 1 \cdot f(\sigma\subst{\tau}{\tau+1}) \\
   ~=~ & f\subst{\tau}{\tau+1}(\sigma) \\
   ~=~ & [B](\sigma) \cdot f\subst{\tau}{\tau+1}(\sigma) \tag{$[B](\sigma) = 1$} \\
   ~=~ & \rt{C}{\sigma}.
  \end{align*}
  Conversely, if $[B](\sigma) = 0$ we have $\MCREACH{\MCSTATE{\SINK}} = \{ \hat{\pi} \}$, where
  \begin{align*}
   \hat{\pi} = \MCSTATE{C,\sigma} \MCSTATE{\OBSERVEFAIL} \MCSTATE{\SINK}.
  \end{align*}
  Moreover, $\MCPROB{\OPMC{C}{\sigma}{f}}{\hat{\pi}} = 1$ and $\MCREW(\hat{\pi}) = 0$. Thus
  \begin{align*}
       & \ExpRew{\OPMC{C}{\sigma}{f}}{\MCREACH{\MCSTATE{\SINK}}} \\
   ~=~ & \sum_{\pi \in \MCREACH{\MCSTATE{\SINK}}} \MCPROB{\OPMC{C}{\sigma}{f}}{\pi} \cdot \MCREW(\pi) \\
   ~=~ & \MCPROB{\OPMC{C}{\sigma}{f}}{\hat{\pi}} \cdot \MCREW(\hat{\pi}) \\
   ~=~ & 1 \cdot 0 \\
   ~=~ & 0 \\
   ~=~ & [B](\sigma) \cdot f\subst{\tau}{\tau+1}(\sigma) \tag{$[B](\sigma) = 0$} \\
   ~=~ & \rt{C}{\sigma}.
  \end{align*}

\paragraph{The Sequential Composition $C = \COMPOSE{C_1}{C_2}$.}
  \begin{align*}
       & \ExpRew{\OPMC{C}{\sigma}{f}}{\MCREACH{\MCSTATE{\SINK}}} \\
   ~=~ & \ExpRew{\OPMC{C_1}{\sigma}{g(C_2,f)}}{\MCREACH{\MCSTATE{\SINK}}} \tag{\autoref{thm:operational:composition}} \\
   ~=~ & \ExpRew{\OPMC{C_1}{\sigma}{\ExpRew{\lambda \rho \,.\, \OPMC{C_2}{\rho}{f}}{\MCREACH{\MCSTATE{\SINK}}}}}{\MCREACH{\MCSTATE{\SINK}}} \\
   ~=~ & \ExpRew{\OPMC{C_1}{\sigma}{\lambda \rho \,.\, \rt{C_2}{f}(\rho)}}{\MCREACH{\MCSTATE{\SINK}}} \tag{I.H. on $C_2$}\\
   ~=~ & \ExpRew{\OPMC{C_1}{\sigma}{\rt{C_2}{f}}}{\MCREACH{\MCSTATE{\SINK}}} \\
   ~=~ & \rt{C_1}{\rt{C_2}{f}}(\sigma) \tag{I.H. on $C_1$} \\
   ~=~ & \rt{\COMPOSE{C_1}{C_2}}{f}(\sigma).
  \end{align*}

\paragraph{The Conditional $C = \ITE{B}{C_1}{C_2}$.}
  \begin{align*}
       & \ExpRew{\OPMC{C}{\sigma}{f}}{\MCREACH{\MCSTATE{\SINK}}} \\
   ~=~ & [B](\sigma) \cdot \ExpRew{\OPMC{C_1}{\sigma\subst{\tau}{\tau+1}}{f}}{\MCREACH{\MCSTATE{\SINK}}} \tag{\autoref{thm:operational:conditional}}  \\
       & + [\neg B](\sigma) \cdot \ExpRew{\OPMC{C_2}{\sigma\subst{\tau}{\tau+1}}{f}}{\MCREACH{\MCSTATE{\SINK}}} \\
   ~=~ & [B](\sigma) \cdot \rt{C_1}{f}(\sigma\subst{\tau}{\tau+1}) \\
       & + [\neg B](\sigma) \cdot \rt{C_2}{f}(\sigma\subst{\tau}{\tau+1}) \tag{I.H.} \\
   ~=~ & \left([B](\sigma) \cdot \rt{C_1}{f}(\sigma)\right)\subst{\tau}{\tau+1} \tag{$B$ is $\tau$--free} \\
       & + \left([\neg B](\sigma) \cdot \rt{C_2}{f}(\sigma)\right)\subst{\tau}{\tau+1} \\
   ~=~ & \left([B](\sigma) \cdot \rt{C_1}{f}(\sigma) + [\neg B](\sigma) \cdot \rt{C_2}{f}(\sigma)\right)\subst{\tau}{\tau+1} \\
   ~=~ & \rt{C}{\sigma}.
  \end{align*}

\paragraph{The Probabilistic Choice $C = \PCHOICE{C_1}{p}{C_2}$.}
  Each path $\pi \in \MCREACH{\MCSTATE{\SINK}}$ either reaches $\MCSTATE{C_1,\sigma\subst{\tau}{\tau+1}}$ with probability $p$, or
  reaches $\MCSTATE{C_2,\sigma\subst{\tau}{\tau+1}}$ with probability $1-p$.
  These states are the initial states of the MCs $\OPMC{C_1}{\sigma\subst{\tau}{\tau+1}}{f}$ and $\OPMC{C_2}{\sigma\subst{\tau}{\tau+1}}{f}$, respectively. Hence,
  \begin{align*}
       & \ExpRew{\OPMC{C}{\sigma}{f}}{\MCREACH{\MCSTATE{\SINK}}} \\
   ~=~ & p \cdot \ExpRew{\OPMC{C_1}{\sigma\subst{\tau}{\tau+1}}{f}}{\MCREACH{\MCSTATE{\SINK}}} + (1-p) \cdot \ExpRew{\OPMC{C_2}{\sigma\subst{\tau}{\tau+1}}{f}}{\MCREACH{\MCSTATE{\SINK}}} \\
   ~=~ & p \cdot \rt{C_1}{f}(\sigma\subst{\tau}{\tau+1}) + (1-p) \cdot \rt{C_2}{f}(\sigma\subst{\tau}{\tau+1}) \tag{I.H.} \\
   ~=~ & p \cdot \rt{C_1}{f}(\sigma)\subst{\tau}{\tau+1} + (1-p) \cdot \rt{C_2}{f}(\sigma)\subst{\tau}{\tau+1} \\
   ~=~ & \left(p \cdot \rt{C_1}{f}(\sigma) + (1-p) \cdot \rt{C_2}{f}(\sigma)\right)\subst{\tau}{\tau+1} \\
   ~=~ & \rt{C}{\sigma}.
  \end{align*}

\paragraph{The Loop $C = \WHILE{B}{C'}$.}
  For each natural number $k \geq 1$ and $\sigma \in \Statestau$, we have
  \begin{align*}
         & \rt{\BOUNDEDWHILE{k}{B}{C'}}{f}(\sigma) \\
     ~=~ & \rt{\ITE{B}{\COMPOSE{C'}{\BOUNDEDWHILE{k}{B}{C'}}}{\EMPTY}}{f}(\sigma) \\
     ~=~ & ([B](\sigma) \cdot \rt{\COMPOSE{C'}{\BOUNDEDWHILE{k}{B}{C'}}}{f}(\sigma) \\
         & + [\neg B](\sigma) \cdot \rt{\EMPTY}{f}(\sigma))\subst{\tau}{\tau + 1} \\
     ~=~ & ([B](\sigma) \cdot \rt{\COMPOSE{C'}{\BOUNDEDWHILE{k}{B}{C'}}}{f}(\sigma\subst{\tau}{\tau + 1}) \\
         & + [\neg B](\sigma) \cdot \rt{\EMPTY}{f}(\sigma\subst{\tau}{\tau + 1})) \\
     ~=~ & ([B](\sigma) \cdot \ExpRew{\OPMC{\BOUNDEDWHILE{k}{B}{C'}}{\sigma\subst{\tau}{\tau+1}}{f}}{\MCREACH{\MCSTATE{\SINK}}} \tag{I.H.} \\
         & + [\neg B](\sigma) \cdot \ExpRew{\OPMC{\EMPTY}{\sigma\subst{\tau}{\tau+1}}{f}}{\MCREACH{\MCSTATE{\SINK}}}) \\
     ~=~ & \ExpRew{\OPMC{\BOUNDEDWHILE{k}{B}{C'}}{\sigma}{f}}{\MCREACH{\MCSTATE{\SINK}}}.
  \end{align*}

  Together with the already proven equation
  \begin{align*}
     \rt{\HALT}{f}(\sigma) ~=~ \ExpRew{\OPMC{\HALT}{\sigma}{f}}{\MCREACH{\MCSTATE{\SINK}}}
  \end{align*}
  we conclude
  \begin{align*}
       & \rt{\WHILE{B}{C'}}{f}(\sigma) \\
   ~=~ & \sup_{k \in \Nats} \rt{\BOUNDEDWHILE{k}{B}{C'}}{f}(\sigma) \tag{\autoref{thm:operational:while-approx}} \\
   ~=~ & \sup_{k \in \Nats} \ExpRew{\OPMC{\BOUNDEDWHILE{k}{B}{C'}}{\sigma}{f}}{\MCREACH{\MCSTATE{\SINK}}} \\
   ~=~ & \ExpRew{\OPMC{\WHILE{B}{C'}}{\sigma}{f}}{\MCREACH{\MCSTATE{\SINK}}}. \tag{\autoref{thm:operational:while-eq}}
  \end{align*}
\end{proof}

\begin{lemma} \label{thm:operational:sink-no-fail}
 Let $C \in \PProgs$, $f \in \E$, and $\sigma \in \Statestau$. Then 
 \begin{align*}
    \ExpRew{\OPMC{C}{\sigma}{f}}{\MCREACH{\MCSTATE{\SINK}} \cap \neg \MCREACH{\MCSTATE{\OBSERVEFAIL}}} ~=~ \ExpRew{\OPMC{C}{\sigma}{f}}{\MCREACH{\MCSTATE{\SINK}}}.
 \end{align*}
\end{lemma}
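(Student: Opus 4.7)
The plan is to observe that the cumulative reward along any path visiting $\MCSTATE{\OBSERVEFAIL}$ is zero, so that paths in $\MCREACH{\MCSTATE{\OBSERVEFAIL}}$ contribute nothing to the expected reward; the two sides of the claimed equality will then trivially agree.

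First, I would establish the set inclusion $\MCREACH{\MCSTATE{\OBSERVEFAIL}} \subseteq \MCREACH{\MCSTATE{\SINK}}$, which is immediate from the rule (observe-failed) in \autoref{fig:transition-function}: the unique successor of $\MCSTATE{\OBSERVEFAIL}$ is $\MCSTATE{\SINK}$. This yields the disjoint decomposition $\MCREACH{\MCSTATE{\SINK}} = \bigl(\MCREACH{\MCSTATE{\SINK}} \cap \neg \MCREACH{\MCSTATE{\OBSERVEFAIL}}\bigr) \dotcup \MCREACH{\MCSTATE{\OBSERVEFAIL}}$.

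The core step is to show $\MCREW(\pi) = 0$ for every $\pi \in \MCREACH{\MCSTATE{\OBSERVEFAIL}}$. Since the reward function $\MCREW$ is nonzero only on states of the form $\MCSTATE{\TERM,\sigma'}$, it suffices to prove that such states never co-occur with $\MCSTATE{\OBSERVEFAIL}$ on a single path. Inspecting the transition rules, the only successor of $\MCSTATE{\TERM,\sigma'}$ is $\MCSTATE{\SINK}$ (rule terminated), and $\MCSTATE{\SINK}$ is absorbing (rule sink); symmetrically, the only successor of $\MCSTATE{\OBSERVEFAIL}$ is $\MCSTATE{\SINK}$. Hence the only states reachable from either $\MCSTATE{\TERM,\sigma'}$ or $\MCSTATE{\OBSERVEFAIL}$ are that state itself and $\MCSTATE{\SINK}$, so no path can contain both a $\MCSTATE{\TERM,\sigma'}$-state and $\MCSTATE{\OBSERVEFAIL}$. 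In particular, every path in $\MCREACH{\MCSTATE{\OBSERVEFAIL}}$ has cumulative reward zero.

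Combining both, the sum defining $\ExpRew{\OPMC{C}{\sigma}{f}}{\MCREACH{\MCSTATE{\SINK}}}$ splits along the disjoint decomposition, with the contribution from $\MCREACH{\MCSTATE{\OBSERVEFAIL}}$ being zero, yielding the claimed equality. For the piecewise case where the probability of the target set is strictly less than one, I would argue that both sides take the value $\infty$ simultaneously: removing the $\MCSTATE{\OBSERVEFAIL}$ paths from $\MCREACH{\MCSTATE{\SINK}}$ does not change the probability mass of paths that truly diverge (i.e.\ never reach $\MCSTATE{\SINK}$ at all), so the $\infty$-condition is triggered in lockstep under the natural extension of $\ExpRew$ to the intersection event. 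The main substantive step is the structural path-reachability argument above, which crucially relies on the deterministic collapse of both $\MCSTATE{\TERM,\sigma'}$ and $\MCSTATE{\OBSERVEFAIL}$ into the absorbing sink; the remainder is straightforward bookkeeping.
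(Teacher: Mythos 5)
Your proof is correct and takes essentially the same route as the paper's: both split $\MCREACH{\MCSTATE{\SINK}}$ into the observation--violating and observation--respecting paths, note that the former carry zero reward because no path can visit both a $\MCSTATE{\TERM,\sigma'}$--state and $\MCSTATE{\OBSERVEFAIL}$, and dispose of the case $\MCPROB{\OPMC{C}{\sigma}{f}}{\MCREACH{\MCSTATE{\SINK}}} < 1$ by observing that both sides become $\infty$ simultaneously. Your structural justification of the zero--reward claim is merely a more explicit version of the paper's one--line assertion.
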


\begin{proof}
  Two cases arise. First, assume that the probability of reaching $\MCSTATE{\SINK}$ in $\OPMC{C}{\sigma}{f}$ is $1$. 
  Since no path $\pi \in \MCREACH{\MCSTATE{\SINK}} \cap \MCREACH{\MCSTATE{\OBSERVEFAIL}}$ contains a state of the form $\MCSTATE{\TERM,\sigma}$, we know by \autoref{def:operational}
  that $\MCREW(\pi) = 0$. Hence, 
  \begin{align*}
   \ExpRew{\OPMC{C}{\sigma}{f}}{\MCREACH{\MCSTATE{\SINK}} \cap \MCREACH{\MCSTATE{\OBSERVEFAIL}}} = 0 \tag{$\bigstar$}.
  \end{align*}
  Then
  
  \begin{align*}
        & \ExpRew{\OPMC{C}{\sigma}{f}}{\MCREACH{\MCSTATE{\SINK}}} \\
    ~=~ & \sum_{\pi \in \MCREACH{\MCSTATE{\SINK}}} \MCPROB{\OPMC{C}{\sigma}{f}}{\pi} \cdot \MCREW(\pi) \\
    ~=~ & \sum_{\pi \in \MCREACH{\MCSTATE{\SINK}} \cap \MCREACH{\MCSTATE{\OBSERVEFAIL}}} \MCPROB{\OPMC{C}{\sigma}{f}}{\pi} \cdot \MCREW(\pi) \\
        & + \sum_{\pi \in \MCREACH{\MCSTATE{\SINK}} \cap \neg \MCREACH{\MCSTATE{\OBSERVEFAIL}}} \MCPROB{\OPMC{C}{\sigma}{f}}{\pi} \cdot \MCREW(\pi) \\
    ~=~ & \ExpRew{\OPMC{C}{\sigma}{f}}{\MCREACH{\MCSTATE{\SINK}} \cap \MCREACH{\MCSTATE{\OBSERVEFAIL}}} + \ExpRew{\OPMC{C}{\sigma}{f}}{\MCREACH{\MCSTATE{\SINK}} \cap \neg \MCREACH{\MCSTATE{\OBSERVEFAIL}}} \\
    ~=~ & \ExpRew{\OPMC{C}{\sigma}{f}}{\MCREACH{\MCSTATE{\SINK}} \cap \neg \MCREACH{\MCSTATE{\OBSERVEFAIL}}}. \tag{by $(\bigstar)$}
  \end{align*}  
  
  For the second case assume the probability of reaching $\MCSTATE{\SINK}$ in $\OPMC{C}{\sigma}{f}$ is less than $1$. 
  Then $\MCREACH{\MCSTATE{\SINK} \cap \neg \MCREACH{\MCSTATE{\OBSERVEFAIL}}} ~\subseteq~ \MCREACH{\MCSTATE{\SINK}}$ implies  
  \begin{align*}
      \sum_{\pi \in \MCREACH{\MCSTATE{\SINK}} \cap \neg \MCREACH{\MCSTATE{\OBSERVEFAIL}}} \MCPROB{\OPMC{C}{\sigma}{f}}{\pi}
      ~\leq~
      \sum_{\pi \in \MCREACH{\MCSTATE{\SINK}}} \MCPROB{\OPMC{C}{\sigma}{f}}{\pi} < 1.
  \end{align*} 
  Thus
  \begin{align*}
    \ExpRew{\OPMC{C}{\sigma}{f}}{\MCREACH{\MCSTATE{\SINK}} \cap \neg \MCREACH{\MCSTATE{\OBSERVEFAIL}}} ~=~ \infty ~=~ \ExpRew{\OPMC{C}{\sigma}{f}}{\MCREACH{\MCSTATE{\SINK}}}.
  \end{align*}
\end{proof}

\begin{lemma} \label{thm:operational:wlp-mfps}
  Let $C \in \PProgs$, $f \in \E$, and $\sigma \in \Statestau$. Moreover, let $0 \leq f(\sigma') \leq 1$ for each $\sigma' \in \Statestau$. Then
  \begin{align*}
     \wlp{C}{f}(\sigma) ~=~ \ExpRew{\OPMC{C}{\sigma}{f}}{\MCREACH{\MCSTATE{\SINK}}} + \MCPROB{\OPMC{C}{\sigma}{f}}{\neg \MCREACH{\SINK}}.
  \end{align*} 
\end{lemma}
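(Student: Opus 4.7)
The plan is to prove the equality by structural induction on $C \in \PProgs$, closely following the pattern of the proof of Lemma \ref{thm:operational:soundness} for $\rtsymbol$. The conceptual content of the lemma is that the extra summand $\MCPROB{\OPMC{C}{\sigma}{f}}{\neg \MCREACH{\SINK}}$ captures the mass of diverging paths, which contribute $1$ (rather than $0$) to $\wlpsymbol$: since $\wlpsymbol$ is defined via a \emph{greatest} fixed point on $\BE$, diverging executions receive the top value $1$ in the liberal interpretation. For bounded $f$, this extra term is exactly what is needed to match the gfp semantics while the first summand captures the usual probability--weighted contribution of terminating paths.

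For the base cases, the MCs are small enough to inspect directly. For $\SKIP$, $\EMPTY$, and $\ASSIGN x E$, a single path reaches $\SINK$ with probability $1$ through a $\TERM$--state, so the divergence summand vanishes and the expected reward equals $f$ evaluated on the appropriate successor state, matching the $\wpsymbol$--like clauses of \autoref{table:transrules}. For $\ABORT$, no path reaches $\SINK$, giving expected reward $0$ and divergence probability $1$, which sums to $\wlp{\ABORT}{f}(\sigma) = 1$. For $\OBSERVE{B}$ the single path contributes $f(\sigma\subst{\tau}{\tau+1})$ if $B$ holds in $\sigma$ and $0$ via $\OBSERVEFAIL$ if $B$ fails, matching $[B] \cdot f$.

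The compositional cases would proceed by proving suitable decomposition lemmas analogous to Lemmas \ref{thm:operational:composition} and \ref{thm:operational:conditional}, but additionally tracking the divergence mass. For $\COMPOSE{C_1}{C_2}$, the divergence probability decomposes into the divergence probability of $C_1$ plus the expected divergence probability of $C_2$ from each intermediate state reachable after $C_1$'s successful termination, mirroring the expected--reward decomposition in \autoref{thm:operational:composition}. The conditional and probabilistic--choice cases then follow the same weighted--sum patterns as in the proof of Lemma \ref{thm:operational:soundness}, with the divergence summand distributing through the $[B]/[\neg B]$ split or the $p / (1-p)$ split, respectively.

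The main obstacle is the $\WHILESYM$--loop case, because $\wlp{\WHILE{B}{C'}}{f}$ is a \emph{greatest} fixed point of $G(Y) = [\neg B] \cdot f + [B] \cdot \wlp{C'}{Y}$, whereas the operational correspondence for $\rtsymbol$ in Lemmas \ref{thm:operational:while-approx}--\ref{thm:operational:while-eq} approximates from below using bounded unrollings whose timeout is $\HALT$ (value $0$). My plan is to use bounded unrollings $\BOUNDEDWHILE{k}{B}{C'}$ but with an $\ABORT$--style timeout, so that each $k$--step approximation equals $G^{k}(\boldsymbol{1})$ and these over--approximations decrease to $\wlp{\WHILE{B}{C'}}{f}$ by Kleene's theorem applied to the cocontinuous functional $G$ on $\BE$. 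Correspondingly, on the operational side, the $k$--th unrolling records the expected reward of paths reaching $\SINK$ within $k$ iterations plus the probability mass of paths not yet terminated at that depth; the first summand converges upward to $\ExpRew{\OPMC{\WHILE{B}{C'}}{\sigma}{f}}{\MCREACH{\SINK}}$ by monotone convergence, and the second decreases downward to $\MCPROB{\OPMC{\WHILE{B}{C'}}{\sigma}{f}}{\neg \MCREACH{\SINK}}$ by countable additivity of the path measure. Combining these two limits with the inductive hypothesis applied to each finite unrolling yields the desired equality for the loop.
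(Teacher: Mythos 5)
Your plan matches the paper's proof, which is given only as a pointer: structural induction on $C$ analogous to the soundness proof for $\rtsymbol$ (Lemma~\ref{thm:operational:soundness}), with details deferred to the cited work; your treatment of the loop case via the descending iterates $G^{k}(\boldsymbol{1})$ of the cocontinuous $\wlpsymbol$--functional, $\ABORT$--style timeouts, and continuity from above of the path measure is the right way to handle the one genuinely new difficulty. Two caveats worth recording: the first summand must be read as the partial sum $\sum_{\pi \in \MCREACH{\MCSTATE{\SINK}}} \MCPROB{\OPMC{C}{\sigma}{f}}{\pi} \cdot \MCREW(\pi)$ rather than via the paper's literal $\ExpRew{\cdot}{\cdot}$ definition (which defaults to $\infty$ whenever the probability of reaching $\MCSTATE{\SINK}$ is below $1$, e.g.\ for $\ABORT$), and you omit the $\HALT$ base case, where $\wlp{\HALT}{f} = \boldsymbol{1}$ but the operational side contributes $0$ because $\MCSTATE{\HALT, \sigma}$ moves to $\MCSTATE{\SINK}$ without passing a rewarded $\MCSTATE{\TERM, \cdot}$ state --- a discrepancy in the paper's own definitions that any detailed proof would have to confront.
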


\begin{proof}
  The proof is by structural induction on the syntactic structure of $\PProgs$ programs and works analogously to the proof of \autoref{thm:operational:soundness}.
  For a detailed proof, we refer to \cite{mfps}.
\end{proof}

\begin{lemma} \label{thm:operational:wlp-obervefail}
  Let $C \in \PProgs$, $f \in \E$, and $\sigma \in \Statestau$. Then
  \begin{align*}
     \wlp{C}{1}(\sigma) ~=~ \MCPROB{\OPMC{C}{\sigma}{f}}{\neg \MCREACH{\OBSERVEFAIL}}.
  \end{align*} 
\end{lemma}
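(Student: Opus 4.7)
The plan is to derive this identity as a direct corollary of Lemma~\ref{thm:operational:wlp-mfps}, specialized to $f = \boldsymbol{1}$, by analyzing the reward structure of the operational MC and performing some elementary set algebra on path sets.

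First, I would instantiate Lemma~\ref{thm:operational:wlp-mfps} with the constant expectation $\boldsymbol{1} \in \BE$ (which trivially satisfies $0 \leq \boldsymbol{1}(\sigma') \leq 1$), yielding
\begin{align*}
  \wlp{C}{\boldsymbol{1}}(\sigma) ~=~ \ExpRew{\OPMC{C}{\sigma}{\boldsymbol{1}}}{\MCREACH{\MCSTATE{\SINK}}} + \MCPROB{\OPMC{C}{\sigma}{\boldsymbol{1}}}{\neg \MCREACH{\SINK}}.
\end{align*}
Next, I would examine the reward function of $\OPMC{C}{\sigma}{\boldsymbol{1}}$. By Definition~\ref{def:operational}, a state carries nonzero reward only if it is of the form $\MCSTATE{\TERM,\sigma'}$, in which case the reward is $\boldsymbol{1}(\sigma') = 1$. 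Consequently, $\MCREW(\pi) = 1$ for every path $\pi \in \MCREACH{\MCSTATE{\SINK}}$ that passes through some successful termination state, and $\MCREW(\pi) = 0$ for every path in $\MCREACH{\MCSTATE{\SINK}}$ that instead reaches $\SINK$ via $\MCSTATE{\OBSERVEFAIL}$ (inspection of \autoref{fig:transition-function} shows these are the only two routes into $\SINK$). Hence
\begin{align*}
  \ExpRew{\OPMC{C}{\sigma}{\boldsymbol{1}}}{\MCREACH{\MCSTATE{\SINK}}} ~=~ \MCPROB{\OPMC{C}{\sigma}{\boldsymbol{1}}}{\MCREACH{\MCSTATE{\SINK}} \cap \neg \MCREACH{\MCSTATE{\OBSERVEFAIL}}}.
\end{align*}

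The final ingredient is the observation that $\MCREACH{\MCSTATE{\OBSERVEFAIL}} \subseteq \MCREACH{\MCSTATE{\SINK}}$, because the only outgoing transition from $\MCSTATE{\OBSERVEFAIL}$ goes to $\SINK$ (rule \textsf{[observe-failed]}). Contrapositively, $\neg\MCREACH{\MCSTATE{\SINK}} \subseteq \neg\MCREACH{\MCSTATE{\OBSERVEFAIL}}$, so that
\begin{align*}
   \neg \MCREACH{\MCSTATE{\OBSERVEFAIL}}
   ~=~ \bigl(\MCREACH{\MCSTATE{\SINK}} \cap \neg \MCREACH{\MCSTATE{\OBSERVEFAIL}}\bigr) ~\dotcup~ \neg \MCREACH{\MCSTATE{\SINK}},
\end{align*}
a disjoint union. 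Taking the probability on both sides and combining with the two displayed equalities above then yields exactly $\wlp{C}{\boldsymbol{1}}(\sigma) = \MCPROB{\OPMC{C}{\sigma}{\boldsymbol{1}}}{\neg \MCREACH{\MCSTATE{\OBSERVEFAIL}}}$, as required.

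I do not expect any serious obstacle here: the proof is essentially bookkeeping on top of Lemma~\ref{thm:operational:wlp-mfps}. The only mildly subtle point is verifying the structural claim $\MCREACH{\MCSTATE{\OBSERVEFAIL}} \subseteq \MCREACH{\MCSTATE{\SINK}}$ together with the reward characterization --- both follow immediately from the transition rules, but one must take care to note that the $\MCSTATE{\OBSERVEFAIL}$ state contributes $0$ reward and deterministically funnels into $\SINK$, so the ``mass that produces no reward but still reaches $\SINK$'' is precisely the mass of observation-violating paths.
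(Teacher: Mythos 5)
Your overall strategy---instantiating Lemma~\ref{thm:operational:wlp-mfps} at $f = \boldsymbol{1}$ and then rearranging path sets---is the same as the paper's, but your middle step contains a genuine gap: you assert that the only routes into $\MCSTATE{\SINK}$ are via a successful--termination state $\MCSTATE{\TERM,\sigma'}$ or via $\MCSTATE{\OBSERVEFAIL}$. This is false. The rule \textsf{[halt]} in \autoref{fig:transition-function} sends $\MCSTATE{\HALT,\sigma'}$ directly to $\MCSTATE{\SINK}$, bypassing both. A path that executes $\HALT$ therefore lies in $\MCREACH{\MCSTATE{\SINK}} \cap \neg\MCREACH{\MCSTATE{\OBSERVEFAIL}}$ yet carries reward $0$, so your claimed identity
\begin{align*}
  \ExpRew{\OPMC{C}{\sigma}{\boldsymbol{1}}}{\MCREACH{\MCSTATE{\SINK}}} ~=~ \MCPROB{\OPMC{C}{\sigma}{\boldsymbol{1}}}{\MCREACH{\MCSTATE{\SINK}} \cap \neg \MCREACH{\MCSTATE{\OBSERVEFAIL}}}
\end{align*}
fails for any program that reaches $\HALT$ with positive probability; already for $C = \HALT$ the left--hand side contributes no reward while the right--hand side equals $1$ (and note $\wlp{\HALT}{\boldsymbol{1}} = \boldsymbol{1}$, so the $\HALT$ mass must end up on the reward side of the ledger for the final equality to come out). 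The paper's own proof avoids this by decomposing $\neg\MCREACH{\MCSTATE{\OBSERVEFAIL}}$ into \emph{three} disjoint classes---paths reaching some $\MCSTATE{\TERM,\sigma'}$, paths reaching some $\MCSTATE{\HALT,\sigma'}$, and diverging paths---and carrying the $\MCSTATE{\HALT,\sigma'}$ summand explicitly until it is absorbed into the right--hand side of Lemma~\ref{thm:operational:wlp-mfps}. Your final disjoint--union step $\neg\MCREACH{\MCSTATE{\OBSERVEFAIL}} = \bigl(\MCREACH{\MCSTATE{\SINK}} \cap \neg\MCREACH{\MCSTATE{\OBSERVEFAIL}}\bigr) \dotcup \neg\MCREACH{\MCSTATE{\SINK}}$ is fine; what needs repair is only the identification of the expected reward with the probability of the non--violating $\SINK$--reaching paths, which must account for the third, reward--free route into $\MCSTATE{\SINK}$ through $\HALT$.
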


\begin{proof}
  We first observe that each state not visiting $\MCSTATE{\OBSERVEFAIL}$ either
  \begin{enumerate}
   \item visits a state of the form $\MCSTATE{\TERM, \sigma'}$ for some $\sigma' \in \Statestau$ and then immediately reaches $\MCSTATE{\SINK}$, or
   \item visits a state of the form $\MCSTATE{\HALT, \sigma'}$ for some $\sigma' \in \Statestau$ and then immediately reaches $\MCSTATE{\SINK}$, or
   \item diverges and never reaches $\MCSTATE{\SINK}$.
  \end{enumerate}

  Thus,
  \begin{align*}
         & \MCPROB{\OPMC{C}{\sigma}{f}}{\neg \MCREACH{\OBSERVEFAIL}} \\
     ~=~ & \MCPROB{\OPMC{C}{\sigma}{f}}{\MCREACH{\{ \MCSTATE{\TERM, \sigma'} ~|~ \sigma' \in \Statestau \}}}
           + \MCPROB{\OPMC{C}{\sigma}{f}}{\MCREACH{\{ \MCSTATE{\HALT, \sigma'} ~|~ \sigma' \in \Statestau \}}} \\
         & + \MCPROB{\OPMC{C}{\sigma}{f}}{\neg \MCREACH{\SINK}} \\
     ~=~ & \MCPROB{\OPMC{C}{\sigma}{\mathsf{1}}}{\MCREACH{\{ \MCSTATE{\TERM, \sigma'} ~|~ \sigma' \in \Statestau \}}}
           + \MCPROB{\OPMC{C}{\sigma}{\mathsf{1}}}{\MCREACH{\{ \MCSTATE{\HALT, \sigma'} ~|~ \sigma' \in \Statestau \}}} \\
         & + \MCPROB{\OPMC{C}{\sigma}{\mathsf{1}}}{\neg \MCREACH{\SINK}} \\
     ~=~ & \ExpRew{\OPMC{C}{\sigma}{1}}{\MCREACH{\MCSTATE{\SINK}}} + \MCPROB{\OPMC{C}{\sigma}{\mathsf{1}}}{\neg \MCREACH{\SINK}} \tag{\autoref{thm:operational:sink-no-fail} using $(\bigstar)$} \\
     ~=~ & \wlp{C}{1}(\sigma) \tag{\autoref{thm:operational:wlp-mfps}}
  \end{align*}

\end{proof}

\begingroup
\def\thetheorem{\ref{thm:operational:correspondence}}
\begin{theorem}
 Let $C \in \PProgs$, $f \in \E$, and $\sigma \in \Statestau$. Then 
 \begin{align*}
   \CondExpRew{\OPMC{C}{\sigma}{f}}{\MCREACH{\MCSTATE{\SINK}}}{\neg \MCREACH{\MCSTATE{\OBSERVEFAIL}}} ~=~ \frac{\rt{C}{f}(\sigma)}{\wlp{C}{1}(\sigma)}.
 \end{align*}
\end{theorem}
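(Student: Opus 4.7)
The plan is to chain together the three auxiliary lemmas (operational soundness of $\rtsymbol$, the ``observe--fail has zero reward'' lemma, and the characterization of $\wlpsymbol$ on $\boldsymbol{1}$) to match exactly the definition of the conditional expected reward. Concretely, unfolding the definition gives
\begin{align*}
\CondExpRew{\OPMC{C}{\sigma}{f}}{\MCREACH{\MCSTATE{\SINK}}}{\neg \MCREACH{\MCSTATE{\OBSERVEFAIL}}}
~=~
\frac{\ExpRew{\OPMC{C}{\sigma}{f}}{\MCREACH{\MCSTATE{\SINK}} \cap \neg \MCREACH{\MCSTATE{\OBSERVEFAIL}}}}{\MCPROB{\OPMC{C}{\sigma}{f}}{\neg \MCREACH{\MCSTATE{\OBSERVEFAIL}}}},
\end{align*}
so it suffices to treat numerator and denominator separately.

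For the numerator, I will invoke Lemma~\ref{thm:operational:sink-no-fail}, which says $\ExpRew{\OPMC{C}{\sigma}{f}}{\MCREACH{\MCSTATE{\SINK}} \cap \neg \MCREACH{\MCSTATE{\OBSERVEFAIL}}}$ coincides with $\ExpRew{\OPMC{C}{\sigma}{f}}{\MCREACH{\MCSTATE{\SINK}}}$ (intuitively, paths that reach $\MCSTATE{\SINK}$ via $\MCSTATE{\OBSERVEFAIL}$ never pass through a $\MCSTATE{\TERM, \sigma'}$ state and hence carry zero reward). Then Lemma~\ref{thm:operational:soundness} rewrites the unconditioned expected reward as $\rt{C}{f}(\sigma)$, giving the required numerator.

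For the denominator, I will apply Lemma~\ref{thm:operational:wlp-obervefail}, which yields $\MCPROB{\OPMC{C}{\sigma}{f}}{\neg \MCREACH{\MCSTATE{\OBSERVEFAIL}}} = \wlp{C}{\boldsymbol{1}}(\sigma)$. Combining this with the expression for the numerator completes the proof, provided $\wlp{C}{\boldsymbol{1}}(\sigma) > 0$; in the degenerate case $\wlp{C}{\boldsymbol{1}}(\sigma) = 0$ both sides evaluate to $0$ under the convention $\tfrac{0}{0} = 0$ (on the right, $\rt{C}{f}(\sigma)$ is then also $0$ since all terminating paths have zero probability, and on the left the conditional expected reward is defined to be $0$).

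The theorem itself is then one line of chained rewriting, so the only nontrivial work has already been done in the auxiliary lemmas. The main obstacle is not this theorem but rather Lemma~\ref{thm:operational:soundness}, whose proof requires structural induction on $\PProgs$; the delicate case is the while--loop, which needs the bounded--unrolling equivalence (Lemma~\ref{thm:operational:while-approx}) together with the two--sided path--lifting argument of Lemmas~\ref{thm:operational:while-leq} and~\ref{thm:operational:while-geq} to match the least fixed point of $\rtsymbol$ with the supremum of expected rewards over finite unrollings. Once those are in place, the present theorem is an immediate corollary.
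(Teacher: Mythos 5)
Your proposal is correct and follows exactly the paper's own proof: unfold the definition of $\mathnormal{\mathsf{CExpRew}}$, rewrite the numerator via Lemma~\ref{thm:operational:sink-no-fail} and Lemma~\ref{thm:operational:soundness}, and rewrite the denominator via Lemma~\ref{thm:operational:wlp-obervefail}. Your additional remark on the degenerate case $\wlp{C}{\boldsymbol{1}}(\sigma) = 0$ is harmless extra care that the paper leaves to the $\tfrac{0}{0}=0$ convention.
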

\addtocounter{theorem}{-1}
\endgroup

\begin{proof}
  \begin{align*}
         & \CondExpRew{\OPMC{C}{\sigma}{f}}{\MCREACH{\MCSTATE{\SINK}}}{\neg \MCREACH{\MCSTATE{\OBSERVEFAIL}}} \\[4ex]
     ~=~ & \frac{\ExpRew{\OPMC{C}{\sigma}{f}}{\MCREACH{\MCSTATE{\SINK}} \cap \neg \MCREACH{\MCSTATE{\OBSERVEFAIL}}}}{\MCPROB{\OPMC{C}{\sigma}{f}}{\neg \MCREACH{\MCSTATE{\OBSERVEFAIL}}}}
           \tag{Def. $\textnormal{\textsf{CExpRew}}$} \\[4ex]
    ~=~  & \frac{\ExpRew{\OPMC{C}{\sigma}{f}}{\MCREACH{\MCSTATE{\SINK}}}}{\MCPROB{\OPMC{C}{\sigma}{f}}{\neg \MCREACH{\MCSTATE{\OBSERVEFAIL}}}} \tag{\autoref{thm:operational:sink-no-fail}} \\[4ex]
    ~=~  & \frac{\rt{C}{f}(\sigma)}{\MCPROB{\OPMC{C}{\sigma}{f}}{\neg \MCREACH{\MCSTATE{\OBSERVEFAIL}}}} \tag{\autoref{thm:operational:soundness}} \\[4ex]
    ~=~  & \frac{\rt{C}{f}(\sigma)}{\wlp{C}{1}(\sigma)} \tag{\autoref{thm:operational:wlp-obervefail}}.
  \end{align*}
\end{proof}

\subsection{Continuity of $\rtsymbol$} \label{app:runningtime-continuity}

\begin{lemma}
 For each program $C \in \PProgs$ and every $\omega$--chain $f_1 \preceq f_2 \preceq \ldots$, we have $\rt{C}{\sup_{n} f_n} ~=~ \sup_{n} \rt{C}{f_n}$.
\end{lemma}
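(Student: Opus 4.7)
The plan is to prove the continuity statement by structural induction on $C$, following the standard recipe for continuity of predicate transformers but being careful about the loop case, which will be the main obstacle.

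For the base cases ($\SKIP$, $\EMPTY$, $\ABORT$, $\HALT$, $\ASSIGN{x}{E}$, $\OBSERVE{B}$), the transformer $\rtsymbol[\,\cdot\,]$ is either constant, the identity up to a syntactic substitution, or multiplication by an indicator $[B]$. In every case the operation is a pointwise continuous function on $\Etau$ and thus trivially commutes with suprema of $\omega$-chains, because $(\sup_n f_n)\subst{x}{E}(\sigma) = \sup_n f_n(\sigma[x \mapsto \sigma(E)]) = \sup_n (f_n\subst{x}{E})(\sigma)$ and indicator multiplication is pointwise multiplication by $0$ or $1$. Continuity is preserved under composition with these operations.

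For the inductive cases for $\COMPOSE{C_1}{C_2}$, $\ITE{B}{C_1}{C_2}$, and $\PCHOICE{C_1}{p}{C_2}$, I would simply apply the induction hypothesis and use that finite sums and nonnegative scalar multiplications in $\PosRealsInf$ preserve suprema of $\omega$-chains. For instance, for the sequential composition, using the IH on $C_2$ first and then on $C_1$,
\begin{align*}
\rt{\COMPOSE{C_1}{C_2}}{\sup_n f_n} ~=~ \rt{C_1}{\rt{C_2}{\sup_n f_n}} ~=~ \rt{C_1}{\sup_n \rt{C_2}{f_n}} ~=~ \sup_n \rt{\COMPOSE{C_1}{C_2}}{f_n}.
\end{align*}
The conditional and probabilistic choice cases are analogous, relying on the fact that $[B]\cdot(\sup_n h_n) = \sup_n [B]\cdot h_n$ and $(\sup_n h_n)\subst{\tau}{\tau+1} = \sup_n (h_n \subst{\tau}{\tau+1})$.

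The interesting case is $C = \WHILE{B}{C'}$. Define $\Phi_f(X) = \bigl([\neg B]\cdot f + [B] \cdot \rt{C'}{X}\bigr)\subst{\tau}{\tau+1}$, so that $\rt{C}{f} = \lfp \Phi_f$. By the IH, $\rt{C'}{\,\cdot\,}$ is continuous, and so $\Phi_f$ is continuous in both its arguments; moreover $\Phi_f(X)$ is monotone in $f$. By Kleene's Fixed Point Theorem, $\lfp \Phi_f = \sup_k \Phi_f^k(\boldsymbol{0})$. A straightforward induction on $k$ using the IH for $C'$ and continuity of the arithmetic operations shows that each iterate $\Phi^k(\boldsymbol{0})$ is continuous in its $f$-parameter, i.e.\ $\Phi_{\sup_n f_n}^k(\boldsymbol{0}) = \sup_n \Phi_{f_n}^k(\boldsymbol{0})$. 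The main obstacle is then to justify the swap of suprema, which is permitted because the double-indexed family $\Phi_{f_n}^k(\boldsymbol{0})$ is monotone nondecreasing in both $k$ and $n$ (both $\Phi_f^k$ and $\Phi^k_{(\,\cdot\,)}(\boldsymbol{0})$ are monotone, and the chain $(f_n)_n$ is ascending). Hence
\begin{align*}
\rt{C}{\sup_n f_n} ~=~ \sup_k \Phi_{\sup_n f_n}^k(\boldsymbol{0}) ~=~ \sup_k \sup_n \Phi_{f_n}^k(\boldsymbol{0}) ~=~ \sup_n \sup_k \Phi_{f_n}^k(\boldsymbol{0}) ~=~ \sup_n \rt{C}{f_n},
\end{align*}
which closes the induction.
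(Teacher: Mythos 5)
Your proof is correct and follows essentially the same route as the paper's: structural induction on $C$, with the atomic and composite cases handled by commuting suprema with substitution, indicator multiplication, and convex/affine combinations, and the loop case handled via Kleene iteration of the characteristic functional followed by an exchange of the two suprema. If anything, your justification of the exchange (joint monotonicity of $\Phi_{f_n}^k(\boldsymbol{0})$ in $k$ and $n$) and your explicit induction on $k$ for $\Phi^k_{\sup_n f_n}(\boldsymbol{0}) = \sup_n \Phi^k_{f_n}(\boldsymbol{0})$ spell out details the paper compresses into its ``Fact 1'' and ``Fact 2''.
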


\begin{proof}
 Let $f_1 \preceq f_2 \preceq \ldots$ be an $\omega$--chain. We show
 \begin{align*}
    \rt{C}{\sup_{n} f_n} = \sup_{n} \rt{C}{f_n}
 \end{align*}
 by induction on the structure of $C$.
 The makes of of the well known fact that continuous functions are closed under 
substitution\footnote{cf. Theorem 1.3 in D. Scott. "Data types as lattices." 
SIAM Journal on computing 5.3 (1976): 522-587.}.
 
  \paragraph{The Effectless Program $C = \SKIP$}
  \begin{align*}
       & \rt{\SKIP}{\sup_n f_n} \\
   ~=~ & \left( \sup_n f_n \right) \subst{\tau}{\tau+1} \tag{Def. $\rtsymbol$} \\
   ~=~ & \sup_n \left( f_n\subst{\tau}{\tau+1} \right) \\
   ~=~ & \sup_n \rt{\SKIP}{f_n}.
  \end{align*}

  \paragraph{The Empty Program $C = \EMPTY$}
  \begin{align*}
       & \rt{\EMPTY}{\sup_n f_n} \\
   ~=~ & \sup_n f_n \tag{Def. $\rtsymbol$} \\
   ~=~ & \sup_n \rt{\EMPTY}{f_n}. \tag{Def. $\rtsymbol$}
  \end{align*}
  
  \paragraph{The Assignment $C = \ASSIGN{x}{E}$}
  \begin{align*}
       & \rt{\ASSIGN{x}{E}}{\sup_n f_n} \\
   ~=~ & (\sup_n f_n)[x/E, \tau / \tau+1] \tag{Def. $\rtsymbol$} \\
   ~=~ & \left( \sup_n f_n\subst{x}{E} \right)\subst{\tau}{\tau+1} \\
   ~=~ & \sup_n \left( f_n[x/E, \tau / \tau+1] \right) \\
   ~=~ & \sup_n \rt{\ASSIGN{x}{E}}{f_n} \tag{Def. $\rtsymbol$} \\
  \end{align*}

  \paragraph{The Diverging Program $C = \ABORT$}
  \begin{align*}
       & \rt{\ABORT}{\sup_n f_n} \\
   ~=~ & \boldsymbol{\infty} \tag{Def. $\rtsymbol$} \\
   ~=~ & \sup_n \boldsymbol{\infty} \\
   ~=~ & \sup_n \rt{\ABORT}{f_n} \tag{Def. $\rtsymbol$}
  \end{align*}

  \paragraph{The Erroneous Program $C = \HALT$}
  \begin{align*}
       & \rt{\HALT}{\sup_n f_n} \\
   ~=~ & \boldsymbol{0} \tag{Def. $\rtsymbol$} \\
   ~=~ & \sup_n \boldsymbol{0} \\
   ~=~ & \sup_n \rt{\HALT}{f_n} \tag{Def. $\rtsymbol$}
  \end{align*}

  \paragraph{The Observation $C = \OBSERVE{B}$}
  \begin{align*}
       & \rt{\OBSERVE{B}}{\sup_n f_n} \\
   ~=~ & [B] \cdot \left( \sup_n f_n \right)\subst{\tau}{\tau+1} \tag{Def. $\rtsymbol$} \\
   ~=~ & [B] \cdot \left( \sup_n f_n\subst{\tau}{\tau+1} \right) \\
   ~=~ & \sup_n \left( [B] \cdot f_n\subst{\tau}{\tau+1} \right) \\
   ~=~ & \sup_n \rt{\OBSERVE{B}}{f_n}. \tag{Def. $\rtsymbol$}
  \end{align*}

  \paragraph{The Sequential Composition $C = \COMPOSE{C_1}{C_2}$}
  \begin{align*}
       & \rt{\COMPOSE{C_1}{C_2}}{\sup_n f_n} \\
   ~=~ & \rt{C_1}{\rt{C_2}{\sup_n f_n}} \tag{Def. $\rtsymbol$} \\
   ~=~ & \rt{C_1}{\sup_n \rt{C_2}{f_n}} \tag{I.H. on $C_2$} \\
   ~=~ & \sup_n \rt{C_1}{\rt{C_2}{f_n}} \tag{I.H. on $C_1$} \\
   ~=~ & \sup_n \rt{\COMPOSE{C_1}{C_2}}{f_n}.
  \end{align*}

  \paragraph{The Conditional $C = \ITE{B}{C_1}{C_2}$}
   The proof relies on a Monotone Sequence Theorem (MCT) which states that if a sequence $f_1 \preceq f_2 \preceq \ldots$ is a monotonic sequence in $\PosRealsInf$ then $\sup_n f_n = \lim_{n\to\infty} f_n$.
  \begin{align*}
       & \rt{\ITE{B}{C_1}{C_2}}{\sup_n f_n} \\
   ~=~ & \left([B] \cdot \rt{C_1}{\sup_n f_n} + [\neg B] \cdot \rt{C_2}{\sup_n f_n} \right)\subst{\tau}{\tau+1} \tag{Def. $\rtsymbol$} \\
   ~=~ & [B] \cdot \rt{C_1}{\sup_n f_n}\subst{\tau}{\tau+1} + [\neg B] \cdot \rt{C_2}{\sup_n f_n}\subst{\tau}{\tau+1} \\
   ~=~ & [B] \cdot \sup_n \left( \rt{C_1}{f_n}\subst{\tau}{\tau+1} \right) + [\neg B] \cdot \sup_n \left( \rt{C_2}{f_n}\subst{\tau}{\tau+1} \right) \tag{I.H.} \\
   ~=~ & [B] \cdot \lim_{n\to\infty} \left( \rt{C_1}{f_n}\subst{\tau}{\tau+1} \right) + [\neg B] \cdot \lim_{n\to\infty} \left( \rt{C_2}{f_n}\subst{\tau}{\tau+1} \right) \tag{MCT} \\
   ~=~ & \lim_{n\to\infty} \left( [B] \cdot \rt{C_1}{f_n}\subst{\tau}{\tau+1} + [\neg B] \cdot \rt{C_2}{f_n}\subst{\tau}{\tau+1} \right) \\
   ~=~ & \sup_{n} \left( [B] \cdot \rt{C_1}{f_n}\subst{\tau}{\tau+1} + [\neg B] \cdot \rt{C_2}{f_n}\subst{\tau}{\tau+1} \right) \tag{MCT} \\
   ~=~ & \sup_{n} \left( [B] \cdot \rt{C_1}{f_n} + [\neg B] \cdot \rt{C_2}{f_n} \right)\subst{\tau}{\tau+1} \\
   ~=~ & \sup_n \rt{\ITE{B}{C_1}{C_2}}{f_n} \tag{Def. $\rtsymbol$} \\
  \end{align*}

  \paragraph{The Probabilistic Choice $C = \PCHOICE{C_1}{p}{C_2}$}
  The proof relies on a Monotone Sequence Theorem (MCT) which states that if a sequence $f_1 \preceq f_2 \preceq \ldots$ is a monotonic sequence in $\PosRealsInf$ then $\sup_n f_n = \lim_{n\to\infty} f_n$.
  \begin{align*}
       & \rt{\PCHOICE{C_1}{p}{C_2}}{\sup_n f_n} \\
   ~=~ & \left( p \cdot \rt{C_1}{\sup_n f_n} + (1-p) \cdot \rt{C_2}{\sup_n f_n} \right)\subst{\tau}{\tau+1} \tag{Def. $\rtsymbol$} \\
   ~=~ & p \cdot \rt{C_1}{\sup_n f_n}\subst{\tau}{\tau+1} + (1-p) \cdot \rt{C_2}{\sup_n f_n}\subst{\tau}{\tau+1} \\
   ~=~ &  p \cdot \sup_n \left(\rt{C_1}{f_n}\subst{\tau}{\tau+1}\right) + (1-p) \cdot \sup_n \left(\rt{C_2}{f_n}\subst{\tau}{\tau+1}\right) \tag{I.H.} \\
   ~=~ &  p \cdot \lim_{n\to\infty} \left(\rt{C_1}{f_n}\subst{\tau}{\tau+1}\right) + (1-p) \cdot \lim_{n\to\infty} \left(\rt{C_2}{f_n}\subst{\tau}{\tau+1}\right) \tag{MCT} \\
   ~=~ & \lim_{n\to\infty} p \cdot \left(\rt{C_1}{f_n}\subst{\tau}{\tau+1}\right) + (1-p) \cdot \left(\rt{C_2}{f_n}\subst{\tau}{\tau+1}\right) \\
   ~=~ & \sup_n p \cdot \left(\rt{C_1}{f_n}\subst{\tau}{\tau+1}\right) + (1-p) \cdot \left(\rt{C_2}{f_n}\subst{\tau}{\tau+1}\right) \tag{MCT} \\
   ~=~ & \sup_n \left( p \cdot \rt{C_1}{f_n} + (1-p) \cdot \rt{C_2}{f_n} \right)subst{\tau}{\tau+1} \\
   ~=~ & \sup_n \rt{\PCHOICE{C_1}{p}{C_2}}{f_n}. \tag{Def. $\rtsymbol$} \\
  \end{align*}

  \paragraph{The Loop $C = \WHILE{B}{C'}$}
  Let
  \begin{align*}
     F_{f}(X) ~=~ \left( [\neg B] \cdot f + [B] \cdot \rt{C'}{X} \right)\subst{\tau}{\tau+1}.
  \end{align*}
  We make use of two facts concerning continuous functions. Fact 1 states that $F_{\sup_n f_n}(X) = \sup_n F_{f_n}(X)$ and follows from a straightforward reasoning. Fact 2 states that $\sup_n F_{f_n}$ is continuous, because $F_{f_1} \preceq F_{f_2} \preceq \ldots$ is an $\omega$--chain of continuous transformers (note that by I.H. $\rt{C'}{f}$ is continuous). Then
  \begin{align*}
       & \rt{\WHILE{B}{C'}}{\sup_n f_n} \\
   ~=~ & \lfp X ~.~ \left( F_{\sup_n f_n}(X) \right) \tag{Def. $\rtsymbol$} \\
   ~=~ & \sup_{k \in \Nats} F^{k}_{\sup_n f_n}(\boldsymbol{0}) \tag{Tarski's FP-Theorem} \\
   ~=~ & \sup_{k \in \Nats} \left( \sup_n F^{k}_{f_n}\right)(\boldsymbol{0}) \tag{Fact 1} \\
   ~=~ & \sup_{k \in \Nats} \sup_n F_{f_n}^{k}(\boldsymbol{0}) \\
   ~=~ & \sup_n \left( \sup_{k \in \Nats} F_{f_n}^{k}(\boldsymbol{0}) \right) \tag{Fact 2} \\
   ~=~ & \sup_n \rt{\WHILE{B}{C'}}{f_n}. \tag{Def. $\rtsymbol$} \\
  \end{align*}
\end{proof}

\end{document}